\newtheorem{remark}{Remark}
\newtheorem{theorem}{Theorem}
\newtheorem{lemma}{Lemma}
\newtheorem{corollary}{Corollary}
\newtheorem{assumption}{Assumption}
\begin{document}

\title{
Topology Inference for Network Systems: Causality Perspective and Non-asymptotic Performance}
\author{Yushan Li, Jianping He, Cailian Chen and Xinping Guan
\thanks{The authors are with Dept. of Automation, Shanghai Jiao Tong University, Key Laboratory of System Control and Information Processing, Ministry of Education of China, and Shanghai Engineering Research Center of Intelligent Control and Management, Shanghai, Chin. E-mail address: \{yushan\_li, jphe, cailianchen, xpguan\}@sjtu.edu.cn. 
Preliminary result of this paper was presented at the 60th IEEE Conference on Decision and Control, 2021 \cite{lys-cdc2021}. 
}%
}

\maketitle

\begin{abstract}
Topology inference for network systems (NSs) plays a crucial role in many areas. 
This paper advocates a causality-based method based on noisy observations from a single trajectory of a NS, which is represented by the state-space model with general directed topology. 
Specifically, we first prove its close relationships with the ideal Granger estimator for multiple trajectories and the traditional ordinary least squares (OLS) estimator for a single trajectory. 
Along with this line, we analyze the non-asymptotic inference performance of the proposed method by taking the OLS estimator as a reference, covering both asymptotically and marginally stable systems. 
The derived convergence rates and accuracy results suggest the proposed method has better performance in addressing potentially correlated observations and achieves zero inference error asymptotically.  
Besides, an online/recursive version of our method is established for efficient computation or time-varying cases. 
Extensions on NSs with nonlinear dynamics are also discussed. 
Comprehensive tests corroborate the theoretical findings and comparisons with other algorithms highlight the superiority of the proposed method. 
\end{abstract}

\begin{IEEEkeywords}
Topology inference, network systems, causality and correlation modeling, non-asymptotic analysis.
\end{IEEEkeywords}

\IEEEpeerreviewmaketitle

\section{Introduction}

Network systems (NSs) are characterized by the locality of information exchange between individual nodes (described by a topology) \cite{olfati2007consensus}, and the cooperative capability to solve a common task \cite{nokleby2018stochastic}. 
Inferring the interaction topology structure from observations over the system emerges in various applications in last decades, including social networks \cite{ahmed2009recovering}, brain connectivity patterns \cite{monti2014estimating} and multi-robot formation \cite{liu2019dynamic}, to name a few. 
As topology inference helps better understand the systems and implement coordinated tasks, it brings significant benefits for numerous applications of NSs. 
For instance, tracing the information flow over a social network \cite{mahdizadehaghdam2016information}, group testing and identification of defective items \cite{cheraghchi2012graph}, or anomaly detection in communications networks \cite{mardani2013dynamic}.

Mathematically, topology inference can be regarded as a typical inverse modeling problem. 
In the literature, a large body of research has been developed to tackle the problem due to their massive employment \cite{giannakis2018topology}. 
For example, plenty of researchers have considered using graphical models to describe the relationships between different variables, and utilize graph signal processing (GSP) techniques to infer the underlying undirected topology, e.g., see \cite{segarra2017network,egilmez2017graph,zhu2020network}. 
The main idea is to find the most suitable eigenvalues and eigenvectors from the sample correlation matrix and reconstruct the topology. 
Considering the node causality (directionality), \cite{granger1969investigating,brovelli2004beta,santos2019local} utilize Granger estimator to capture the casual relationships between agents. 
Other typical alternatives like structural equation model \cite{baingana2017tracking,ioannidis2019semi} and vector autoregressive analysis \cite{geiger2015causal,zaman2021online} are also developed to deal with the directed topology cases. 
\cite{weerts2018identifiability,hendrickx2019identifiability,vanwaarde2021topology,cheng2022allocation} focus on inferring the topology between different system modules with dynamics (characterized by rational transfer functions), which is a highly abstract representation compared with the common state-space model.

\textit{\textbf{Motivations:}} 
Despite the prominent contributions of the pioneering works, there still remain some notable issues when we focus on the topology inference of NSs. 
First, the GSP-based methods mainly focus on revealing the correlation between the nodes to explain the data regularity (e.g., estimating the inverse covariance matrix \cite{friedman2008sparse}), which is usually characterized by an undirected graph. 
The formulation of the graphical models has less taken the system evolution with time into account, and the observations for inference are generally assumed to be independently sampled from certain random distributions (see \cite{mateos2019connecting,dong2019learning} for a review). 
Therefore, the majority of GSP methods cannot interpret the generally directed topology of NSs that characterize the time causality between nodes. 
Second, most existing directed topology inference methods potentially rely on some prior assumptions about the system model and stability. 
For instance, the Granger estimator is based on the observations from \textit{multiple trajectories}\footnote{A single trajectory refers to the collected sequential observations (outputs) of the system, by starting the system and letting it evolve with time. 
Correspondingly, multiple trajectories are obtained by restarting and running the system from the same initial state multiple times.}, not appropriate for \textit{single trajectory} cases. 
The structural equation model captures the contemporaneous causal dependencies among nodes, without considering the general time-lagged influences. 
Third, the former two issues are promising to be effectively handled by vector autoregressive methods. 
However, related works usually neglect the influence brought by observation noises. 
They are mostly devoted to the effective algorithm designs, and lack the non-asymptotic analysis of the inference performance about the observation number.

Motivated by the above issues, this paper focuses on the directed topology inference of NSs in state-space representation, where the observations are corrupted by noises. 
Specifically, we aim to reveal the relationships between the basic inference principles using observations from multiple and single trajectory of NSs, respectively. 
Meanwhile, we seek to derive the non-asymptotic convergence rate and accuracy of the inference methods about the observation number. 
The challenge of our work is two-fold. 
On the one hand, only noisy observations over the system evolution are available, incurring latent correlation on each consecutive pair in causality modeling. 
On the other hand, the observations contain noise accumulation that is determined by different system stability, making it hard to directly characterize the inference performance.

\textit{\textbf{Contributions:}} 
Preliminary results about the relationships of different inference methods in asymptotically stable cases have appeared in \cite{lys-cdc2021}. 
This paper extends the analysis by i) investigating the mutual relationships covering different system stability, ii) characterizing the non-asymptotic performance of the proposed estimator, and ii) providing extensions to more complicated topology cases. 
The main contributions are summarized as follows.

\begin{itemize}
\item This work contributes to the existing body of research by revealing relationships between the inference principles using single and multiple trajectories. 
Accordingly, we propose a causality-based method to effectively infer the topology from highly correlated and noisy observations in a single trajectory, applying to both asymptotically and marginal stable NSs. 

\item Towards the Granger and proposed estimator, we prove their equivalence conditions in asymptotically stable system cases, and clarify their deviation in sample matrices in marginally stable cases. 
Then, the convergence rate of the proposed estimator in both cases is derived, which eliminates the bounded inference error by OLS estimator and achieves zero error in the asymptotic sense.

\item We prove that the proposed estimator is essentially a de-regularization version of the OLS estimator. 
Following this implication, we provide the online/recursive form of the proposed estimator, which can be applied to time-varying topology cases. 
Finally, extensive simulations verify the theoretical findings, and comprehensive comparisons with state-of-the-art algorithms corroborate the superior performance. 

\end{itemize}


\textit{\textbf{Organizations:}} 
The remainder of this paper is organized as follows. 
Section \ref{section-related work} presents related literatures. 
Section \ref{preliminary} gives basic preliminaries and describes the problem of interest. 
The inference methods along with their relationships are presented in Section \ref{main-results}, 
The convergence rate and accuracy of our method are analyzed in Section \ref{section:Non-asymptotic}. 
Section \ref{section:extension} discusses some extensions to more complicated cases. 
Simulation results are shown in Section \ref{simulation},
followed by the concluding remarks and further research issues in Section \ref{conclusion}.

\section{Related Work}\label{section-related work}

\textit{Static topology inference in linear NSs}. 
The static topology in linear NSs is the most investigated type.   
In \cite{etesami2017measuring}, the authors consider the casual dynamics model and focus on learning the causal relationships by means of functional dependencies. 
Optimization algorithms are designed in \cite{dong2016learning,pasdeloup2018characterization} to infer the graph Laplacian matrix of the network from the nodal observations, by considering the stationary signals are smoothly evolving. 
\cite{hayden2016network} investigates the identifiability conditions for unknown dynamical networks from output second-order statistics, where the network is driven by stochastic inputs. 
In relation to the inference of networks from consensus dynamics, topology is reconstructed by measuring the power spectral density of the network response to input noises \cite{shahrampour2014topology}. 
Aiming at the adaptive diffusion process of the network, the correlation methods are proposed to achieve the progressive approximation over partially observed networks \cite{matta2018consistent,santos2019local}. 
For large-scale networks, many works are developed to reconstruct a sparse topology from limited observations via compressed sensing \cite{timme2007revealing,bolstad2011causal,hayden2016sparse,wai2019joint}, especially when the node number is much more than the available observations. 
The problem is usually transformed into a $L_1$ norm optimization problem (also known Lasso problem).

\textit{Methods on time-varying and nonlinear cases.} 
In many applications, the observations entail a time-varying graph and static graph inference methods will fail to capture the dynamic characteristic. 
Since multiple time intervals are involved, the time-varying topology inference is commonly transformed into a sequential optimization problem with multiple topology variables and the switching point detection, e.g., see \cite{peel2015detecting,baingana2016tracking,hallac2017network}. 
Besides, for NSs with nonlinear system dynamics, kernel-based methods are widely investigated \cite{shen2017kernel,romero2017kernelbased,wang2018inferring}. 
The key idea is to select appropriate kernel basis functions to approximate the nonlinear dynamics, and thus the selection of kernels critically affects the performance. 
In general, most of the above works focus on specific algorithm design, and do not consider the observations can also be corrupted by independent noises. 
In addition, it is less noticed to investigate the non-asymptotic performance of the inference methods in terms of the convergence rates and accuracy.

\textit{Differences with system identification.} 
It is worth noting some commonalities and fundamental differences between our problem and traditional system identification. 
Intuitively, the non-asymptotic analysis manner on the inference deviation can be likewise, where they both exhibit as the matrix computation and norm scaling. 
Nevertheless, the system identification usually aims to identify the system's Markov parameters from known input-output pairs \cite{ljung1998system}. 
In this paper, we only have access to noise-corrupted outputs (observations) and need to reconstruct the topology matrix from the consecutive observation pairs, which are undesirably correlated. 
Besides, various system stability will incur distinct non-asymptotic inference performance. 
Most related works only consider one type stability in single trajectory (e.g., asymptotically stable system in \cite{oymak2019non,kowshik2021streaming}) or broad stability but requiring multiple trajectories (e.g., see \cite{sun2020finite,zheng2021nonasymptotica}). 
The analysis for the proposed estimator covers both asymptotic and marginal stability using only a single trajectory, and needs devoted efforts to tackle the troublesome noise accumulation effects.

\section{Preliminaries and Problem Formulation}\label{preliminary}
\subsection{Graph Basics and Notations}
Let $\mathcal{G}=(\mathcal{V},\mathcal{E})$ be a directed graph that models the network system, where $\mathcal{V}=\{1, \cdots, n\}$ is the finite set of nodes and $\mathcal{E}\subseteq \mathcal{V}\times \mathcal{V}$ is the set of interaction edges.
An edge $(i,j)\in \mathcal{E}$ indicates that $i$ will use information from $j$.
The adjacency matrix $A=[a_{ij}]_{n \times n}$ of $\mathcal{G}$ is defined such that ${a}_{ij}\!>\!0$ if $(i,j)$ exists, and ${a}_{ij}\!=\!0$ otherwise.
Denote ${\mathcal{N}_i}=\{j\in \mathcal{V}:a_{ij}>0\}$ as the in-neighbor set of $i$, and $d_i=\left| {\mathcal{N}_i} \right |$ as its in-degree.

Throughout this paper, the set variable, vector, and matrix are expressed in Euclid, lowercase, and uppercase fonts, respectively. 
Let $\bm{0}$ ($\bm{1}$) be all-zero (all-one) matrix in compatible dimensions. 
Denote by $\rho_{\min}(\cdot)$ and $\rho_{\max}(\cdot)$ the smallest and largest singular values of a matrix, and $\lambda_{\min}(\cdot)$ and $\lambda_{\max}(\cdot)$ represent the smallest and largest eigenvalues of a square matrix. 
For square matrices $M_a$ and $M_b$ in the same dimensions, ${M_a}\!\succeq\!{M_b}$ (${M_a}\!\preceq\!{M_b}$) means $({M_a}-{M_b})$ is positive-semidefinite (negative-semidefinite). 
Unless otherwise noted, $\| \cdot \|$ and $\| \cdot \|_F$ represent the spectral and Frobenius norm of a matrix, respectively. 
For two real-valued functions $f_1$ and $f_2$, $f_1(x)=\bm{O}(f_2(x))$ as $x\to x_0$ means $\mathop {\lim }\nolimits_{x \to x_0 } |f_1(x)/f_2(x)|<\infty$. 


\subsection{System Model}
Consider the following network system model
\begin{equation}\label{eq:global_model}
\begin{aligned}
x_{t}&=Wx_{t-1}+\theta_{t-1}, \\
y_{t}&=x_{t}+\upsilon_{t},
\end{aligned}
\end{equation}
where $x_t$ and $y_t$ represents the system state and corresponding observation at time $t$ ($t=1,2,\cdots, T$), $W\in \mathbb{R}^{n \times n}$ is the unknown topology matrix related to the adjacency matrix $A$, 
and $\theta_{t}$ and $\upsilon$ represent the process and observation noises, satisfying the following Gauss-Markov assumption. 
\begin{assumption}\label{assu:noise}
$\theta_{t}$ and $\upsilon_{t}$ are i.i.d. Gaussian noises, subject to ${N}(0,\sigma^2_{\theta} I)$ and ${N}(0,\sigma^2_{\upsilon} I)$, respectively, and $\sigma^2_{\theta} \ge \sigma^2_{\upsilon} $. 
They are also independent of $\{x_{t}\}_{t=0}^{t=T}$ and $\{y_{t}\}_{t=0}^{t=T}$. 
\end{assumption}
Here $\sigma^2_{\theta} \ge \sigma^2_{\upsilon} $ is considered to avoid that the observation noises may cover up system states when the initial state values are small. 
Next, we present asymptotically stable matrix $\mathcal{S}_a$ and the (strict) marginally stable matrix $\mathcal{S}_m$ as follows:
\begin{equation}
\begin{aligned}
\mathcal{S}_a=&\{Z\in \mathbb{R}^{n \times n}, \rho_{\max}(Z)<1\}, \\
\mathcal{S}_m=&\{Z\in \mathbb{R}^{n \times n}, \rho_{\max}(Z)=1~\text{~and the geometric} \\
&\text{multiplicity of eigenvalue 1 equals to one} \} .
\end{aligned}
\end{equation}
In terms of the setup of $W$, some useful and popular choices are the Laplacian and the Metropolis rules, which are defined as follows \cite{sayed2014adaptation}.
For $i \neq j$,
\begin{align}\label{eq:topo-rule}
w_{ij}=\left \{
\begin{aligned}
&\gamma a_{ij}/ \max\{d_i, i\in\mathcal{V}\},&&\text{by Laplacian rule} ,\\
&a_{ij}/ \max \{d_{i}, d_{j}\},&&\text{by Metropolis rule},
\end{aligned}
\right.
\end{align}
where the auxiliary parameter $\gamma$ satisfies $0<\gamma\le 1$.
For both rules, the self-weights are given by
\begin{align}\label{eq:topo-rule2}
w_{ii}= 1-\sum\limits_{j \neq i} w_{ij}.
\end{align}
Note that if $W$ is specified by either one of the two rules, then $W\in\mathcal{S}_m$.
A typical matrix in $\mathcal{S}_a$ can be directly obtained via multiplying (\ref{eq:topo-rule}) and (\ref{eq:topo-rule2}) by a factor $0<\alpha<1$, which is common in adaptive diffusion networks \cite{matta2018consistent}.
Considering different stabilities, it holds that
\begin{equation}
\mathop {\lim }\limits_{t \to \infty } W^t=\left \{
\begin{aligned}
&\bm{0},~&&\text{if}~W \in \mathcal{S}_a, \\
&W^{\infty},~&& \text{if}~W \in \mathcal{S}_m,
\end{aligned}\right.
\end{equation}
where $\bm{0}$ represents all-zero matrix in compatible dimensions and $\|W^{\infty}\|<\infty$.
In a recursive form, (\ref{eq:global_model}) is rewritten as
\begin{align} \label{eq:expand_form}
y_{t}=x_t+\upsilon_t=W^{t}x_{0}+\sum\limits_{m = 1}^{t} W^{m-1} \theta_{t-m} + \upsilon_t.
\end{align}
By starting the NS \eqref{eq:global_model} and letting it evolve with time, a single trajectory of the system is collected, and we organize the states/observations/noises from time $0$ to $T$ as
\begin{equation}
\begin{aligned}
X_T^- &=[x_{0},x_{2},\cdots,x_{T-1}] ,~X_T^+ =[x_{1},x_{2},\cdots,x_{T}] , \\
Y_T^- &=[y_{0},y_{2},\cdots,y_{T-1}] ,~Y_T^+ =[y_{1},y_{2},\cdots,y_{T}] , \\
\Theta_T &=[\theta_0,\theta_1\cdots,~\theta_{T-1}],~\Upsilon_T =[\upsilon_1,\upsilon_2\cdots,\upsilon_{T}] .
\end{aligned}
\end{equation}
Then, the whole evolution process is compactly written as
\begin{equation}
\begin{aligned}
X_T^+ = W X_T^- +\Theta_T,~Y_T^+ = W X_T^+ +\Upsilon_T.
\end{aligned}
\end{equation}

\subsection{Basic Inference Principles and Problem of Interest}
During the running process of the NS, the system states become highly correlated after continuous exchange of information. 
Therefore, the connectivity between two nodes can be revealed by the state correlation. 
From this perspective, the famous Pearson correlation coefficient provides a way to quantify the correlation degree, given by 
\begin{equation}\label{eq:c_coefficient}
\varrho_{ij}=\sum\limits_{t = 0}^{T} \frac{(x_t^{i} - \bar x^{i} )}{\varrho_{i}} \frac{(x_t^{j}- \bar x^{j} )}{\varrho_{j}},
\end{equation}
where $\varrho_{i}=\sqrt{\sum\nolimits_{t = 0}^{T} (x_t^{i} - \bar x^{i} )^2 }$ is the sample standard deviations of $\{x_t^i\}_{t=0}^{T}$ and $\bar x^{i}=\sum\nolimits_{t = 0}^{T} x_t^{i}/T$, $\forall i\in \mathcal{V}$. 
The larger $\varrho_{ij}$ is, the more confident one can determine that there exists an edge between node $i$ and $j$. 

Note that the coefficient $\varrho_{ij}$ directly describes the (linear) correlation between two nodes. 
However, due to its symmetry, it cannot reveal the directionality (i.e., causality) of an existing edge between two nodes. 
The following lemma presents a way to overcome the causality issue. 
\begin{lemma}[Granger causality \cite{granger1969investigating,santos2019local}]\label{le:Granger}
If multiple trajectories are available over the system  (\ref{eq:global_model}), then we have 
\begin{equation}
{R_1^x}(t)  = WR_0^x(t\!-\!1), 
\end{equation}
where $R_0={\mathbb{E}}\left[ {{x_{t}} x_{t}^{\mathsf{T}}} \right]$ and $R_1={\mathbb{E}}\left[ {{x_{t}} x_{t-1}^{\mathsf{T}}} \right]$ are the autocorrelation and one-lag autocorrelation matrices. 
\end{lemma}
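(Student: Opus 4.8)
The plan is to obtain the identity directly from the one-step recursion in \eqref{eq:global_model}, using only the first-order dynamics together with the independence structure of Assumption \ref{assu:noise}. A preliminary remark is that $R_1^x(t)$ and $R_0^x(t-1)$ are population quantities (ensemble expectations over realizations), which is exactly why the availability of multiple trajectories matters: each restart of the system from the same initial state furnishes an independent realization of the pair $(x_t, x_{t-1})$, so averaging across trajectories provides consistent empirical estimates of $\mathbb{E}[x_t x_{t-1}^{\mathsf{T}}]$ and $\mathbb{E}[x_{t-1} x_{t-1}^{\mathsf{T}}]$.

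First I would right-multiply the state equation $x_t = W x_{t-1} + \theta_{t-1}$ by $x_{t-1}^{\mathsf{T}}$ and take expectations, which yields
\begin{equation}
\mathbb{E}[x_t x_{t-1}^{\mathsf{T}}] = W\, \mathbb{E}[x_{t-1} x_{t-1}^{\mathsf{T}}] + \mathbb{E}[\theta_{t-1} x_{t-1}^{\mathsf{T}}].
\end{equation}
Identifying the left-hand side with $R_1^x(t)$ and the first right-hand term with $W R_0^x(t-1)$, the claim reduces to showing that the noise-state cross term vanishes.

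The second step is to argue $\mathbb{E}[\theta_{t-1} x_{t-1}^{\mathsf{T}}] = \bm{0}$. From the expanded form \eqref{eq:expand_form}, the state $x_{t-1}$ is a deterministic function of $x_0$ and of the noises $\theta_0,\ldots,\theta_{t-2}$ only, so it involves no term of index $t-1$. By Assumption \ref{assu:noise}, $\theta_{t-1}$ is zero-mean and independent of all these quantities; hence $\mathbb{E}[\theta_{t-1} x_{t-1}^{\mathsf{T}}] = \mathbb{E}[\theta_{t-1}]\,\mathbb{E}[x_{t-1}^{\mathsf{T}}] = \bm{0}$, and substituting back gives $R_1^x(t) = W R_0^x(t-1)$.

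The computation itself is short; the one point that warrants care is the vanishing of the cross term, which is the only place where the causal (time-ordered) structure of the model is genuinely invoked. I would therefore make explicit that $x_{t-1}$ depends solely on noise terms strictly preceding $\theta_{t-1}$, since this is what licenses the independence argument and simultaneously pins down the correct time indexing of the two correlation matrices appearing in the statement.
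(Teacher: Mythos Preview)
Your proof is correct and follows essentially the same approach as the paper, which dispatches the lemma in a single line by substituting $x_t = Wx_{t-1}+\theta_{t-1}$ into $\mathbb{E}[x_t x_{t-1}^{\mathsf{T}}]$ and invoking the independence of $\theta_{t-1}$ from $x_{t-1}$. Your version simply spells out in more detail why the cross term $\mathbb{E}[\theta_{t-1}x_{t-1}^{\mathsf{T}}]$ vanishes.
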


This result is straightforward since ${R_1^x}(t) \!\!=\!\! {\mathbb{E}}\left[ {{x_{t}} x_{t-1}^{\mathsf{T}}} \right] \!=\! \mathbb{E}[(Wx_{t\!-\!1}\!+\!\theta_{t})x_{t-1}^{\mathsf{T}}] \!=\! WR_0^x(t\!-\!1)$. 
Note that $R_0(t)$ can be explicitly represented as 
\begin{equation}
{R_0^x}(t)  = W^{t} x_{0}x_{0}^{\mathsf{T}} (W^{t})^\mathsf{T} + \sigma_{\theta}^2 \sum\limits_{m = 0}^{t-1} W^m (W^m)^\mathsf{T}. 
\end{equation}
According to Lemma \ref{le:Granger}, the Granger estimator is given by 
\begin{itemize}
\item \textbf{Granger estimator}:
\begin{equation}\label{Granger_estimator}
\widehat{W}_g=R_1^x(t)(R_0^x(t-1) )^{-1}.
\end{equation}
\end{itemize}
Note that the construction of (\ref{Granger_estimator}) can be interpreted as using the states at the same $(t-1)$-th and $t$-th moments from sufficient trajectories of the system.

Next, we present the popular OLS estimator, which is derived from least squares optimization. 
Then, inferring the $W$ from $\{y_t\}_{t=0}^T$ is formulated to solve the following problem
\begin{equation}
\begin{aligned}
\textbf{P}_\textbf{1}:~~~~
\mathop {\min }\limits_{W}\sum\limits_{t = 1}^{T} \| y_t-W y_{t-1} \|^2.
\end{aligned}
\end{equation}
Note that the objective function of $\textbf{P}_\textbf{1}$ can be rewritten as $\mathop {\min }\limits_{W} \| Y_T^{+} - W Y_T^- \|_F^2$. 
Then, by finding the derivative, one obtains the optimal solution as 
\begin{itemize}
\item \textbf{OLS estimator}:
\begin{equation} \label{OLS_estimator}
\widehat{W}_o \!= \!Y_T^+ (Y_T^-)^\mathsf{T} (Y_T^- (Y_T^-)^\mathsf{T})^{-1}.
\end{equation}
\end{itemize}

Based on the above formulation, the prime goal of this paper is to design an efficient topology estimator for a single trajectory by revealing the relationships of the inference principles from multiple to single trajectory, and characterize the non-asymptotic inference performance in terms of convergence and accuracy. 
To practice, we establish an interpretable inference estimator borrowing the idea of node causality and correlation. 
Then, a probability analysis framework is employed for the inference performance analysis by resorting to the concentration measure in Gaussian space.

\section{The Proposed Topology Inference Method}\label{main-results}
In this section, we propose a causality-based inference method for a single trajectory setting, followed by its correlation-based modification design for cases when the observation size is small. 

\subsection{Causality-based Inference Method}
Although the Granger estimator presents a direct and analytic expression for inferring $W$, it is based on observations over multiple trajectories and the observation noises are often ignored.
It cannot be directly applied in single observation trajectory case.
Nevertheless, it provides beneficial modeling ideas from the perspective of node causality.
Similar with ${R_0^x}(t)$ and ${R_1^x}(t)$, we define the following sample covariance matrix and its one-lag version as
\begin{equation}\label{eq:sample11}
\begin{aligned}
\Sigma_0(T)=\frac{1}{T}(Y_T^-) (Y_T^-)^\mathsf{T},~\Sigma_1(T)=\frac{1}{T}(Y_T^+) (Y_T^-)^\mathsf{T}.
\end{aligned}
\end{equation}
Before demonstrating the connection between $R_0^x(R_1^x)$ and $\Sigma_0(\Sigma_1)$, we first present the following lemma.
\begin{lemma}[Mutual independence between states and noises in a single trajectory] \label{le:noise-state}
Given arbitrary $X_T\in \mathbb{R}^{n \times T}$ and noise matrix $\Theta_T \in \mathbb{R}^{n \times T}$ with i.i.d. zero-mean Gaussian entries, and let $|x|_{m}\!=\!\max\{|x_t^j|: t \!\in\!\mathbb{N}^+,j\in\mathcal{V}\}$. 
If $|x|_{m}<\infty$, then 
\begin{align}\label{limit_pro}
\Pr \left \{\mathop {\lim }\limits_{T \to \infty } \frac{1}{T} \Theta_T X_T^\mathsf{T} = \bm{0} \right\}=1.
\end{align}
\end{lemma}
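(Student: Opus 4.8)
The plan is to reduce the matrix statement to a scalar almost-sure convergence for each entry and then invoke a strong law of large numbers. Writing the columns of $X_T$ and $\Theta_T$ as $x_t$ and $\theta_t$, the $(i,j)$ entry of $\frac{1}{T}\Theta_T X_T^\mathsf{T}$ equals $\frac{1}{T}\sum_{t} \theta_t^i x_t^j$. Since the matrix limit is $\bm{0}$ if and only if each of the $n^2$ entries tends to zero, and a finite intersection of probability-one events still has probability one, it suffices to prove $\frac{1}{T}\sum_{t}\theta_t^i x_t^j \to 0$ almost surely for a fixed pair $(i,j)$.

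First I would treat $X_T$ as independent of (equivalently, fixed relative to) the noise $\Theta_T$, so that the summands $\xi_t := \theta_t^i x_t^j$ are mutually independent across $t$, each zero-mean Gaussian with variance $\sigma^2_{\theta}(x_t^j)^2$. The boundedness hypothesis $|x|_m<\infty$ then gives the uniform bound $\mathrm{Var}(\xi_t)=\sigma^2_{\theta}(x_t^j)^2 \le \sigma^2_{\theta}|x|_m^2$. A short computation shows $\frac{1}{T}\sum_t \xi_t$ is zero-mean Gaussian with variance at most $\sigma^2_{\theta}|x|_m^2/T$, which already yields convergence to zero in probability and in $L^2$.

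The main work is upgrading this to almost-sure convergence. Two routes are available. The cleaner one applies Kolmogorov's strong law for independent, non-identically distributed variables: because $\sum_{t\ge 1}\mathrm{Var}(\xi_t)/t^2 \le \sigma^2_{\theta}|x|_m^2 \sum_{t\ge 1} 1/t^2 < \infty$, Kolmogorov's criterion is met and $\frac{1}{T}\sum_t \xi_t \to \mathbb{E}[\xi_t]=0$ a.s. Alternatively, I would use the Gaussian tail bound $\Pr\{|\frac{1}{T}\sum_t \xi_t|>\epsilon\} \le 2\exp(-\epsilon^2 T/(2\sigma^2_{\theta}|x|_m^2))$, whose summability over $T$ lets the Borel--Cantelli lemma conclude $\limsup_T |\frac{1}{T}\sum_t \xi_t| \le \epsilon$ a.s.; sending $\epsilon\downarrow 0$ along a countable sequence gives the claim. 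Either way, intersecting the resulting probability-one events over the finitely many pairs $(i,j)$ completes the proof.

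I expect the delicate point to be the interplay between the boundedness assumption and the independence needed for the summands. The condition $|x|_m<\infty$ is doing real work: without it (for instance if $x_t^j$ could track the unbounded realizations of $\theta_t$) the cross term need not vanish, so the argument must make explicit that $X_T$ does not depend on $\Theta_T$ and that this, together with the finite bound, is exactly what controls the variance growth and licenses the Borel--Cantelli (or Kolmogorov) step.
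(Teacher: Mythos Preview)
Your proposal is correct and in fact more rigorous than the paper's own argument. Both you and the paper reduce to the entrywise sums $\Xi_{ij}(T)=\frac{1}{T}\sum_t \theta_{t}^i x_{t}^j$, compute $\mathbb{E}[\Xi_{ij}(T)]=0$ and $\mathbb{D}[\Xi_{ij}(T)]\le \sigma_\theta^2 |x|_m^2/T$, and conclude entrywise. The difference is in how the limit is justified: the paper simply applies Chebyshev's inequality to obtain $\lim_{T\to\infty}\Pr\{|\Xi_{ij}(T)|<\epsilon\}=1$ and then asserts this is equivalent to $\Pr\{\lim_{T\to\infty}\Xi(T)=\bm{0}\}=1$, i.e.\ it passes from convergence in probability to almost-sure convergence without further argument. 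You instead close that gap explicitly, either via Kolmogorov's SLLN (using $\sum_t \mathrm{Var}(\xi_t)/t^2<\infty$) or via a summable Gaussian tail plus Borel--Cantelli. So the approaches share the same skeleton, but your version supplies the missing step that actually establishes the almost-sure statement in \eqref{limit_pro}; the paper's Chebyshev route, as written, only delivers convergence in probability.
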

\begin{proof}
The proof is provided in Appendix \ref{pr:noise-state}. 
\end{proof}

Lemma \ref{le:noise-state} illustrates the independence of the sample matrix on the noise matrix in a single observation trajectory.
The result (\ref{limit_pro}) also applies to any linear transform ${M}\Theta$ (here ${M}\in\mathbb{R}^{n\times n}$ and $\|{M}\| <\infty$).
Since only $\{y_t\}_{t=0}^{T}$ are directly available, for two consecutive observations, it follows that
\begin{align}\label{eq:two-observation}
y_t&=Wx_{t-1}+\theta_{t-1}+\upsilon_t \nonumber \\
&=Wy_{t-1}-W\upsilon_{t-1}+\theta_{t-1}+\upsilon_t \nonumber \\
&=Wy_{t-1}+\omega_t,
\end{align}
where $\omega_t=-W\upsilon_{t-1}+\theta_{t-1}+\upsilon_t$, satisfying $N(0,\sigma_{\upsilon}^2 WW^\mathsf{T} + \sigma_{\upsilon}^2 I + \sigma_{\theta}^2 I)$, which is highly auto-correlated.
Besides, $\omega_t$ is independent of all $\{x_{\tau}\}_ {\tau< t}$ and $\{\theta_{\tau}\}_ {\tau< t-1}$. 
Note that (\ref{eq:two-observation}) only represents the quantitative relationship between consecutive observations, not a causal dynamical process.
Based on the formulation, we present the following theorem.


\begin{theorem}[Causality in single observation trajectory]\label{th:causality_estimator}
Given observations $\{ y_{t}\}_{t=1}^{T}$, if $W\in \mathcal{S}_a$, we have
\begin{equation}\label{eq:conclusion2}
\Sigma_1(\infty)=W (\Sigma_0(\infty) -\sigma_{\upsilon}^2I),
\end{equation}
where $\Sigma_1(\infty)=\mathop {\lim }\limits_{T \to \infty }\Sigma_1(T)$ and $\Sigma_0(\infty)=\mathop {\lim }\limits_{T \to \infty }\Sigma_0(T)$.
\end{theorem}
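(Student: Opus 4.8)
The plan is to exploit the consecutive-observation identity \eqref{eq:two-observation}, $y_t = W y_{t-1} + \omega_t$, so as to split $\Sigma_1(T)$ into a principal term $W\Sigma_0(T)$ and a noise cross-term, and then to show the cross-term converges to $-\sigma_\upsilon^2 W$. Substituting the identity into the definition of $\Sigma_1(T)$ in \eqref{eq:sample11}, I would first obtain
\begin{equation*}
\Sigma_1(T) = \frac{1}{T}\sum_{t=1}^T (W y_{t-1} + \omega_t) y_{t-1}^\mathsf{T} = W\Sigma_0(T) + \frac{1}{T}\sum_{t=1}^T \omega_t y_{t-1}^\mathsf{T},
\end{equation*}
reducing the entire statement to evaluating $\lim_{T\to\infty}\frac{1}{T}\sum_{t=1}^T \omega_t y_{t-1}^\mathsf{T}$.

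Next I would expand this cross-term using $\omega_t = -W\upsilon_{t-1} + \theta_{t-1} + \upsilon_t$ and $y_{t-1} = x_{t-1} + \upsilon_{t-1}$, yielding six product families that fall into two categories. Each state-times-noise family (such as $\upsilon_{t-1}x_{t-1}^\mathsf{T}$, $\theta_{t-1}x_{t-1}^\mathsf{T}$, and $\upsilon_t x_{t-1}^\mathsf{T}$) pairs a zero-mean noise with an independent state, so Lemma \ref{le:noise-state} drives its sample average to $\bm{0}$ almost surely. Each noise-times-noise family with mismatched time index or independent sources (such as $\theta_{t-1}\upsilon_{t-1}^\mathsf{T}$ and $\upsilon_t\upsilon_{t-1}^\mathsf{T}$) also averages to $\bm{0}$ by the strong law of large numbers. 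The only surviving contribution is the time-matched term $-W\upsilon_{t-1}\upsilon_{t-1}^\mathsf{T}$, for which $\frac{1}{T}\sum_{t=1}^T \upsilon_{t-1}\upsilon_{t-1}^\mathsf{T} \to \sigma_\upsilon^2 I$ almost surely, contributing exactly $-\sigma_\upsilon^2 W$. Collecting these limits gives $\Sigma_1(\infty) = W\Sigma_0(\infty) - \sigma_\upsilon^2 W = W(\Sigma_0(\infty) - \sigma_\upsilon^2 I)$.

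The hypothesis $W \in \mathcal{S}_a$ enters through the applicability of Lemma \ref{le:noise-state}: from the expansion \eqref{eq:expand_form}, $\rho_{\max}(W)<1$ makes $W^t x_0 \to \bm{0}$ and keeps the accumulated process-noise covariance $\sigma_\theta^2\sum_{m=0}^{t-1} W^m (W^m)^\mathsf{T}$ uniformly bounded as a convergent geometric-type series, so the states do not blow up and the finiteness condition underlying Lemma \ref{le:noise-state} is met. I expect the main obstacle to be exactly this boundedness verification together with the careful bookkeeping of the cross-term: one must confirm that asymptotic stability controls the state magnitudes uniformly in $t$, since otherwise the state-noise averages need not vanish, and one must correctly isolate the single matched pair $\upsilon_{t-1}\upsilon_{t-1}^\mathsf{T}$ that produces the $-\sigma_\upsilon^2 I$ correction, as an off-by-one shift in the time indices would either lose this term or manufacture spurious ones.
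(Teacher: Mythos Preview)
Your proposal is correct and follows essentially the same route as the paper's proof: substitute \eqref{eq:two-observation} into $\Sigma_1(T)$, use the asymptotic stability $W\in\mathcal{S}_a$ to guarantee the boundedness needed for Lemma~\ref{le:noise-state}, and isolate the single matched term $\upsilon_{t-1}\upsilon_{t-1}^\mathsf{T}$ that yields the $-\sigma_\upsilon^2 W$ correction. The only cosmetic difference is that the paper applies Lemma~\ref{le:noise-state} directly to $\theta_{t-1}y_{t-1}^\mathsf{T}$ and $\upsilon_t y_{t-1}^\mathsf{T}$ (treating $y_{t-1}$ as the bounded ``state'' vector) rather than first expanding $y_{t-1}=x_{t-1}+\upsilon_{t-1}$ into six pieces, but the argument is the same.
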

\begin{proof}
The proof is provided in Appendix \ref{pr:causality_estimator}. 
\end{proof}

Different from the Granger causality in Lemma \ref{le:Granger}, Theorem \ref{th:causality_estimator} relaxes the dependence on multiple trajectories, and presents the observation causality in a single trajectory, while taking the observation noises into consideration. 
Then, given finite horizon $T$, we propose the causality-based estimator as  
\begin{itemize}
\item \textbf{Causality-based estimator}:
\begin{equation}\label{causality_estimator}
\widehat{W}_c\!=\!\Sigma_1(T) (\Sigma_0(T) -\sigma_{\upsilon}^2I)^{-1}.
\end{equation}
\end{itemize}

\begin{remark}
We demonstrate that although the estimator $\widehat{W}_c$ is derived from Theorem \ref{th:causality_estimator} where $W \in \mathcal{S}_a$ holds, it is also applicable when $W \in \mathcal{S}_m$. 
In fact, Theorem \ref{th:causality_estimator} is directly based on the Chebyshev inequality, where the bounded state constraint precludes us from proving the convergence and accuracy of $\widehat{W}_c$ when $W \in \mathcal{S}_m$. 
To tackle this issue, we can resort to the concentration measure in Gaussian space. 
The details will be given in Section \ref{section:Non-asymptotic}. 
\end{remark}


\subsection{Correlation-based Modification Design}
The proposed causality-based estimator $\widehat{W}_c$ is motivated by asymptotic relationship \eqref{eq:conclusion2}. 
When $T$ is small, directly using $\widehat{W}_c$ may be inappropriate. 
Inspired by the correlation measurement (\ref{eq:c_coefficient}), an alternative way to alleviate the influence of observation noises is to implement correlation coefficient calculation, which captures the linear correlation between nodes.
Then, we define the following correlation-based sample matrix and its one-lag version as
\begin{equation}\label{eq:sample22}
S_0(T)=\frac{1}{T}\sum\limits_{t = 0}^{T-1} \tilde y_{t}^- (\tilde y_{t}^-)^\mathsf{T},~S_1(T)=\frac{1}{T}\sum\limits_{t = 1}^{T} \tilde y_{t}^+ (\tilde y_{t-1}^-)^\mathsf{T},
\end{equation}
where the elements of $\tilde y_{t}^{-}$ and $\tilde y_{t}^{+}$ are given by
\begin{equation}\label{eq:filtered_y}
[\tilde y_{t}^{-}]^i= [y_t-\frac{ Y_T^- \bm{1} }{T}]^i/\varrho_i^-, ~[\tilde y_{t}^{+}]^i= [y_t-\frac{ Y_T^+ \bm{1} }{T}]^i/\varrho_i^+.
\end{equation}
Here the correlation coefficients are computed by 
\begin{align}
\varrho_i^- =\sqrt{\sum\limits_{t = 0}^{T-1} (y_t^{i} - [\frac{ Y_T^- \bm{1} }{T}]^i )^2 }, \varrho_i^+ =\sqrt{\sum\limits_{t = 1}^{T} (y_t^{i} - [\frac{ Y_T^+ \bm{1} }{T}]^i )^2 }. \nonumber
\end{align}
Finally, the correlation-based modified version of the proposed $\widehat{W}_c$ is designed as
\begin{itemize}
\item \textbf{Correlation-modified estimator}:
\begin{equation}\label{correlation_estimator}
\widehat{W}_s = S_1(T)S_0^{-1}(T) .
\end{equation}
\end{itemize}


The main merit of $\widehat{W}_s$ lies in it takes the noise filtering and the node correlation into account at the same time.
In statistics, it can be seen as a normalization operation to quantity the observations in the same measurement space. 
We point out that correlation-based modification $\widehat{W}_s$ improves the inference performance of estimator $\widehat{W}_c$ in small observation scale, and its inference accuracy is no worse than that of $\widehat{W}_o$ (this will be verified in section \ref{simulation}).

\begin{remark}\label{rema:invertibility}
Note that under Assumption \ref{assu:noise}, the invertibility of $\Sigma_0(T)$, along with its transformations $(\Sigma_0(T) -\sigma_{\upsilon}^2I)$ and $S_0(T)$, is guaranteed. 
On the one hand, the i.i.d. process noise $\theta_t$ ensures that the system states at different times are linearly independent of each other. 
On the other hand, the addictive i.i.d. observation noise $\upsilon_t$ further enhances the linear independence of the accessed observations. 
According to Sard's theorem in measure theory, the matrices $\Sigma_0(T)$, $(\Sigma_0(T) -\sigma_{\upsilon}^2I)$ and $S_0(T)$ are  full-ranked almost surely. 
\end{remark}

\begin{figure}[t]
\centering
\includegraphics[width=0.42\textwidth]{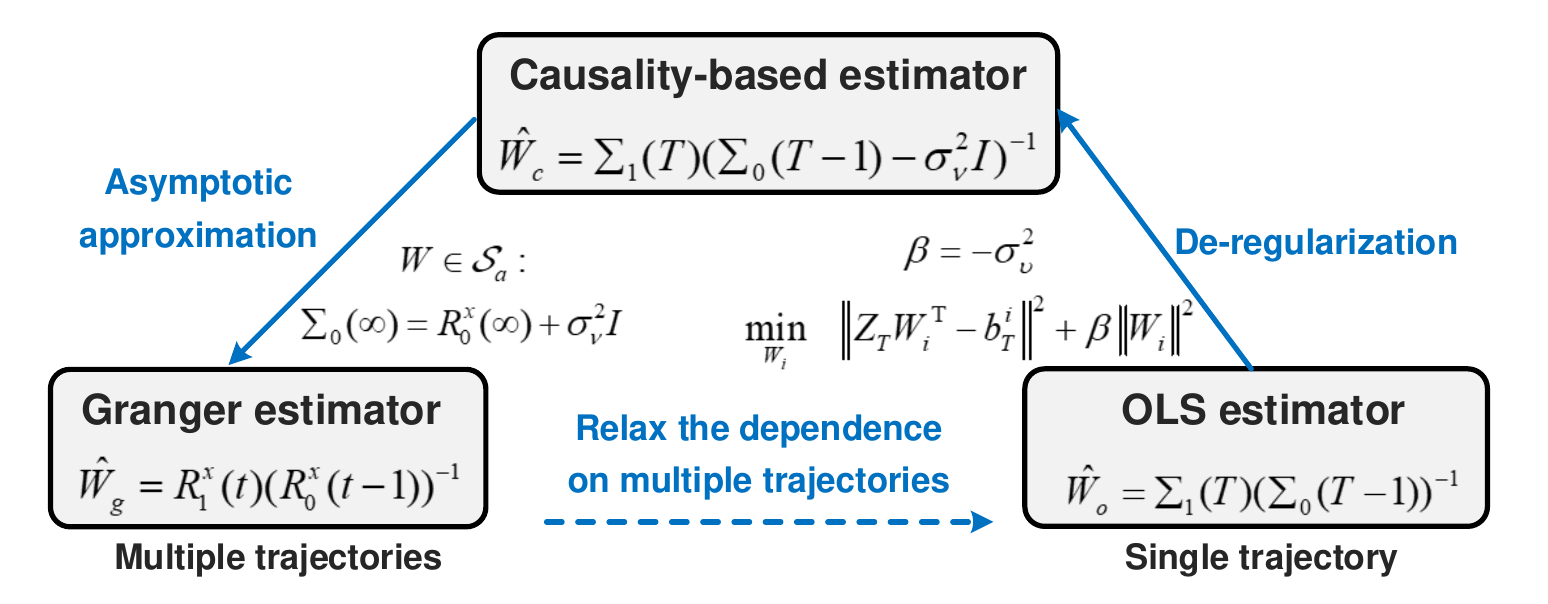}
\vspace{-5pt}
\caption{An overview of the mutual relationships of $\widehat{W}_c$, $\widehat{W}_g$ and $\widehat{W}_o$. 
}
\label{relationships}
\vspace{-10pt}
\end{figure}

\subsection{Relationships between Different Estimators}
In this part, we demonstrate the relation between the causality-based estimator $\widehat{W}_c$, the Granger estimator $\widehat{W}_g$, and OLS estimator $\widehat{W}_o$. 
\begin{theorem}[Equivalence condition between $\Sigma_0$ and $R_0$] \label{th:equivalent1}
If $W\in \mathcal{S}_a$, when $T\to\infty$, we have
\begin{equation}\label{eq:equivalent}
\Sigma_0(\infty)=R_0^x(\infty)+\sigma_{\upsilon}^2 I,~\Sigma_1(\infty)=R_1^x(\infty)+\sigma_{\upsilon}^2 W.
\end{equation}
\end{theorem}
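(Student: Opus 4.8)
The plan is to take $T\to\infty$ inside the two sample matrices by substituting $y_t=x_t+\upsilon_t$ and splitting every outer product into a pure-state term, two mixed state--noise terms, and a pure-noise term. For the first identity I would write $\Sigma_0(T)=\frac{1}{T}\sum_{t=0}^{T-1}(x_t+\upsilon_t)(x_t+\upsilon_t)^{\mathsf T}$ and treat the four resulting sums independently. The mixed sums $\frac{1}{T}\sum x_t\upsilon_t^{\mathsf T}$ and its transpose vanish almost surely by Lemma~\ref{le:noise-state}: since $W\in\mathcal{S}_a$ the homogeneous component $W^t x_0$ decays geometrically and the driven component is a convergent noise series, so the states stay uniformly bounded and the hypothesis $|x|_m<\infty$ applies. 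The pure-noise sum $\frac{1}{T}\sum\upsilon_t\upsilon_t^{\mathsf T}$ converges to $\sigma_\upsilon^2 I$ by the strong law of large numbers for the i.i.d.\ $N(0,\sigma_\upsilon^2 I)$ sequence. What remains is to show the pure-state average converges to $R_0^x(\infty)$.

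The key step is therefore the convergence $\frac{1}{T}\sum_{t=0}^{T-1}x_t x_t^{\mathsf T}\to R_0^x(\infty)$. Because $W\in\mathcal{S}_a$, the explicit expression for $R_0^x(t)$ given before the theorem shows $R_0^x(t)\to R_0^x(\infty)=\sigma_\theta^2\sum_{m\ge 0}W^m (W^m)^{\mathsf T}<\infty$, i.e.\ the Gauss--Markov chain $\{x_t\}$ is asymptotically wide-sense stationary. I would argue mean-ergodicity of this stable linear chain so that the time average of $x_t x_t^{\mathsf T}$ converges almost surely to the stationary ensemble autocovariance $R_0^x(\infty)$; concretely, I would split $x_t$ into the decaying transient $W^t x_0$ and the stationary driven part, show the transient contributes $\bm{O}(1/T)$ to the average via $\|W^t\|\le C\beta^t$ with $\beta<1$, and bound the fluctuation of the stationary part around its mean by a second-moment estimate. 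Collecting the four limits gives $\Sigma_0(\infty)=R_0^x(\infty)+\sigma_\upsilon^2 I$.

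For the one-lag identity I would apply the same decomposition to $\Sigma_1(T)=\frac{1}{T}\sum_{t=1}^{T}(x_t+\upsilon_t)(x_{t-1}+\upsilon_{t-1})^{\mathsf T}$. The pure-state sum converges to $R_1^x(\infty)$ by the same ergodic argument; the pure-noise sum $\frac{1}{T}\sum\upsilon_t\upsilon_{t-1}^{\mathsf T}$ vanishes since consecutive observation noises are independent and zero-mean; and the state--noise cross sum $\frac{1}{T}\sum\upsilon_t x_{t-1}^{\mathsf T}$ is killed by Lemma~\ref{le:noise-state}. The remaining cross sum $\frac{1}{T}\sum x_t\upsilon_{t-1}^{\mathsf T}$ is the delicate one: evaluating its limit through the one-step coupling of $x_t$ to the preceding observation supplies precisely the $\sigma_\upsilon^2 W$ contribution, and combined with the Granger relation $R_1^x(\infty)=WR_0^x(\infty)$ from Lemma~\ref{le:Granger} it yields $\Sigma_1(\infty)=R_1^x(\infty)+\sigma_\upsilon^2 W$.

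The main obstacle is the ergodic convergence of the state time-averages to the ensemble autocovariances $R_0^x(\infty)$ and $R_1^x(\infty)$: the chain starts from a fixed $x_0$ rather than from stationarity, so one must quantify that the transient averages out and that the driven part is mean-ergodic, which is where the geometric decay of $\|W^t\|$ and a second-moment bound do the real work. The second subtlety is the careful bookkeeping of the one-lag state--noise cross term, since this is the only place the $\sigma_\upsilon^2 W$ contribution enters and its treatment hinges on exactly how $x_t$ is coupled to the noise at time $t-1$; getting this sign and factor right is what ultimately separates the one-lag identity from the diagonal case.
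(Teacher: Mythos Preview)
Your plan for $\Sigma_0(\infty)$ is sound and close in spirit to the paper's, though organized around a different splitting: you decompose $y_t=x_t+\upsilon_t$, whereas the paper writes $y_t=W^tx_0+\eta_t$ with $\eta_t=\sum_{m=1}^tW^{m-1}\theta_{t-m}+\upsilon_t$ and expands $y_ty_t^{\mathsf T}$ into four pieces $Q_1^t,\dots,Q_4^t$. The work you defer to ``mean-ergodicity of the stable linear chain'' is exactly what the paper carries out by hand in its analysis of $\frac1T\sum_t Q_4^t$: it separates $\eta_t\eta_t^{\mathsf T}$ into autocorrelated and cross-correlated noise pieces and applies Lemma~\ref{le:noise-state} together with dominated convergence to each. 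If you wrote out your ``second-moment estimate'' for the stationary driven part, you would be reproducing essentially the same computation on the process-noise component of $\eta_t$.

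There is, however, a genuine error in your handling of $\Sigma_1(\infty)$. You claim that $\frac1T\sum_t x_t\upsilon_{t-1}^{\mathsf T}$ is ``the delicate one'' and contributes $\sigma_\upsilon^2 W$ via ``the one-step coupling of $x_t$ to the preceding observation''. No such coupling exists: by Assumption~\ref{assu:noise} the observation-noise sequence $\{\upsilon_t\}$ is independent of the state sequence $\{x_t\}$, so $x_t$ and $\upsilon_{t-1}$ are independent and Lemma~\ref{le:noise-state} forces $\frac1T\sum_t x_t\upsilon_{t-1}^{\mathsf T}\to\bm 0$. You are conflating the observation noise $\upsilon_{t-1}$ with the process noise $\theta_{t-1}$; it is $\theta_{t-1}$ that enters the recursion $x_t=Wx_{t-1}+\theta_{t-1}$, not $\upsilon_{t-1}$. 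With the correct bookkeeping all three cross terms in your $\Sigma_1$ decomposition vanish and you are left with $\Sigma_1(\infty)=R_1^x(\infty)$. This is in fact consistent with Theorem~\ref{th:causality_estimator} and the first identity here, since $\Sigma_1(\infty)=W(\Sigma_0(\infty)-\sigma_\upsilon^2I)=WR_0^x(\infty)=R_1^x(\infty)$; the additional $\sigma_\upsilon^2 W$ in the stated second identity appears to be a slip in the paper (whose proof of that half is ``likewise and omitted''), and your argument inherits that slip by inventing a nonexistent coupling to account for it.
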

\begin{proof}
The proof is provided in Appendix \ref{pr:equivalent1}. 
\end{proof}

Theorem \ref{th:equivalent1} demonstrates the equivalent condition between estimators $\widehat{W}_g$ and $\widehat{W}_c$.
It reveals that the expected state covariance matrix of $T\to \infty$ is identical to the sample covariance matrix along all the single time horizon, which is an interesting result that describes the relationship between multiple and single observation trajectories.

Note that both $\widehat{W}_c$ and $\widehat{W}_o$ can be computed row-by-row in a parallel manner. 
To fit the common least squares problem modeling, we interpret the relationships of the two estimators by focusing on an arbitrary row of them. 
First, define the following quadratic fractional optimization problem
\begin{equation}\label{eq:Rayleigh}
\begin{aligned}
\textbf{P}_\textbf{2}:~\min_{W_{i} \in \mathbb{R}^{1\times n}} \frac{\left\| Z_T W_{i}^{\mathsf{T}} -\bm{b}_{T}^{i}\right\|^2 }{\left\|W_{i}^{\mathsf{T}} \right\|^2 +1},
\end{aligned}
\end{equation}
where $Z_T=\frac{(Y_{T}^-)^{\mathsf{T}}}{\sqrt{T}} \in \mathbb{R}^{T \times n}$ is the coefficient matrix, and  
$\bm{b}_{T}^{i}=[y_{1}^{i},y_{2}^{i},\cdots,y_{T}^{i}]^\mathsf{T}/{\sqrt{T}} \in \mathbb{R}^{T}$ is the scaled observations of node $i$ from $0$ to $T-1$. 
The objective function in \eqref{eq:Rayleigh} also known as Rayleigh quotient, and the numerator in \eqref{eq:Rayleigh} is exactly the least squares for obtaining $\widehat{W}_{i,o}$. 
Therefore, $\textbf{P}_\textbf{2}$ can be regarded as a weighted version of OLS problem. 
Let $\widehat{W}_{i,c}$ represent the $i$-row of $\widehat{W}_{c}$, and $\widehat{W}_{i,\tilde{o}}$ be the solution of $\textbf{P}_\textbf{2}$. 
Then, we have the following theorem. 

\begin{theorem}[Relationship between $\widehat{W}_c$ and $\widehat{W}_{i,\tilde{o}}$]\label{th:relationship}
In a single trajectory, when $T\to\infty$, $\widehat{W}_{i,c}$ is equivalent to $\widehat{W}_{i,\tilde{o}}$, i.e., 
\begin{equation}\label{eq:OLS_equivalent}
\mathop {\lim }\limits_{T \to \infty } \widehat{W}_{i,c}(T) = \mathop {\lim }\limits_{T \to \infty } \widehat{W}_{i,\tilde{o}}(T). 
\end{equation}
\end{theorem}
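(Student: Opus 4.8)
The plan is to recognize $\textbf{P}_\textbf{2}$ as a total least squares (TLS) problem and extract its closed-form minimizer, then match it to $\widehat{W}_{i,c}$ in the limit. Writing $w=W_i^{\mathsf{T}}$, the objective is the Rayleigh quotient $g(w)=\|Z_T w-\bm{b}_T^i\|^2/(\|w\|^2+1)$. First I would impose the first-order stationarity condition $\nabla g(w)=\bm{0}$. After clearing the denominator and using that at any stationary point the numerator equals $\mu(\|w\|^2+1)$ with $\mu=g(w)$ the attained value, this collapses to the linear system
\begin{equation}\label{eq:tls-normal}
(Z_T^{\mathsf{T}}Z_T-\mu I)\,w=Z_T^{\mathsf{T}}\bm{b}_T^i ,
\end{equation}
so that $\mu$ is forced to be the minimal value of $g$, i.e. the square of the smallest singular value of the augmented matrix $[\,Z_T\ \bm{b}_T^i\,]$ (equivalently $\lambda_{\min}$ of its Gram matrix). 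This is the standard TLS/de-regularization solution and directly supports the paper's later reading of $\widehat{W}_c$ as a de-regularized OLS estimator.

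Next I would rewrite \eqref{eq:tls-normal} in terms of the sample matrices. By the definitions in \eqref{eq:sample11} and the scalings of $Z_T$ and $\bm{b}_T^i$, one has $Z_T^{\mathsf{T}}Z_T=\Sigma_0(T)$ and $Z_T^{\mathsf{T}}\bm{b}_T^i=[\Sigma_1(T)]_{i,:}^{\mathsf{T}}$, the transpose of the $i$-th row of $\Sigma_1(T)$. Hence $\widehat{W}_{i,\tilde{o}}(T)=[\Sigma_1(T)]_{i,:}(\Sigma_0(T)-\mu I)^{-1}$, whereas by \eqref{causality_estimator} we have $\widehat{W}_{i,c}(T)=[\Sigma_1(T)]_{i,:}(\Sigma_0(T)-\sigma_{\upsilon}^2 I)^{-1}$. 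The two estimators thus share the same numerator and differ only in the scalar shift subtracted inside $\Sigma_0(T)$. Consequently, proving \eqref{eq:OLS_equivalent} reduces entirely to establishing that the TLS shift converges to the observation-noise variance, namely
\begin{equation}\label{eq:shift-limit}
\lim_{T\to\infty}\mu(T)=\lim_{T\to\infty}\rho_{\min}^2\big([\,Z_T\ \bm{b}_T^i\,]\big)=\sigma_{\upsilon}^2 .
\end{equation}

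Establishing \eqref{eq:shift-limit} is the crux and the expected main obstacle. The strategy is to analyze the limiting $(n+1)\times(n+1)$ Gram matrix $G=\lim_{T\to\infty}[\,Z_T\ \bm{b}_T^i\,]^{\mathsf{T}}[\,Z_T\ \bm{b}_T^i\,]$, whose blocks are $\Sigma_0(\infty)$, the $i$-th column of $\Sigma_1(\infty)^{\mathsf{T}}$, and the scalar $\lim_{T\to\infty}\tfrac{1}{T}\sum_{t=1}^{T}(y_t^i)^2$. Writing $y_t=x_t+\upsilon_t$ and invoking Lemma \ref{le:noise-state} to annihilate every cross term between the states and the i.i.d. observation noise, the noise contributes an isotropic $\sigma_{\upsilon}^2 I_{n+1}$, so that $G=G_s+\sigma_{\upsilon}^2 I_{n+1}$ with $G_s$ the stationary Gram matrix of the stacked states $[x_{t-1}^{\mathsf{T}},\,x_t^i]^{\mathsf{T}}$; hence $\rho_{\min}^2([\,Z_T\ \bm{b}_T^i\,])\to\sigma_{\upsilon}^2+\lambda_{\min}(G_s)$, and \eqref{eq:shift-limit} is equivalent to $\lambda_{\min}(G_s)=0$. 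The natural candidate minimizing direction is $v=[\,W_i\ -1\,]^{\mathsf{T}}$, because $v^{\mathsf{T}}[x_{t-1}^{\mathsf{T}},x_t^i]^{\mathsf{T}}=W_i x_{t-1}-x_t^i$ isolates exactly the one-step residual that Theorem \ref{th:causality_estimator} controls via $\Sigma_1(\infty)=W(\Sigma_0(\infty)-\sigma_{\upsilon}^2 I)$, coupling the off-diagonal block of $G$ to its top-left block. The delicate point is that this direction yields $v^{\mathsf{T}}G_s v$ equal to the process-noise variance along coordinate $i$, so the naive Rayleigh bound only gives $\lambda_{\min}(G_s)$ of order $\sigma_{\theta}^2$; the essential work is therefore to show that this process-noise residual does not inflate the smallest singular value, i.e. that the TLS noise floor collapses to exactly $\sigma_{\upsilon}^2$ rather than a strictly larger value. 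Once $\lambda_{\min}(G_s)\to 0$ is secured through the full structural identity of Theorem \ref{th:causality_estimator}, continuity of eigenvalues under the almost-sure convergence of the Gram matrix delivers \eqref{eq:shift-limit}, and substitution back into \eqref{eq:tls-normal} yields $\lim_{T\to\infty}\widehat{W}_{i,\tilde{o}}(T)=\lim_{T\to\infty}\widehat{W}_{i,c}(T)$.
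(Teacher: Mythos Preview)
Your overall strategy is exactly the paper's: cast $\textbf{P}_\textbf{2}$ as total least squares, write its minimizer as $(\Sigma_0(T)-\mu I)^{-1}[\Sigma_1(T)]_{i,:}^{\mathsf{T}}$ with $\mu=\rho_{\min}^2\big([Z_T\ \bm{b}_T^i]\big)$, and reduce the claim to $\mu\to\sigma_\upsilon^2$. The paper likewise splits the augmented Gram matrix $B_T^{\mathsf{T}}B_T$ into a state part $M_T^a$ plus a noise part tending to $\sigma_\upsilon^2 I$, and then argues $\lambda_{\min}(M_T^a)=0$; you do the same with your $G_s$.

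The gap you yourself flag (``the delicate point\ldots'') is real and, as written, cannot be closed: $G_s$ is \emph{not} singular. In the asymptotically stable regime the limiting state Gram matrix is
\[
G_s=\begin{bmatrix}\Sigma_x & \Sigma_x W_i^{\mathsf{T}}\\ W_i\Sigma_x & W_i\Sigma_x W_i^{\mathsf{T}}+\sigma_\theta^2\end{bmatrix},
\]
whose Schur complement in the $(n{+}1,n{+}1)$ corner is exactly $\sigma_\theta^2>0$, so $\det G_s=\sigma_\theta^2\det\Sigma_x>0$ and $\lambda_{\min}(G_s)>0$. Your test vector $v=[W_i,\,-1]^{\mathsf{T}}$ already achieves $v^{\mathsf{T}}G_s v=\sigma_\theta^2$, and no direction drives this to zero; invoking Theorem~\ref{th:causality_estimator} does not help, since that identity relates the off-diagonal block to the top-left block but cannot kill the strictly positive Schur complement contributed by the process noise $\theta_{t-1}^i$. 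Consequently $\rho_{\min}^2(B_T)\to\sigma_\upsilon^2+\lambda_{\min}(G_s)$ rather than $\sigma_\upsilon^2$, and the planned identification of $\widehat{W}_{i,\tilde o}$ with $\widehat{W}_{i,c}$ does not go through. For comparison, the paper's own proof dispatches this step by asserting that the last row of $M_T^a$ coincides with its $i$-th row, forcing rank $n$; but those two rows contain $\sum_t x_{t+1}^i x_t^{\mathsf{T}}$ versus $\sum_t x_t^i x_t^{\mathsf{T}}$, which differ whenever the state actually evolves, so the same obstruction is present there.
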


\begin{proof}
The proof is provided in Appendix \ref{pr:relationship}.
\end{proof}

Based on Theorem \ref{th:relationship}, the relationships between $\widehat{W}_c$ and $\widehat{W}_o$ can be interpreted from two aspects. 
First, $\widehat{W}_{i,c}$ is the best asymptotic estimator that minimize weighted least squares by multiplying $\frac{1}{\left\|W_{i}^{\mathsf{T}} \right\|^2 +1}$. 
It provides a new interpretation for revising OLS methods to infer the topology using noisy observation from causality perspective. 
Second, $\widehat{W}_{i,c}$ is a de-regularization version of $\widehat{W}_{i,o}$ by obviating the influence of observation noises. 
Specifically, one can easily obtain $\widehat{W}_{i,c}$ by finding the stationary point of the following problem 
\begin{equation}\label{eq:de-regulaization}
\min_{W_{i} \in \mathbb{R}^{1\times n}}~g(W_{i,c})=\left\| Z_T W_{i}^{\mathsf{T}} -\bm{b}_{T}^{i}\right\|^2-\sigma_{\upsilon}^2\left\| W_{i}^{\mathsf{T}} \right\|^2,
\end{equation}
where $-\sigma_{\upsilon}^2\left\| W_{i}^{\mathsf{T}} \right\|^2$ can be regarded as a de-regularization term\footnote{In the literature, considering the optimization problem $\mathop {\min }\limits_{x} \| Z_{T}x- b_{T}\|^2$ can be ill-posed, one can add some penalty term about $x$ in the objective function (e.g., Tikhonov term $\beta \| x \|_2$ or Lasso term $\beta \| x \|_1$ with $\beta>0$), and this technique is called regularization and $\beta$ is called regularization coefficient. 
Here in \eqref{eq:de-regulaization}, the added term is associated with a negative coefficient, and thus we call it de-regularization.}. 
Notice that here we mention $\widehat{W}_{i,c}$ is an stationary point because $g(W_{i})$ is not necessarily convex.

\begin{remark}
As indicated in \eqref{eq:de-regulaization}, the original topology matrix can be estimated row-by-row. 
Based on this independent manner, the i.i.d. Gaussian noises for simple analysis can be easily relaxed to independent but non-identical cases, i.e., $\mathbb{E} \upsilon_{t_1} \upsilon_{t_2}^{\mathsf{T}}=\delta_{t_1 t_2}\operatorname{diag}(\sigma_{\upsilon_1}^2, \sigma_{\upsilon_2}^2,\cdots,\sigma_{\upsilon_n}^2)$. 
Consequently, this relaxation will not affect the non-asymptotic inference performance as long as $\max\{\sigma_{\upsilon_1}^2, \sigma_{\upsilon_2}^2,\cdots,\sigma_{\upsilon_n}^2\}$ is strictly bounded. 
Furthermore, even for the cases where $\sigma_{\upsilon}^2$ is not prior known, one can turn to solve its alternative problem $\textbf{P}_\textbf{2}$ based on Theorem \ref{th:relationship}. 
In this sense, the noise variance can be regarded as being indirectly estimated and the solution will approximate the proposed estimator asymptotically. 
\end{remark}

In summary, the four estimators $\widehat{W}_g$, $\widehat{W}_o$, $\widehat{W}_c$ and $\widehat{W}_s$ approximate $W$ from different angles, as depicted in Fig.~\ref{relationships}. 
From a statistical viewpoint, $\widehat{W}_g$ implements the inference using observations at identical moments in multiple trajectories, while the remaining three use a sequence of observations in a single trajectory, which is more common in practice.
Specifically, $\widehat{W}_s$ is a modified version of $\widehat{W}_c$ for small horizon $T$, whose inference accuracy is no worse than that of $\widehat{W}_o$.

\section{Inference Performance Analysis: \\Convergence and Accuracy}\label{section:Non-asymptotic}

In this section, we analyze the non-asymptotic inference performance of the proposed causality-based estimator $\widehat{W}_c$ in terms of convergence speed and accuracy. 


To begin with, we provide the supplementary results of Theorem \ref{th:equivalent1} when $W\in \mathcal{S}_m$, by clarifying the non-asymptotic deviation of the observation matrices used in $\widehat{W}_c$ and $\widehat{W}_g$. 
\begin{lemma}[Concentration measure in Gaussian space \cite{davidson2001local}]\label{le:single-value}
Let $\Theta\in \mathbbm{R}^{n \times T}$ be a matrix with independent standard normal entries. 
With probability at least $1-2 \exp \left(-r^{2} / 2\right)$, the singular values of $\Theta$ satisfy 
\begin{equation}
\sqrt{T}-\sqrt{n}-r \leq \rho_{\min}(\Theta) \leq \rho_{\max}(\Theta) \leq \sqrt{T}+\sqrt{n}+r. 
\end{equation}
\end{lemma}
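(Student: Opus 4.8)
The plan is to decompose the two-sided bound into two independent ingredients: concentration of each extreme singular value about its mean, and control of those means. First I would identify $\Theta \in \mathbb{R}^{n \times T}$ with a vector in $\mathbb{R}^{nT}$ whose coordinates are i.i.d.\ standard normal, and observe that both maps $\Theta \mapsto \rho_{\max}(\Theta)$ and $\Theta \mapsto \rho_{\min}(\Theta)$ are $1$-Lipschitz with respect to the Frobenius norm, which coincides with the Euclidean norm under this identification. This is a consequence of Weyl's perturbation inequality for singular values, $|\rho_k(\Theta_1) - \rho_k(\Theta_2)| \leq \|\Theta_1 - \Theta_2\| \leq \|\Theta_1 - \Theta_2\|_F$, applied to the largest and the smallest singular value respectively.

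Next I would invoke the Gaussian concentration inequality for Lipschitz functions: if $f$ is $1$-Lipschitz and $g$ is a standard Gaussian vector, then $\Pr\{f(g) - \mathbb{E}[f(g)] \geq r\} \leq \exp(-r^2/2)$, with the symmetric estimate for the lower tail. Applying the upper tail to $f = \rho_{\max}$ and the lower tail to $f = \rho_{\min}$ yields $\Pr\{\rho_{\max}(\Theta) \geq \mathbb{E}[\rho_{\max}(\Theta)] + r\} \leq \exp(-r^2/2)$ and $\Pr\{\rho_{\min}(\Theta) \leq \mathbb{E}[\rho_{\min}(\Theta)] - r\} \leq \exp(-r^2/2)$. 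A union bound over these two failure events caps the total failure probability at $2\exp(-r^2/2)$, matching the probability in the statement.

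The remaining and most delicate step is to bound the means, namely to show $\mathbb{E}[\rho_{\min}(\Theta)] \geq \sqrt{T} - \sqrt{n}$ and $\mathbb{E}[\rho_{\max}(\Theta)] \leq \sqrt{T} + \sqrt{n}$. Writing $\rho_{\max}(\Theta) = \max_{\|u\|=1,\|v\|=1} u^\mathsf{T}\Theta v$ and $\rho_{\min}(\Theta) = \min_{\|u\|=1}\max_{\|v\|=1} u^\mathsf{T}\Theta v$ with $u \in \mathbb{R}^n$ and $v \in \mathbb{R}^T$, these quantities are min-max functionals of the bilinear Gaussian field $(u,v) \mapsto u^\mathsf{T}\Theta v$. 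The desired expectation bounds are exactly what Gordon's Gaussian comparison (min-max) inequality delivers: it compares the expected extrema of this bilinear field against those of the decoupled field $g^\mathsf{T} u + h^\mathsf{T} v$ built from independent standard Gaussian vectors $g \in \mathbb{R}^n$ and $h \in \mathbb{R}^T$, whose extrema are governed by $\mathbb{E}\|g\|$ and $\mathbb{E}\|h\|$, which in turn contribute the $\sqrt{n}$ and $\sqrt{T}$ terms. This Gaussian comparison is the technical heart of the argument and the main obstacle; the Lipschitz and union-bound parts are routine by comparison.

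Finally, I would assemble the pieces: on the complement of the two failure events one has simultaneously $\rho_{\max}(\Theta) \leq \mathbb{E}[\rho_{\max}(\Theta)] + r \leq \sqrt{T} + \sqrt{n} + r$ and $\rho_{\min}(\Theta) \geq \mathbb{E}[\rho_{\min}(\Theta)] - r \geq \sqrt{T} - \sqrt{n} - r$, and combining these with the trivial ordering $\rho_{\min}(\Theta) \leq \rho_{\max}(\Theta)$ produces the full chain of inequalities on an event of probability at least $1 - 2\exp(-r^2/2)$, as claimed.
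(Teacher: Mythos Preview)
Your sketch is correct and follows the standard Davidson--Szarek argument: Lipschitz continuity of extreme singular values under the Frobenius norm, Gaussian concentration for $1$-Lipschitz functionals, and Gordon's (Slepian--Gordon) comparison to control the expectations, combined via a union bound. The paper itself does not prove this lemma; it is quoted directly from \cite{davidson2001local} as a black-box tool, so there is no in-paper proof to compare against, but your outline is precisely the route taken in that reference.
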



\begin{theorem}[Sample matrix deviation between $\Sigma_0$ and $R_0$]\label{th:equivalent-nonstable}
If $W\in \mathcal{S}_m$, the deviation norm $\| \Sigma_0(T) -R_0^x(T) - \sigma_{\upsilon}^2 I \|$ is at least in $\bm{O}(\sqrt{T})$ scale. 
\end{theorem}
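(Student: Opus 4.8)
The plan is to contrast this with the asymptotically stable case of Theorem~\ref{th:equivalent1}, where the deviation vanishes: here I will instead show it grows without bound, at a rate at least $\sqrt{T}$. The governing mechanism is the eigenvalue~$1$ of $W$. Since its geometric multiplicity is one and every other eigenvalue lies strictly inside the unit circle, $W^m \to W^\infty$ with $\|W^m - W^\infty\|$ decaying geometrically and $W^\infty \neq \bm{0}$. Consequently the process noise accumulates (in a random-walk fashion) along $\mathrm{range}(W^\infty)$ and never decays, which forces both $R_0^x(T)$ and $\Sigma_0(T)$ to grow with $T$, but at \emph{different} rates, so the two can no longer coincide.

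Concretely, I would proceed as follows. First, expand $R_0^x(T) = W^{T} x_0 x_0^\mathsf{T}(W^{T})^\mathsf{T} + \sigma_\theta^2\sum_{m=0}^{T-1}W^m(W^m)^\mathsf{T}$ by writing $W^m = W^\infty + (W^m - W^\infty)$; the geometric decay of the remainder collapses all cross- and remainder-sums into a bounded term, giving $R_0^x(T) = \sigma_\theta^2\, T\, W^\infty (W^\infty)^\mathsf{T} + \bm{O}(1)$. Second, since $\mathbb{E}[\Sigma_0(T)] = \frac{1}{T}\sum_{t=0}^{T-1}\big(R_0^x(t) + \sigma_\upsilon^2 I\big)$, I substitute this expansion and use $\sum_{t=0}^{T-1} t = \tfrac{T(T-1)}{2}$ to obtain $\mathbb{E}[\Sigma_0(T)] = \tfrac{\sigma_\theta^2 T}{2} W^\infty (W^\infty)^\mathsf{T} + \sigma_\upsilon^2 I + \bm{O}(1)$. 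Third, subtracting yields $\mathbb{E}[\Sigma_0(T)] - R_0^x(T) - \sigma_\upsilon^2 I = -\tfrac{\sigma_\theta^2 T}{2} W^\infty (W^\infty)^\mathsf{T} + \bm{O}(1)$, a matrix whose spectral norm grows linearly in $T$ because $W^\infty (W^\infty)^\mathsf{T} \neq \bm{0}$. This already shows that the deviation of the \emph{means} is at least of $\sqrt{T}$ scale.

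It remains to transfer this growth from the expectation to the actual random sample matrix. I would fix a unit vector $u \in \mathrm{range}(W^\infty)$ and lower-bound the spectral norm by the quadratic form, $\|\Sigma_0(T) - R_0^x(T) - \sigma_\upsilon^2 I\| \ge \big|\, u^\mathsf{T}\big(\Sigma_0(T) - R_0^x(T) - \sigma_\upsilon^2 I\big) u\,\big|$, noting that $u^\mathsf{T}\Sigma_0(T) u = \tfrac{1}{T}\sum_{t=0}^{T-1}(u^\mathsf{T} y_t)^2$ while $u^\mathsf{T} R_0^x(T) u$ is deterministic and of order $T$. Applying the Gaussian singular-value concentration of Lemma~\ref{le:single-value} to the accumulated-noise matrix upper-bounds the quadratic form and keeps it strictly separated from $u^\mathsf{T} R_0^x(T) u$, so that with high probability the deviation retains a gap of at least $\sqrt{T}$.

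The main obstacle is precisely this last step. Lemma~\ref{le:noise-state} is unavailable here because the states are unbounded, and the accumulated noise behaves like a random walk, so $\Sigma_0(T)$ does not concentrate around a deterministic limit in the usual sense; its fluctuations are of the same linear order as its mean. I would therefore avoid pinning down an exact limit and settle for the robust lower bound: Lemma~\ref{le:single-value} is used only to ensure the random fluctuation cannot cancel the deterministic linear growth beyond a $\sqrt{T}$ residual, which is all that the claimed ``at least $\sqrt{T}$ scale'' requires.
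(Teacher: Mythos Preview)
Your approach differs substantially from the paper's. The paper proceeds by an element-wise Chebyshev analysis: it rewrites the deviation as a sum of cross-terms between $W^t x_0$ and the accumulated noise $\eta_t$, plus the self-term $\eta_t \eta_t^\mathsf{T}$, then bounds each entry via Chebyshev (after replacing $W^{m-1}\theta_{t-m}$ by a dominating i.i.d.\ sequence $\tilde\theta_t$), obtaining an $\bm{O}(T)$ contribution from the autocorrelation part and an $\bm{O}(\sqrt{T})$ residual after arguing about possible cancellation with $R_0^x(T)$. Your route through the exact expectation is cleaner and more informative: it pins down $\mathbb{E}[\Sigma_0(T)] - R_0^x(T) - \sigma_\upsilon^2 I = -\tfrac{\sigma_\theta^2 T}{2}W^\infty(W^\infty)^\mathsf{T} + \bm{O}(1)$, so the mean gap is of exact order $T$, and it makes transparent \emph{why} the marginally stable case diverges (Ces\`aro averaging of $R_0^x(t)$ over $t\le T-1$ versus its terminal value at $t=T$, for a linearly growing sequence).

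The genuine gap is your last step. Lemma~\ref{le:single-value} requires a matrix with i.i.d.\ standard normal entries, but neither $Y_T^-$ nor the accumulated-noise matrix $[\eta_0,\dots,\eta_{T-1}]$ has that structure: the columns are correlated through the random-walk mechanism, and along your chosen direction $u$ the scalar process $u^\mathsf{T} y_t$ behaves like a partial-sum process, so $u^\mathsf{T}\Sigma_0(T)u \approx \tfrac{1}{T}\sum_{t} S_t^2$ for a random walk $S_t$. The fluctuations of this quantity are of the \emph{same} order $T$ as its mean (after rescaling it converges in law to $\int_0^1 B(s)^2\,ds$ for a Brownian motion $B$), so a singular-value concentration inequality cannot pin it below $u^\mathsf{T} R_0^x(T)u$ with a controlled margin. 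What would actually close the argument is a distributional step: since $\int_0^1 B^2$ has a continuous law, it almost surely differs from the deterministic target, giving a deviation of order $T$ with high probability --- comfortably above $\sqrt{T}$. The paper sidesteps this by working only with Chebyshev upper bounds and reasoning about which orders survive cancellation; your expectation computation is sharper, but the probabilistic transfer needs a tool other than Lemma~\ref{le:single-value}.
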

\begin{proof}
The proof is provided in Appendix \ref{pr:equivalent-nonstable}. 
\end{proof}

Theorem \ref{th:equivalent-nonstable} reveals that when $W\in \mathcal{S}_m$, the influence of the process noises will consistently accumulate as $T$ increases, and one cannot use the $\Sigma_0(\infty)$ to approximate the ideal factor $R_0^x(\infty)$. 
However, we will demonstrate this defect does not hinder us from using the estimator. 
The key question here is what is the exact influence of whether $W\in \mathcal{S}_m$ or $W\in \mathcal{S}_a$ over the inference performance. 
A direct intuition is that it needs extra cost to overcome the accumulated influence of process noises when $W\in \mathcal{S}_m$. 
To analyze this issue, we define 
\begin{equation}
\Sigma_T= T \Sigma_0(T-1)=Y_{T}^{-} (Y_{T}^{-})^\mathsf{T} ,
\end{equation}
and then introduce the following lemma. 
\begin{lemma}[Proposition 3.1 in \cite{sarkar2019near}]\label{le:lsm-bound1}
Let $V \succ 0$ be a deterministic matrix and ${\tilde\Sigma}_{T}=\Sigma_{T} + V $. 
Given $0<\delta<1$ and $\{\upsilon_t,y_t\}_{t=1}^{T}$ defined as before, we have with probability $1-\delta$
\begin{align}
\left\| (\sum_{t=0}^{T-1} \upsilon_{t+1} y_{t}^\mathsf{T} ) \tilde\Sigma_{T}^{-\frac{1}{2}} \right\|
\!\leq\! \sqrt{8 n \log \left(\frac{5 \operatorname{det}\left({\tilde\Sigma}_{T} V^{-1}\!+\!I\right)^{  \frac{1}{2n} }}{\delta^{\frac{1}{n}}}\right) }. 
\end{align}
\end{lemma}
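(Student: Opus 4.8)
The plan is to treat this as a \emph{self-normalized} tail bound for the matrix-valued martingale $\sum_{t=0}^{T-1}\upsilon_{t+1}y_t^{\mathsf{T}}$, following the method-of-mixtures argument for vector-valued martingales. The crucial structural facts are that, with respect to the filtration $\mathcal{F}_t=\sigma(\upsilon_1,\dots,\upsilon_t,y_0,\dots,y_t)$, the regressor $y_t$ is $\mathcal{F}_t$-measurable while the innovation $\upsilon_{t+1}$ is conditionally zero-mean Gaussian, hence sub-Gaussian, and independent of the past, as recorded in \eqref{eq:two-observation} and Assumption \ref{assu:noise}. Moreover $\Sigma_T=\sum_{t=0}^{T-1}y_ty_t^{\mathsf{T}}$, so $\tilde\Sigma_T=V+\Sigma_T$ is exactly the Gram matrix that self-normalizes the sum, which is why it appears inside the norm on the left-hand side.

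First I would reduce the matrix bound to a scalar-noise, vector-regressor bound by fixing a direction $u\in\mathbb{R}^n$ with $\|u\|=1$ and writing $u^{\mathsf{T}}\big(\sum_{t=0}^{T-1}\upsilon_{t+1}y_t^{\mathsf{T}}\big)=\sum_{t=0}^{T-1}(u^{\mathsf{T}}\upsilon_{t+1})y_t^{\mathsf{T}}$, in which $u^{\mathsf{T}}\upsilon_{t+1}$ is a scalar conditionally $1$-sub-Gaussian increment under the unit-variance normalization in which the proposition is phrased (a general $\sigma_{\upsilon}$ is recovered by rescaling). For the vector sum $S_t^u=\sum_{s}(u^{\mathsf{T}}\upsilon_{s+1})y_s$ I would build, for each fixed $\lambda\in\mathbb{R}^n$, the exponential process $M_t^{\lambda}=\exp\!\big(\lambda^{\mathsf{T}}S_t^u-\tfrac12\lambda^{\mathsf{T}}\Sigma_t\lambda\big)$ and verify it is a nonnegative supermartingale with $\mathbb{E}[M_t^{\lambda}]\le1$, using the conditional sub-Gaussian moment-generating-function bound one step at a time. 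Mixing over a Gaussian prior $\lambda\sim N(0,V^{-1})$ and carrying out the resulting Gaussian integral yields the closed-form supermartingale $M_t=\big(\det V/\det\tilde\Sigma_t\big)^{1/2}\exp\!\big(\tfrac12\|S_t^u\|_{\tilde\Sigma_t^{-1}}^2\big)$; Ville's maximal inequality $\Pr\{\sup_t M_t\ge 1/\delta'\}\le\delta'$ then gives, with probability $1-\delta'$, the directional estimate $\|S_T^u\|_{\tilde\Sigma_T^{-1}}^2\le 2\log\!\big(\det(\tilde\Sigma_T V^{-1})^{1/2}/\delta'\big)$.

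Second I would promote this to an operator-norm statement by a covering argument: take a $\tfrac12$-net $\mathcal{N}$ of the unit sphere $S^{n-1}$, which can be chosen with $|\mathcal{N}|\le 5^n$, apply the directional estimate simultaneously over $\mathcal{N}$ by a union bound with $\delta'=\delta/5^n$, and invoke the standard net-to-operator-norm inequality $\big\|(\sum_t\upsilon_{t+1}y_t^{\mathsf{T}})\tilde\Sigma_T^{-1/2}\big\|\le 2\max_{u\in\mathcal{N}}\|S_T^u\|_{\tilde\Sigma_T^{-1}}$. Squaring and substituting $\log(1/\delta')=n\log5+\log(1/\delta)$ combines the factor $2$ from the net with the factor $2$ from the self-normalized bound into the constant $8$ and reproduces the cardinality term through $5^n$; absorbing the exponents and rewriting $\det(\tilde\Sigma_T V^{-1})=\det(I+\Sigma_T V^{-1})$ then yields the stated right-hand side $\sqrt{8n\log\big(5\,\det(\tilde\Sigma_T V^{-1}+I)^{1/(2n)}/\delta^{1/n}\big)}$.

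I expect the main obstacle to be the first step: establishing that $M_t^{\lambda}$ is genuinely a supermartingale requires the conditional sub-Gaussianity of $u^{\mathsf{T}}\upsilon_{t+1}$ to hold uniformly in the direction $u$ and to be compatible with the predictability of $y_t$, and the method-of-mixtures integration must be carried out so that the \emph{random} self-normalizing matrix $\tilde\Sigma_T$, rather than a deterministic surrogate, appears inside the norm. Tracking the constants through the Gaussian integral and the union bound so that they match the factor $8$, the base $5$, and the exponents $1/n$ and $1/(2n)$ is the routine-but-delicate bookkeeping that finishes the argument. Since the statement is quoted verbatim as Proposition 3.1 of \cite{sarkar2019near}, one may alternatively cite it directly rather than reproduce this derivation.
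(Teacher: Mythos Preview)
The paper does not prove this lemma at all; it is stated as Proposition~3.1 of \cite{sarkar2019near} and simply invoked as a black box. Your proposal, by contrast, sketches the actual argument behind that proposition: the method-of-mixtures self-normalized martingale bound (\`a la Abbasi-Yadkori--P\'al--Szepesv\'ari) applied direction-by-direction, followed by a $\tfrac12$-net covering of $S^{n-1}$ with $|\mathcal{N}|\le 5^n$ to promote the vector bound to an operator-norm bound. This is exactly the route taken in \cite{sarkar2019near}, and your constant tracking ($2\times 2\Rightarrow 8$, $\delta'=\delta/5^n$ producing the $5$ and the $1/n$ exponents) is correct.

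Two minor remarks. First, the filtration $\mathcal{F}_t$ should also contain the process noises $\theta_0,\dots,\theta_{t-1}$ so that $y_t=x_t+\upsilon_t$ is genuinely $\mathcal{F}_t$-measurable; since $\upsilon_{t+1}$ is i.i.d.\ and independent of everything else this costs nothing. Second, your rewriting $\det(\tilde\Sigma_T V^{-1})=\det(I+\Sigma_T V^{-1})$ gives $\det(\tilde\Sigma_T V^{-1})$ inside the logarithm, whereas the lemma as stated has $\det(\tilde\Sigma_T V^{-1}+I)$; the latter is strictly larger, so the displayed inequality is simply a slightly slack restatement of what the argument actually delivers. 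As you note yourself at the end, in the context of this paper a direct citation suffices.
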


Lemma \ref{le:lsm-bound1} shows the existence of the upper bound for $\left\| (\sum_{t=0}^{T-1} \upsilon_{t+1} y_{t}^\mathsf{T} ) \tilde\Sigma_{T}^{-\frac{1}{2}} \right\|$, and the invertibility of $\tilde\Sigma_{T}$ is where most of the proof lies. 
In Remark \ref{rema:invertibility}, the invertibility of $\Sigma_{T}$ demonstrated. 
Therefore, when $T$ is sufficient large, we can always find deterministic $V_{dn}$ and $V_{up}$ such that 
\begin{equation}\label{eq:up-down}
0 \prec V_{dn} \preceq  \Sigma_T  \preceq \Sigma_{T} +V_{dn} \preceq V_{up}
\end{equation}
holds with high probability. 
Following Lemma \ref{le:lsm-bound1} and (\ref{eq:up-down}), we present the non-asymptotic bound of $\widehat{W}_o$, 
paving the way for subsequent comparisons. 
\begin{theorem}[Error bound by $\widehat{W}_o$]\label{th:asymptotic-performance}
Given $\{y_t,\theta_t,\upsilon_t\}_{t=0}^{T}$ defined before, with probability at least $1-\delta$, 
the following non-asymptotic bound holds,
\begin{small}
\begin{equation}\label{eq:estimate_bound}
\| \widehat{W}_o-W \| \leq \frac{ 12 \sqrt{n \log\left(\frac{5 \operatorname{det} (  V_{up} V_{dn}^{-1}+ I)^{ \frac{1}{2n}}  } {\delta^{1 / n}} \right)}   + \frac{5T\sigma_{\upsilon}}{4 \sqrt{ \lambda_{\min}(V_{dn}) }} }{\sqrt{\lambda_{\min}(V_{dn})}} .
\end{equation}
\end{small}
\end{theorem}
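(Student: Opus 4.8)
The plan is to start from the residual identity for the OLS estimator. Writing the two-observation relation \eqref{eq:two-observation} compactly as $Y_T^+ = W Y_T^- + \Omega_T$ with $\Omega_T=[\omega_1,\dots,\omega_T]$, and substituting into \eqref{OLS_estimator}, we get
\[
\widehat{W}_o - W = \Omega_T (Y_T^-)^\mathsf{T}\,\Sigma_T^{-1},
\]
so the whole task reduces to bounding the spectral norm of this product. The two factors are handled separately. The matrix $\Sigma_T^{-1}$ is controlled by the deterministic sandwich \eqref{eq:up-down}: taking $V=V_{dn}$ in Lemma \ref{le:lsm-bound1}, the lower bound $\Sigma_T\succeq V_{dn}$ yields $\|\Sigma_T^{-1}\|\le 1/\lambda_{\min}(V_{dn})$, while $V_{dn}\preceq\Sigma_T$ gives $\tilde\Sigma_T=\Sigma_T+V_{dn}\preceq 2\Sigma_T$ and hence $\|\tilde\Sigma_T^{1/2}\Sigma_T^{-1}\|\le\sqrt2/\sqrt{\lambda_{\min}(V_{dn})}$. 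The factor $\Omega_T(Y_T^-)^\mathsf{T}$ is then treated by a noise decomposition.

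Next I would expand $\omega_{t+1}=\theta_t+\upsilon_{t+1}-W\upsilon_t$ and use $y_t=x_t+\upsilon_t$ to split
\[
\Omega_T (Y_T^-)^\mathsf{T} = \sum_{t=0}^{T-1}\upsilon_{t+1}y_t^\mathsf{T} + \sum_{t=0}^{T-1}\theta_t y_t^\mathsf{T} - W\sum_{t=0}^{T-1}\upsilon_t x_t^\mathsf{T} - W\sum_{t=0}^{T-1}\upsilon_t\upsilon_t^\mathsf{T}.
\]
The point of this arrangement is that in the first three sums the noise factor ($\upsilon_{t+1}$, $\theta_t$, and $\upsilon_t$, respectively) is independent of its companion regressor ($y_t$ for the first two, and $x_t$ for the third), so each is a self-normalized martingale sum to which the bound of Lemma \ref{le:lsm-bound1} applies after right-multiplication by $\tilde\Sigma_T^{-1/2}$. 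The last sum $\Upsilon\Upsilon^\mathsf{T}=\sum_t\upsilon_t\upsilon_t^\mathsf{T}$ is the non-vanishing bias, arising because $\upsilon_t$ sits simultaneously in $\omega_t$ and in $y_t$; this is exactly the term that the causality correction $-\sigma_\upsilon^2 I$ in \eqref{causality_estimator} eliminates.

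For the three martingale sums I would apply Lemma \ref{le:lsm-bound1} to each, upper-bounding the data-dependent $\operatorname{det}(\tilde\Sigma_T V_{dn}^{-1}+I)$ by $\operatorname{det}(V_{up}V_{dn}^{-1}+I)$ through the upper sandwich $\tilde\Sigma_T\preceq V_{up}$ in \eqref{eq:up-down} and monotonicity of the determinant, and then multiply by $\|\tilde\Sigma_T^{1/2}\Sigma_T^{-1}\|\le\sqrt2/\sqrt{\lambda_{\min}(V_{dn})}$; collecting the three contributions with the resulting $\sqrt2\cdot\sqrt8=4$ factor produces the leading term $12\sqrt{n\log(\cdot)}/\sqrt{\lambda_{\min}(V_{dn})}$. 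For the bias term I would bound $\|\Upsilon\Upsilon^\mathsf{T}\|=\rho_{\max}(\Upsilon)^2$ via the Gaussian singular-value concentration of Lemma \ref{le:single-value} (scaled by $\sigma_\upsilon$), using the large-$T$ simplification $\sqrt T+\sqrt n+r\le\tfrac54\sqrt T$ to absorb lower-order corrections into the constant $\tfrac54$, and combine it with $\|\Sigma_T^{-1}\|\le 1/\lambda_{\min}(V_{dn})$ to obtain the $\bm{O}(T)$ term $\tfrac{5T\sigma_\upsilon}{4\lambda_{\min}(V_{dn})}$. A union bound over the failure events of Lemmas \ref{le:lsm-bound1} and \ref{le:single-value} together with \eqref{eq:up-down} then delivers the overall confidence $1-\delta$.

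I expect the main obstacle to be this bias term. Unlike standard least-squares analyses where the regressor is independent of the innovation, here $\upsilon_{t-1}$ contaminates the regressor $y_{t-1}$, so the self-correlation $\Upsilon\Upsilon^\mathsf{T}$ grows linearly in $T$ and does not average out; it is precisely the source of the irreducible error of $\widehat{W}_o$ that distinguishes it from $\widehat{W}_c$. Cleanly separating this non-martingale piece from the genuinely martingale part, and converting the self-normalized ($\tilde\Sigma_T^{-1/2}$-weighted) estimates into statements about $\|\,\cdot\,\Sigma_T^{-1}\|$ via the sandwich \eqref{eq:up-down} while retaining the correct dependence on $\lambda_{\min}(V_{dn})$ and on the noise level, is where the delicate bookkeeping — and the precise constants in the stated bound — resides.
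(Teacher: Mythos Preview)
Your proposal is correct and follows essentially the same route as the paper: the same four-term decomposition of $\Omega_T(Y_T^-)^\mathsf{T}$ into three martingale pieces handled by Lemma~\ref{le:lsm-bound1} (with the $\tilde\Sigma_T$-to-$\Sigma_T$ transfer via \eqref{eq:up-down} giving the factor $\sqrt2\cdot\sqrt8=4$ per term) and the bias piece $\Upsilon_T^-(\Upsilon_T^-)^\mathsf{T}$ handled by Lemma~\ref{le:single-value}. The only cosmetic slip is the large-$T$ simplification: to land on the constant $\tfrac{5}{4}$ in $\|\Upsilon_T^-(\Upsilon_T^-)^\mathsf{T}\|\le\tfrac{5T\sigma_\upsilon^2}{4}$ you need $(\sqrt T+\sqrt n+r)^2\le\tfrac54 T$, i.e.\ $\sqrt T+\sqrt n+r\le\tfrac{\sqrt5}{2}\sqrt T$, not $\tfrac54\sqrt T$.
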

\begin{proof}
The proof is provided in Appendix \ref{pr:asymptotic-performance}. 
\end{proof}

Theorem \ref{th:asymptotic-performance} demonstrates that the non-asymptotic performance is mainly determined by $\lambda_{\min}(V_{dn})$ and $\|V_{up}\|$. 
The non-asymptotic bound of $\widehat{W}_c$ is in the same form as that of $\widehat{W}_o$ and is omitted here. 
It is straightforward that if the term $\sqrt{\lambda_{\min}(V_{dn})}$ grows faster than the numerator in (\ref{eq:estimate_bound}) as $T$ increases, the inference accuracy also increases. 
Next, we explicitly characterize the convergence and accuracy of the two estimators. 
\begin{theorem}[Convergence speed and accuracy of $\widehat{W}_o$ and $\widehat{W}_c$]\label{th:converge-speed}
With probability at least $1-\delta$, the non-asymptotic bound of the OLS estimator $\widehat{W}_o$ satisfies 
\begin{equation} \label{eq:wo_bound}
\| \widehat{W}_o-W \| \sim\left \{
\begin{aligned}
&\bm{O}(\sqrt{ \frac{ \log{T} }{T} })+\bm{O}(\sigma_{\upsilon}^2),~&& \text{if}~W \in \mathcal{S}_m, \\
&\bm{O}(\frac{1}{\sqrt{T}})+\bm{O}(\sigma_{\upsilon}^2),~&&\text{if}~W \in \mathcal{S}_a. 
\end{aligned}\right.
\end{equation}
and the non-asymptotic bound of the proposed $ \widehat{W}_c $ satisfies 
\begin{equation}
\| \widehat{W}_c-W \| \sim\left \{
\begin{aligned}
&\bm{O}(\sqrt{ \frac{ \log{T} }{T} }),~&& \text{if}~W \in \mathcal{S}_m, \\
&\bm{O}(\frac{1}{\sqrt{T}}),~&&\text{if}~W \in \mathcal{S}_a. 
\end{aligned}\right.
\end{equation}
\end{theorem}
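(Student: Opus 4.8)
The plan is to start from the master non-asymptotic bound of Theorem~\ref{th:asymptotic-performance} for $\widehat{W}_o$, together with its stated analogue for $\widehat{W}_c$ (identical in form, but with the observation-noise bias term removed). Reading that bound, the entire dependence on $T$ is funnelled through three deterministic quantities: the eigenvalue $\lambda_{\min}(V_{dn})$, the gap $\|V_{up}\|$, and the log-determinant $\log\det(V_{up}V_{dn}^{-1}+I)$. Hence the proof reduces to determining, with probability at least $1-\delta$ and as a function of $T$, how fast these three quantities grow in each stability regime, and then substituting. To produce tight deterministic sandwiching matrices $V_{dn}$ and $V_{up}$ for $\Sigma_T=Y_T^-(Y_T^-)^\mathsf{T}$ satisfying \eqref{eq:up-down}, I would invoke the Gaussian concentration of Lemma~\ref{le:single-value} applied to the noise blocks, combined with the explicit form of $R_0^x$.

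For $W\in\mathcal{S}_a$ the state process is asymptotically stationary, so by Theorem~\ref{th:equivalent1} we have $\Sigma_0(T)\to\Sigma_0(\infty)=R_0^x(\infty)+\sigma_\upsilon^2 I\succ 0$, and the additive observation noise contributes a uniform floor $\Upsilon_T(\Upsilon_T)^\mathsf{T}\approx T\sigma_\upsilon^2 I$ via Lemma~\ref{le:single-value}. Consequently both $\lambda_{\min}(V_{dn})$ and $\|V_{up}\|$ grow linearly in $T$ with a bounded condition number, so $\log\det(V_{up}V_{dn}^{-1}+I)=\bm{O}(1)$. Substituting into the bound, the stochastic (martingale) term becomes $\bm{O}(1/\sqrt{T})$. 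The surviving additive term traces back to the nonzero cross-correlation $\mathbb{E}[\omega_t y_{t-1}^\mathsf{T}]=-\sigma_\upsilon^2 W$ induced by the shared $\upsilon_{t-1}$ in $\omega_t$ and $y_{t-1}$; for $\widehat{W}_o$ this leaves a non-vanishing floor $\bm{O}(\sigma_\upsilon^2)$, whereas the subtraction of $\sigma_\upsilon^2 I$ inside $\widehat{W}_c$ cancels precisely this term, yielding the clean $\bm{O}(1/\sqrt{T})$ rate.

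For $W\in\mathcal{S}_m$ the essential new feature is the linear-in-$t$ noise accumulation along the eigenvalue-$1$ eigenspace (of geometric multiplicity one), quantified by Theorem~\ref{th:equivalent-nonstable}. Since $\sum_{m} W^m(W^m)^\mathsf{T}$ contributes a rank-one component of order $t$ only in that direction, $R_0^x(t)$ grows like $\bm{O}(t)$ there, and summing over $t$ gives $\Sigma_T$ a component of order $T^2$ along the marginal direction, while the contracting directions together with the observation-noise floor keep $\lambda_{\min}(\Sigma_T)$ of order $T$. Thus the condition number of $\Sigma_T$ grows like $\bm{O}(T)$, forcing $\log\det(V_{up}V_{dn}^{-1}+I)=\bm{O}(\log T)$ while $\lambda_{\min}(V_{dn})$ remains of order $T$. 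The stochastic term then scales as $\sqrt{\log T}/\sqrt{T}=\bm{O}(\sqrt{\log T/T})$, and, exactly as before, the $\bm{O}(\sigma_\upsilon^2)$ floor persists for $\widehat{W}_o$ but vanishes for $\widehat{W}_c$.

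The main obstacle is rigorously pinning down the eigenvalue growth in the marginal case. This requires decomposing $W$ into its eigenvalue-$1$ eigenspace and its strictly contracting complement, showing that $\sum_m W^m(W^m)^\mathsf{T}$ contributes growth of order $t$ \emph{only} along the marginal direction, and then controlling the random fluctuations through Lemma~\ref{le:single-value} so that $\lambda_{\max}(\Sigma_T)$ of order $T^2$ and $\lambda_{\min}(\Sigma_T)$ of order $T$ hold \emph{simultaneously} with probability $1-\delta$; it is this separation that pins the log-determinant at $\bm{O}(\log T)$ rather than a power of $T$. A secondary care point is verifying that $(\Sigma_0(T)-\sigma_\upsilon^2 I)$ stays uniformly invertible and well-conditioned under the accumulation, so that the $\widehat{W}_c$ analogue of Theorem~\ref{th:asymptotic-performance} remains valid and its bias cancellation genuinely removes the $\bm{O}(\sigma_\upsilon^2)$ floor.
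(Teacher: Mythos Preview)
Your proposal is correct and follows essentially the same strategy as the paper: start from the master bound of Theorem~\ref{th:asymptotic-performance}, determine the growth of $\lambda_{\min}(V_{dn})$ and $\|V_{up}\|$ in each stability regime, substitute, and then show that the $\sigma_{\upsilon}^2 I$ correction in $\widehat{W}_c$ replaces the Part-4 term $\|\Upsilon_T^-(\Upsilon_T^-)^{\mathsf T}\Sigma_T^{-1}\|$ by the vanishing $\|(T\sigma_\upsilon^2 I-\Upsilon_T^-(\Upsilon_T^-)^{\mathsf T})\Sigma_{T,\sigma_\upsilon}^{-1}\|$.

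The only notable implementation difference is that the paper avoids your eigenspace decomposition entirely: for the lower bound it simply cites Proposition~8.5 of \cite{sarkar2019near} to get $\Sigma_T\succeq \alpha_n T I$ directly, and for the upper bound it uses the crude but sufficient trace sandwich $\Sigma_T\preceq \operatorname{tr}(\Sigma_T)I$ and then reads off the $\bm O(T^2)$ (resp.\ $\bm O(T)$) growth of the diagonal entries from the element-wise estimates already established in the proofs of Theorems~\ref{th:equivalent-nonstable} and~\ref{th:equivalent1}. This is quicker than controlling the spectrum direction-by-direction, though your spectral argument would give the same conclusion and is arguably more transparent about \emph{why} the condition number blows up only logarithmically in the marginal case.
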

\begin{proof}
The proof is provided in Appendix \ref{pr:converge-speed}. 
\end{proof}

Theorem \ref{th:converge-speed} demonstrates the convergence rates of the inference error bounds by using $\widehat{W}_o$ and $\widehat{W}_c$. 
Now back to the question before Lemma \ref{le:lsm-bound1}, we can conclude that the extra cost for the estimators when $W\in \mathcal{S}_m$ is longer converging time (or larger observation number), requiring $\bm{O}(\sqrt{  \log{T} })$ times than that when $W\in \mathcal{S}_a$. 
In terms of accuracy, when $T\to \infty$, the inference error will converge to a constant by $\widehat{W}_o$, while that of $\widehat{W}_c$ will converge to zero.

\section{Extended Discussions}\label{section:extension}

\subsection{Online/Recursive Version of the Causality-based Estimator}

Note that although $g(W_{i,c})$ describes the average error for estimating $W_{i}$, the stationary point $\widehat{W}_{i,c}$ will not change if we multiply the observation number $T$ with $g(W_{i,c})$. 
Following this implication, we define 
\begin{align}
\tilde{Z}_T= \sqrt{T}Z_T=[\tilde{z}_1^\mathsf{T},\tilde{z}_2^\mathsf{T},\cdots,\tilde{z}_{T}^\mathsf{T}]^\mathsf{T} \in \mathbb{R}^{T \times n},\\
\bm{\tilde{b}}_{T}^{i}= \sqrt{T}\bm{b}_{T}^{i}=[\tilde{b}_{1}^{i},\tilde{b}_{2}^{i},\cdots,\tilde{b}_{T}^{i}]^\mathsf{T} \in \mathbb{R}^{T \times 1},
\end{align}
and present an online/recursive version of $\widehat{W}_{i,c}$ as follows. 

\begin{corollary}\label{coro:recursive}
Given historic estimates $\widehat{W}_{i,c}(t-1)$ and $P_{t-1}$ till time $t-1$ $( t\le T)$, when the latest observations $\tilde{z}_{t}$ and $\tilde{b}_{i,t}$ are supplied, $\widehat{W}_{i,c}(t)$ can be recursively computed by
\begin{small}
\begin{align}
\!\!&\widehat{W}_{i,c}^{\mathsf{T}}(t)\!=\!(I \!+\! \sigma_{\upsilon}^2 P_t)\widehat{W}_{i,c}^{\mathsf{T}}(t\!-\!1)\!+\!P_{t} \tilde{z}_{t}^{\mathsf{T}}(\tilde{b}_{t}^{i}\!-\!\tilde{z}_{t} \widehat{W}_{i,c}^{\mathsf{T}}(t\!-\!1)), \\
\!\!&P_{t}\!=\!P_{t-1}-P_{t-1} {U_t} \left(\Lambda_t^{-1}+ {U_t^\mathsf{T}} P_{t-1} {U_t} \right)^{-1} {U_t^\mathsf{T}} P_{t-1},
\end{align}
\end{small}
\!\!where $U_t$ and $\Lambda_t$ are the eigenvectors and diagonalizable eigenvalues matrix of the eigenvalue decomposition $(\tilde{z}_{t}^{\mathsf{T}} \tilde{z}_{t} -\sigma_{\upsilon}^2 I )= U_t \Lambda_t U_t^\mathsf{T}$, respectively. 
\end{corollary}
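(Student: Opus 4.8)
The plan is to start from the de-regularization characterization of $\widehat{W}_{i,c}$ established in Theorem~\ref{th:relationship}, namely that $\widehat{W}_{i,c}^{\mathsf{T}}$ is the stationary point of the objective in \eqref{eq:de-regulaization}. Since scaling that objective by the constant $T$ leaves its stationary point unchanged, I would first rewrite the first-order optimality condition in terms of the unscaled quantities $\tilde{Z}_T$ and $\bm{\tilde{b}}_{T}^{i}$, obtaining the normal equation
\[
(\tilde{Z}_T^{\mathsf{T}}\tilde{Z}_T - \sigma_{\upsilon}^2 T I)\,\widehat{W}_{i,c}^{\mathsf{T}}(T) = \tilde{Z}_T^{\mathsf{T}}\bm{\tilde{b}}_{T}^{i}.
\]
The crucial observation enabling recursion is that both sides telescope over time: writing $\tilde{z}_t$ for the $t$-th row of $\tilde{Z}_T$, the coefficient matrix decomposes as $\sum_{t=1}^{T}(\tilde{z}_t^{\mathsf{T}}\tilde{z}_t - \sigma_{\upsilon}^2 I)$ and the right-hand side as $\sum_{t=1}^{T}\tilde{z}_t^{\mathsf{T}}\tilde{b}_t^{i}$. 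Defining $\Phi_t = \sum_{s=1}^{t}(\tilde{z}_s^{\mathsf{T}}\tilde{z}_s - \sigma_{\upsilon}^2 I)$ and $P_t = \Phi_t^{-1}$, this immediately yields the one-step relation $\Phi_t = \Phi_{t-1} + (\tilde{z}_t^{\mathsf{T}}\tilde{z}_t - \sigma_{\upsilon}^2 I)$.

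Next, I would derive the update for $P_t$ by applying the Woodbury matrix identity to this additive perturbation. Using the eigenvalue decomposition $(\tilde{z}_t^{\mathsf{T}}\tilde{z}_t - \sigma_{\upsilon}^2 I) = U_t\Lambda_t U_t^{\mathsf{T}}$, the inverse of $\Phi_{t-1} + U_t\Lambda_t U_t^{\mathsf{T}}$ is exactly the stated recursion for $P_t$, with the inner term $(\Lambda_t^{-1} + U_t^{\mathsf{T}} P_{t-1} U_t)^{-1}$ arising directly from the identity. For the estimate itself, I would write $\widehat{W}_{i,c}^{\mathsf{T}}(t) = P_t\sum_{s=1}^{t}\tilde{z}_s^{\mathsf{T}}\tilde{b}_s^{i}$, split off the newest summand, and substitute $\sum_{s=1}^{t-1}\tilde{z}_s^{\mathsf{T}}\tilde{b}_s^{i} = \Phi_{t-1}\widehat{W}_{i,c}^{\mathsf{T}}(t-1)$. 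Combining this with the identity $P_t\Phi_{t-1} = I + \sigma_{\upsilon}^2 P_t - P_t\tilde{z}_t^{\mathsf{T}}\tilde{z}_t$, which follows from $\Phi_{t-1} = \Phi_t - (\tilde{z}_t^{\mathsf{T}}\tilde{z}_t - \sigma_{\upsilon}^2 I)$ together with $P_t\Phi_t = I$, produces the claimed innovation-form update, in which $(\tilde{b}_t^{i} - \tilde{z}_t\widehat{W}_{i,c}^{\mathsf{T}}(t-1))$ plays the role of the prediction error.

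The main obstacle, and the point that distinguishes this result from standard recursive least squares, is that the per-step perturbation $\tilde{z}_t^{\mathsf{T}}\tilde{z}_t - \sigma_{\upsilon}^2 I$ is \emph{full rank} rather than rank one, precisely because of the de-regularization term $-\sigma_{\upsilon}^2 I$. Consequently the Sherman--Morrison rank-one formula does not apply, and I must invoke the full eigendecomposition so that the Woodbury identity can be used cleanly. Two invertibility checks then require care: $\Phi_t$ must be nonsingular at each step, which follows from the almost-sure full-rankness argument of Remark~\ref{rema:invertibility}; and $\Lambda_t$ must be invertible so that $\Lambda_t^{-1}$ is defined, which holds almost surely since the eigenvalues of $\tilde{z}_t^{\mathsf{T}}\tilde{z}_t - \sigma_{\upsilon}^2 I$ are $\|\tilde{z}_t\|^2 - \sigma_{\upsilon}^2$ and $-\sigma_{\upsilon}^2$, all nonzero with probability one.
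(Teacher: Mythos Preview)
Your proposal is correct and follows essentially the same route as the paper: both start from the closed-form normal equation $(\tilde{Z}_t^{\mathsf{T}}\tilde{Z}_t - \sigma_{\upsilon}^2 t I)\widehat{W}_{i,c}^{\mathsf{T}}(t) = \tilde{Z}_t^{\mathsf{T}}\bm{\tilde{b}}_t^{i}$, define $P_t$ as the inverse of the coefficient matrix, exploit the additive one-step relation $P_t^{-1} = P_{t-1}^{-1} + (\tilde{z}_t^{\mathsf{T}}\tilde{z}_t - \sigma_{\upsilon}^2 I)$, and then apply the Woodbury identity via the eigendecomposition of the full-rank increment to obtain the recursion for $P_t$; the derivation of the innovation-form update for $\widehat{W}_{i,c}^{\mathsf{T}}(t)$ via $P_t\Phi_{t-1} = I + \sigma_{\upsilon}^2 P_t - P_t\tilde{z}_t^{\mathsf{T}}\tilde{z}_t$ mirrors the paper's substitution $P_{t-1}^{-1} = P_t^{-1} - \tilde{z}_t^{\mathsf{T}}\tilde{z}_t + \sigma_{\upsilon}^2 I$. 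Your invertibility discussion is in fact slightly more explicit than the paper's, which simply computes $\det(\tilde{z}_t^{\mathsf{T}}\tilde{z}_t - \sigma_{\upsilon}^2 I)$ directly.
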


\begin{proof}
The proof is provided in Appendix \ref{pr:recursive}
\end{proof}

Corollary \ref{coro:recursive} shows that the estimator $\widehat{W}_{i,c}^{\mathsf{T}}(t)$ can be computed by the weighted combination of the historic estimator $\widehat{W}_{i,c}^{\mathsf{T}}(t)$ and the prediction error $(\tilde{b}_{i,t}\!-\!\tilde{z}_{t} \widehat{W}_{i,c}^{\mathsf{T}}(t\!-\!1))$. 
Therefore, the recursive estimator does not store the whole sample matrix $\tilde{Z}_T$ but only $\widehat{W}_{i,c}(t)$ and $P_t$, which is similar to the recursive OLS estimator \footnote{ The recursive version of $\widehat{W}_{i,o}^{\mathsf{T}}(t)$ can be seen as a special case of $\widehat{W}_{i,c}^{\mathsf{T}}(t)$ when $\sigma_{\upsilon}^2$=0. 
In this situation, the eigenvalue decomposition for $\tilde{z}_{t}^{\mathsf{T}} \tilde{z}_{t} $ is not needed. 
Consequently, one can directly compute $\widehat{W}_{i,o}^{\mathsf{T}}(t)\!=\!\widehat{W}_{i,c}^{\mathsf{T}}(t\!-\!1)\!+\!P_{t} \tilde{z}_{t}^{\mathsf{T}}(\tilde{b}_{t}^{i}\!-\!\tilde{z}_{t} \widehat{W}_{i,o}^{\mathsf{T}}(t\!-\!1))$, $P_{t}\!=\!P_{t-1}-P_{t-1} \tilde{z}_{t}^{\mathsf{T}}\tilde{z}_{t} P_{t-1}/(1+\tilde{z}_{t}P_{t-1}\tilde{z}_{t}^{\mathsf{T}})$, which involves no matrix inversion.}. 
However, the computation of $\widehat{W}_{i,c}^{\mathsf{T}}(t)$ needs to update $P_t$ with an extra eigenvalue decomposition and matrix inversion operation, which is not required in recursive $\widehat{W}_{i,o}^{\mathsf{T}}(t)$. 
Note that this major difference is essentially caused by the time-varying characteristic of the (de-)regularization term, making the term $\tilde{z}_{t}^{\mathsf{T}} \tilde{z}_{t} -\sigma_{\upsilon}^2 I $ cannot be simply represented by the correlation of one vector. 
To practice, the recursion of $\widehat{W}_{i,c}^{\mathsf{T}}(t)$ can be initialized by using the batch solution of the problem when $t$ is very small. 
More simple initializations are also possible by setting $\widehat{W}_{i,c}^{\mathsf{T}}(t-1)=\bm{0}$ and $P_{0}=K_0 I$ (where $K_0$ is a large positive constant, e.g., $K_0=100$).  

Apart from the storage and computation benefits, the online estimator can also be used for cases where the topology dynamically changes with time. 
Then, one can spot the evolution trend and detection topology switches, where the topology is usually assumed to be piece-wise constant \cite{baingana2016tracking}. 
The key idea is to compute the deviation between two consecutive estimators and compare it with a preset threshold. 

\begin{remark}
In the literature, efficient alternative methods that avoid matrix inversion can be found, e.g., see \cite{rhode2014recursiveIFAC,rhode2014recursivea} and the references therein. 
The key idea of these approaches is to approximate $\widehat{W}_{i,c}(t)$ by minimizing the Rayleigh quotient, which involves singular value decomposition techniques and noise correlation estimation. 
The direction is out of the scope of this paper and the details are omitted here. 
\end{remark}

\begin{algorithm}[t]
    \caption{Infer the topology structure of nonlinear cases}
    \label{nonlinear-algo}
    \begin{algorithmic}[1]
    \REQUIRE{Observations $\{y_t\}_{t=0}^{T}$, and node set $\mathcal{V}$.}
    \ENSURE{Binary adjacency matrix estimator $\hat A=[\hat a_{ij}]_{i=1:n}^{j=1:n}$.}
    \STATE Calculate the total regression time $T_r=T-n+1$.
    \FOR {$l \gets 1$\ \textbf{to} $T_r$}
    {
        \FOR {$i\in\mathcal{V}$}
        {
          \STATE Calculate the correlation-based modified observations $[\tilde{y}_{t}^{-}]^i$ and $[\tilde{y}_{t}^{+}]^i$ by \eqref{eq:filtered_y}.
        }
        \ENDFOR
      \STATE ${\tilde {W}(l)} =\mathop {\arg \min }\limits_{ W(l)}\frac{1}{n} \sum\limits_{t = l}^{l+n-1} {\|\tilde y_{t+1}- { W(l)} \tilde y_{t} \|_2^2}$.
      \STATE Adopt $k$-means method to $\tilde {W}(l)$ and obtain its corresponding binary adjacency matrix $\tilde A(l)=[\tilde a_{ij}(l)]_{i=1:n}^{j
      =1:n}$. 
    }
    \ENDFOR
    \FOR {$i,j\in\mathcal{V}$}
    {
        \STATE $\mathcal{A}_{0}^{ij}=\{\tilde a_{ij}(l)=0:l=1,\cdots, T_r\}$, \\$\mathcal{A}_{1}^{ij}=\{\tilde a_{ij}(l)>0:l=1,\cdots, T_r\}$.
        \STATE $\hat a_{ij}=1$ if $|\mathcal{A}_{1}^{ij}| > |\mathcal{A}_{0}^{ij}| $ or $\hat a_{ij}=0$ otherwise.
    }
    \ENDFOR
    \end{algorithmic}
\end{algorithm}

\begin{figure*}[ht]
\centering
\subfigure[The sample matrix deviation]{\label{fig:sample_deviation}
\includegraphics[width=0.344\textwidth]{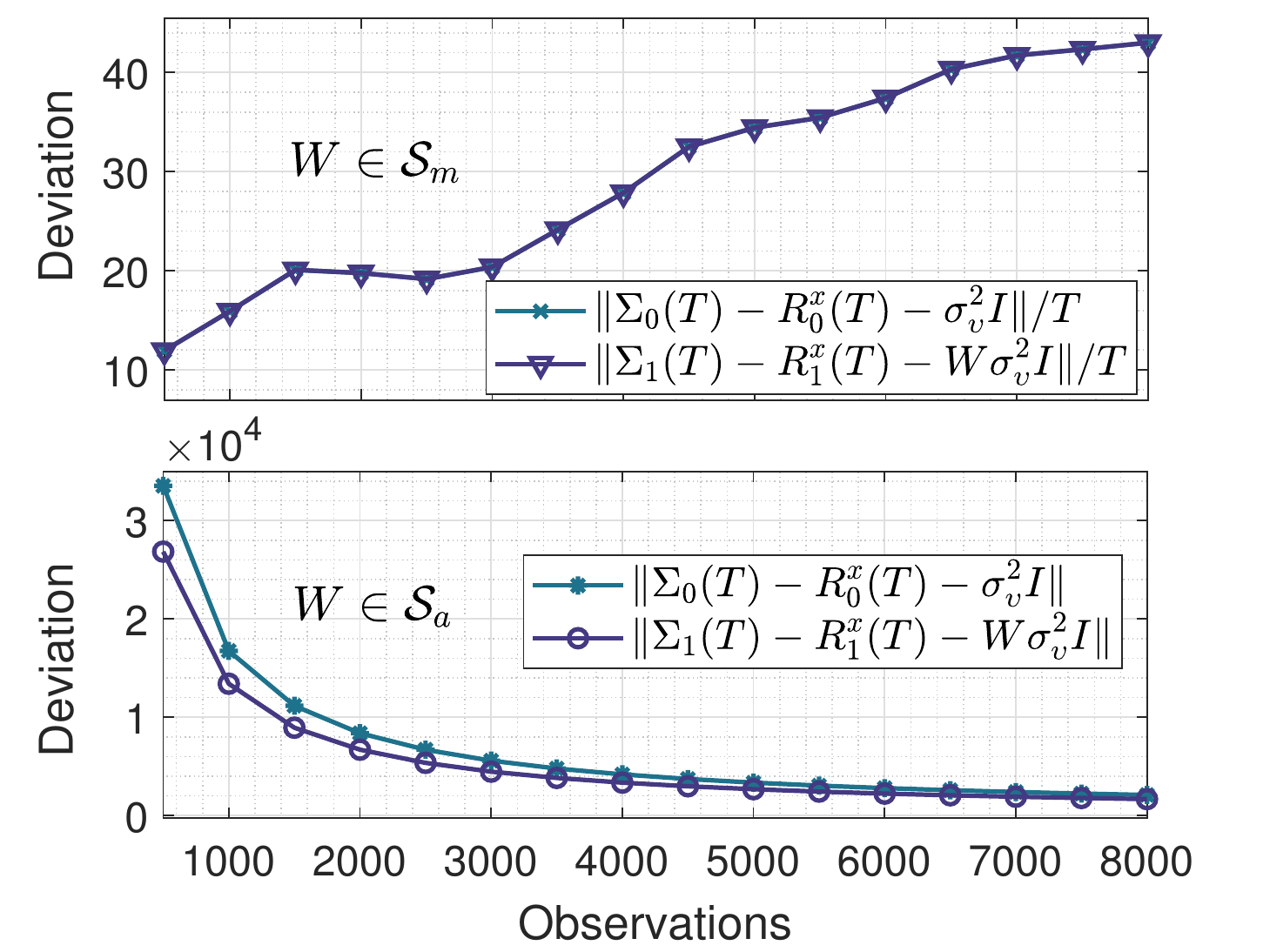}}
\hspace{-0.7cm}
\subfigure[Inference error using different estimators]{\label{fig:self_error}
\includegraphics[width=0.344\textwidth]{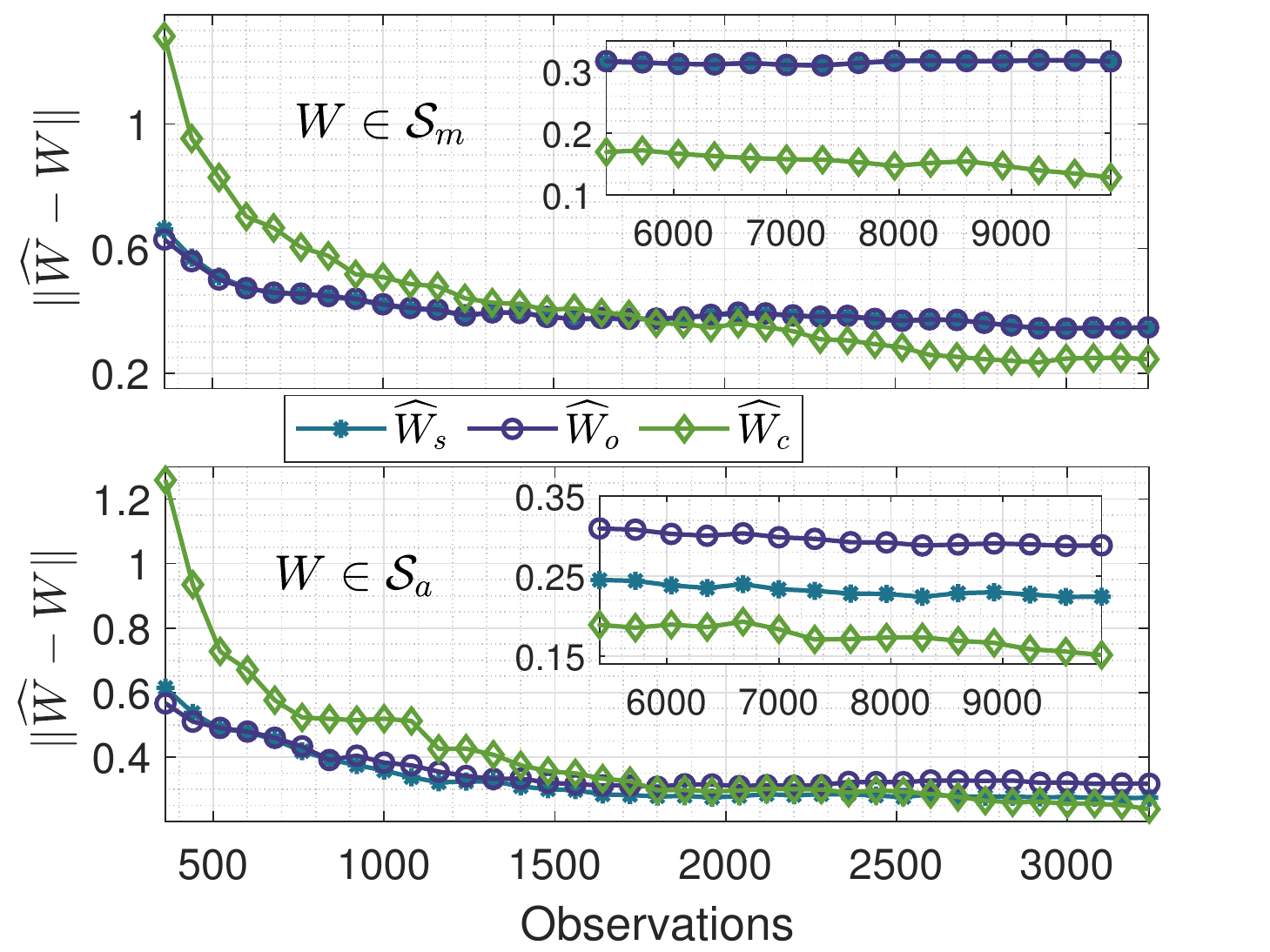}}
\hspace{-0.7cm}
\subfigure[Convergence with observation increasing]{\label{fig:convergence}
\includegraphics[width=0.344\textwidth]{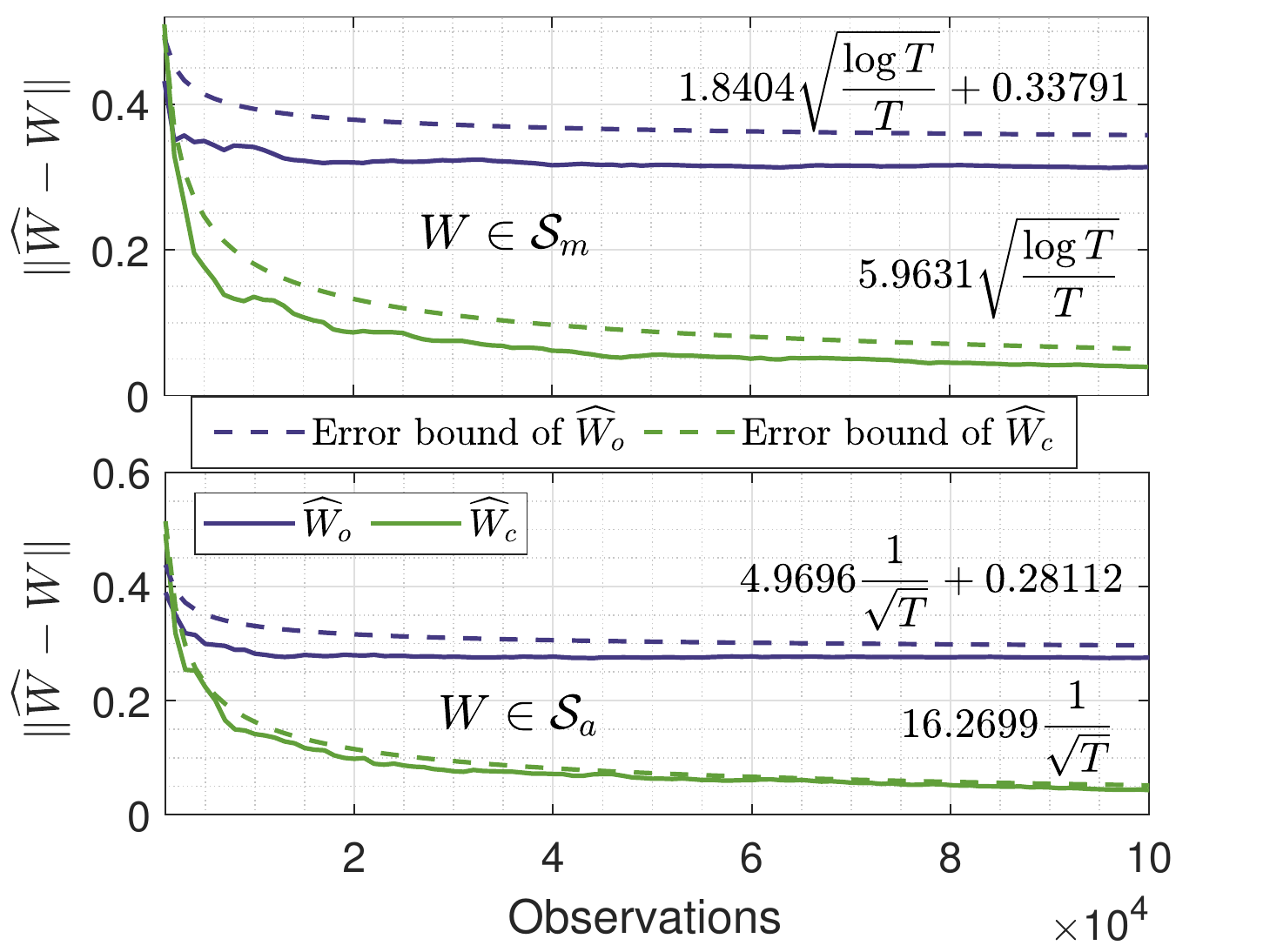}}
\caption{Experiments of verifying the theoretical results covering both asymptotically and marginally stable cases. 
(a): The sample matrix deviation in $\widehat{W}_{g}$ using multiple trajectories and $\widehat{W}_{c}$ using single trajectory. 
(b): Inference errors of $\widehat{W}_{o}$, $\widehat{W}_{c}$ and $\widehat{W}_{s}$ using the same single observation trajectory. 
(c): The convergence rates of $\widehat{W}_{o}$ and $\widehat{W}_{c}$ as observations increase. 
}
\vspace*{-10pt}
\label{fig:theoretical}
\end{figure*}

\subsection{Nonlinear Cases}
The nonlinearities of the NS model mainly come from two aspects. 
First, the magnitude of the system state cannot be unbounded,  and thus the input torque is bounded \cite{1239709}. 
Second, the edge weight in the topology matrix is not necessarily static, and it can be highly dependent on the state difference of its associated two nodes \cite{moreau2005stability}. 
Mathematically, the two kinds of nonlinearities can be uniformly formulated by 
\begin{equation}\label{eq:time-varying-rule }
x_{t+1}^{i}=x_{t}^{i}+ \sum\nolimits_{j = 1}^{n} \varphi_{ij}( x_{t}^{j}-x_{t}^{i} ), 
\end{equation}
where $\varphi_{ij}(z)$ is a continuous and strictly-bounded function, and $\varphi_{ij}(z)=0$ if $a_{ij}=0$ or $z=0$. 
As for the conditions of $\varphi_{ij}$ to guarantee the convergence and stability of the NS, the readers are referred to \cite{moreau2005stability}. 
Note that it is difficult to obtain the actual input form of each agent and find the internal edge weight. 
However, their internal adjacency structure is unchanged, which is also critical knowledge about the NS.  

Next, we will illustrate how to use our proposed revised casualty-based estimator to infer the adjacency structure. 
The key idea is as follows. 
First, we adopt linearization over a local time horizon sequentially and calculate the topology matrix by estimator (\ref{correlation_estimator}). 
Since $W$ contains $n^2$ element, at least $n+1$ groups of consecutive observations are needed to obtain a least squares solution of $W$ (suppose $y_{t+1}=W y_t$). 
Therefore, we set the local time horizon as $n+1$. 
Then, all the estimated topology matrices are integrated to discriminate whether an edge between two nodes exists by statistics. 
Specifically, a clustering procedure (e.g., $k$-means cluster method) is adopted to automatically classify the regressed weights into connected and disconnected ones. 
To strengthen the classification accuracy, a voting rule is proposed to determine the connectivity of two nodes. 
All procedures are summarized as Algorithm \ref{nonlinear-algo}. 


\begin{figure*}[t]
\centering
\subfigure[NMSE vs. observation number]{\label{test_a}
\includegraphics[width=0.344\textwidth]{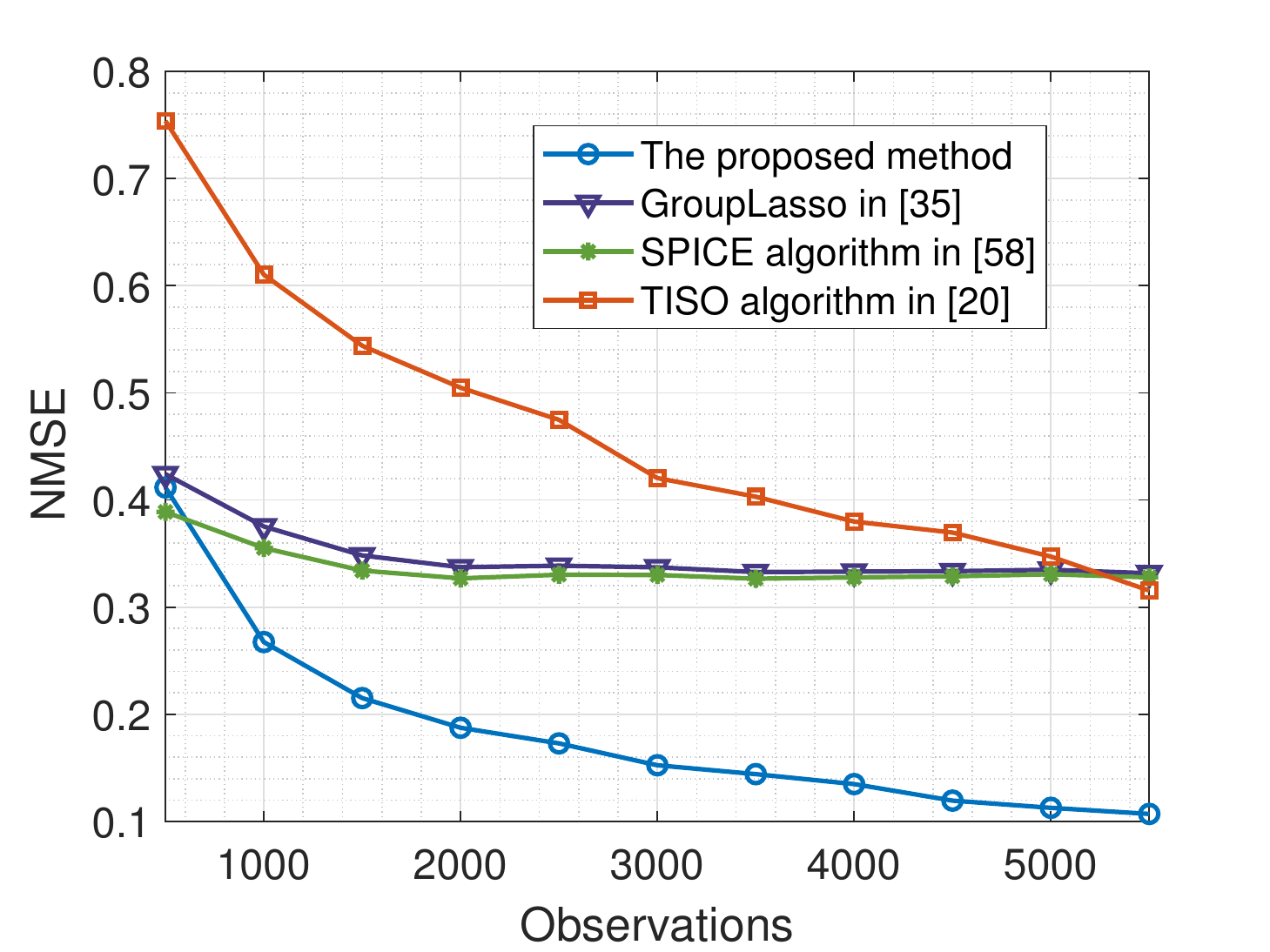}}
\hspace{-0.7cm}
\subfigure[EIER vs. observation number]{\label{test_b}
\includegraphics[width=0.344\textwidth]{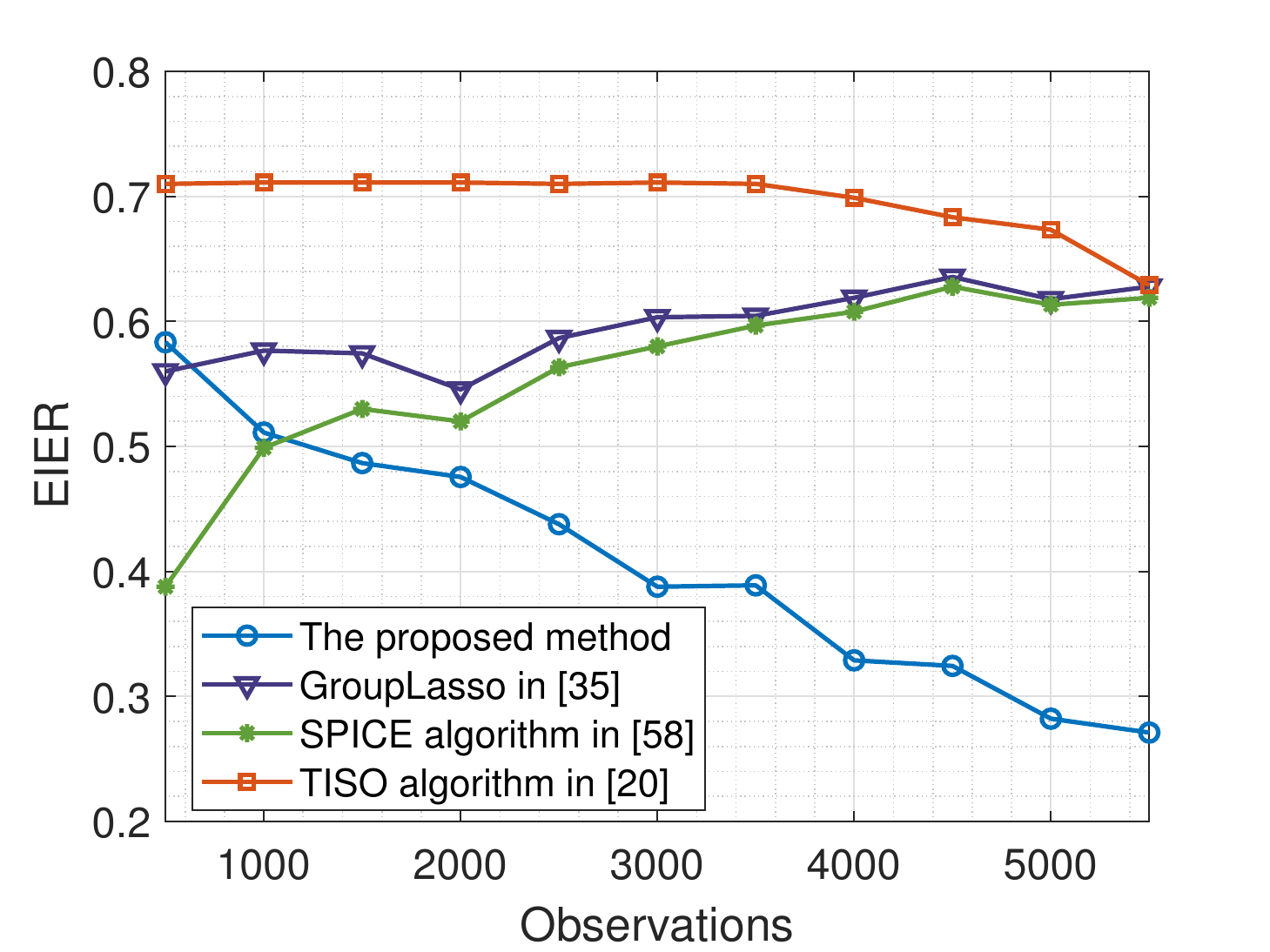}}
\hspace{-0.7cm}
\subfigure[F-score vs. observation number]{\label{test_c}
\includegraphics[width=0.344\textwidth]{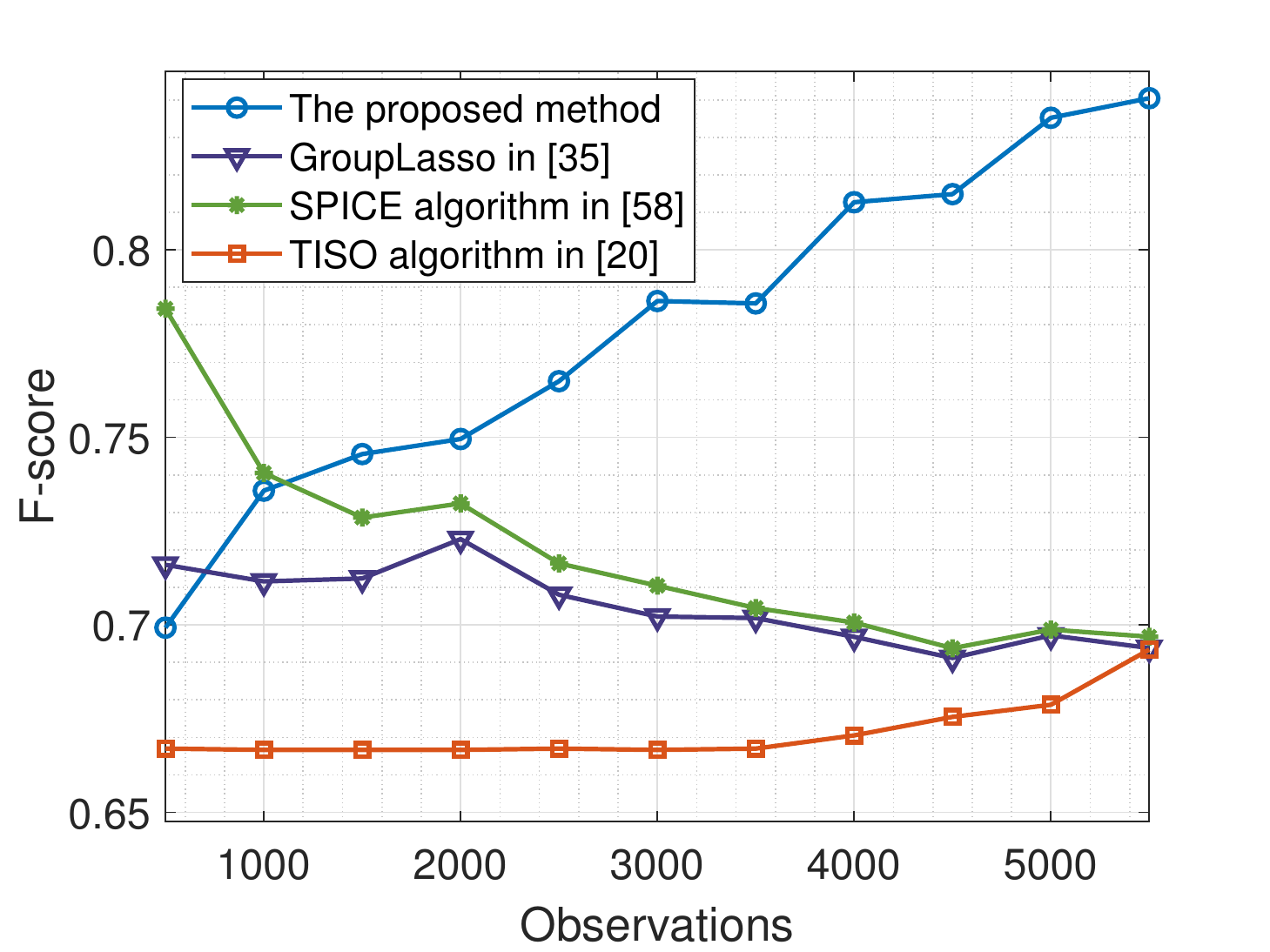}}
\caption{Comparisons of topology inference performance using the causality-based estimator, GroupLasso, SPICE, and TISO algorithms in the literature. The consecutive observations (samples) are from a single trajectory of NSs, where the variances of noises are  $\sigma_{\theta}^2=0.2$ and $\sigma_{\upsilon}^2=0.1$. }
\label{fig:com_test}
\vspace*{-10pt}
\end{figure*}

\section{Numerical Experiments}\label{simulation}
In this section, we first present numerical experiments to verify the theoretical results about the estimators' relationships and the non-asymptotic performance. 
Then, we compare the proposed causality-based estimator with some state-of-the-art methods in multiple aspects, showing its effectiveness. 
Finally, examples of nonlinear cases are provided.

\subsection{Verification of Theoretical Results}
In this experiment, we randomly generate a directed topology $W$ with $|\mathcal{V}|=20$, and the weight is designed by the Laplacian rule \eqref{eq:topo-rule}. 
Both $W\in \mathcal{S}_a$ and $W\in \mathcal{S}_m$ are considered.
For generality, the initial states of all agents are randomly selected from the interval $[400,600]$, and the variance of the process and observation noise satisfy $\sigma_{\theta}^2=1$ and $\sigma_{\upsilon}^2=1$. 

Let us begin with examining the deviation of the sample matrices used in Granger estimator $\widehat{W}_{g}$ (for multiple trajectories) and the proposed causality-based estimator $\widehat{W}_{c}$ (for single trajectory), respectively, i.e., verifying conclusions in Theorem \ref{th:equivalent1} and \ref{th:equivalent-nonstable}. 
This result is reported in Fig.~\ref{fig:sample_deviation}. 
When $W\in\mathcal{S}_a$, the sample matrix $R_0^x(T)$ from multiple observation trajectories can be approximated by the sample matrix $\Sigma_0(T)$ from single observation trajectory as $T\to\infty$. 
When $W\in\mathcal{S}_m$, the deviation norm between $R_0^x(T)$ and $\Sigma_0(T)$ goes to infinity as $T\to\infty$, and this is because the influence of the process noise will remain as the system evolves.

Next, the inference performance of OLS estimator $\widehat{W}_{o}$, the proposed $\widehat{W}_{c}$ and its correlation-based modification $\widehat{W}_{s}$ are compared under the same single observation trajectory, as demonstrated in Fig.~\ref{fig:self_error}. 
For a marginally stable NS, $\widehat{W}_{s}$ has almost the same inference performance as $\widehat{W}_{o}$. 
For an asymptotically stable NS, $\widehat{W}_{s}$ outperforms $\widehat{W}_{o}$ asymptotically but is still worse than $\widehat{W}_{c}$. 
We observe this is because the state of an asymptotically stable NS will always converge to zero, which indicates the system is mainly driven by noises regardless of the initial states. 
By joint inspection of the two cases, we note that $\widehat{W}_{c}$ applies to cases with a large observation scale, while $\widehat{W}_{s}$ applies to other situations with no worse performance than $\widehat{W}_{o}$. 
The main reason is that the statistical characteristic of observation noises will matter a lot when the observation scale is large, which is considered by $\widehat{W}_{o}$.

Finally, the non-asymptotic performance of $\widehat{W}_{c}$ and $\widehat{W}_{o}$ in Theorem \ref{th:converge-speed} is verified in Fig.~\ref{fig:convergence}. 
The upper bounds of the inference errors of two estimators are drawn in dashed lines, providing explicit expressions in terms of the observation number $T$. 
From the inset plots, we can appreciate that the proposed $\widehat{W}_{c}$ exhibits better performance than $\widehat{W}_{o}$. 
Remarkably, as $T$ increases, the inference error of $\widehat{W}_{c}$ will converge to zero while that of $\widehat{W}_{o}$ is constant.

\subsection{Comparisons with State-of-the-art Algorithms}
In this experiment, we use the case $W\in \mathcal{S}_m$, where the initial states of all agents are randomly selected from the interval $[-10,10]$, and the variance of the process and observation noise satisfy $\sigma_{\theta}^2=0.2$ and $\sigma_{\upsilon}^2=0.1$. 
Under the same initial setting, we run the dynamical process 20 times and average the following three popular evaluation metrics: the normalized mean square error (NMSE), edge identification error rate(EIER), and F-score (FS)
\begin{align}
&\operatorname{NMSE}(\widehat{W}, W)=\frac{\|\widehat{W}-W \|_{F}}{\left\|W\right\|_{F}}, \\
&\operatorname{EIER}(\widehat{W}, W)=\frac{\|W-\widehat{W}\|_{0}}{n(n-1)}, \\
&\operatorname{FS}(\widehat{W}, W)=\frac{2 \mathrm{tp}}{2 \mathrm{tp}+\mathrm{fn}+\mathrm{fp}}. 
\end{align}
Note that F-score is commonly adopted to describe the binary classification performance by computing the true-positive ($\mathrm{tp}$), false-positive ($\mathrm{fp}$) and false-negative ($\mathrm{fn}$) edge detection in estimated $\widehat{W}$. 
The value of F-score locates in $[0,1]$, where $1$ indicates perfect edge classification. 

Fig.~\ref{fig:com_test} presents the comparison results of the proposed $\widehat{W}_{c}$ with the GroupLasso algorithm in \cite{bolstad2011causal}, SPICE algorithm in \cite{venkitaraman2019learning}, and TISO algorithm in \cite{zaman2021online}. 
Fig.~\ref{test_a} and Fig.~\ref{test_b} depict the $\operatorname{NMSE}$ and $\operatorname{EIER}$ curves, respectively. 
It is clear that with the increasing observation number, the proposed estimator outperforms other algorithms in both $\operatorname{NMSE}$ and $\operatorname{EIER}$ metrics. 
Fig.~\ref{test_c} plots the F-score curve, where the proposed estimator achieves significant improvement with the observation number increasing and approaches to perfect classification. 
Notice that the $\operatorname{EIER}$ and F-score of both GroupLasso and SPICE generally do not possess an improvement with the observation number increasing. 
We conclude that this consequence may be incurred by the sparsity regularization (which is initially addressing the insufficient observation issue and turns to obtain a sparse topology), thus making the edge detection performance not well.

\subsection{Experiments on Nonlinear Cases}
In this part, we focus on the evaluation of Algorithm \ref{nonlinear-algo} in nonlinear dynamics cases, namely, inferring the binary topology structure in the NS. 
To this end, we adopt two representative cases of nonlinear model $x_{t+1}^{i}=x_{t}^{i}+ \sum\nolimits_{j = 1}^{n} \varphi_{ij}( x_{t}^{j}-x_{t}^{i} ) + \theta_t^i$, where $\varphi_{ij}$ is given by 
\begin{small}
\begin{equation}\label{eq:nonlinear_cases}
\begin{aligned}
&\text{Case 1}: \varphi_{ij}=\frac{\operatorname{sign}(a_{i j}) |x_{t}^{j}-x_{t}^{i}|(x_{t}^{j}-x_{t}^{i})}{1+\sum\nolimits_{j \in \mathcal{N}_{i}} \left(x_{t}^{j}-x_{t}^{i}\right)^{2}},  \\
&\text{Case 2}: \varphi_{ij}={a_{i j}(x_{t}^{j}-x_{t}^{i})}{(\frac{2}{1+\exp\{-(x_{t}^{j}-x_{t}^{i})\}}-1)}.
\end{aligned} 
\end{equation}
\end{small}
\!\!According to the sufficient conditions in \cite{moreau2005stability}, both the two systems will reach stable states when noise-free. 
We repeat the experiments under the the same process noise level $\sigma_{\theta}=1$, and different observation noise levels, where $\sigma_{\upsilon}$ is set as $0.1$, $0.4$, $0.7$ and $1$, respectively. 
Here the EIER metric is used to evaluate the performance of Algorithm \ref{nonlinear-algo}. 
It is clear from Fig.~\ref{final_nonlinear} that the inference accuracy generally grows with the observations, and larger observation variance will cause worse inference performance, which corresponds to the common intuition. 
As indicated in this experiment, with appropriate number of observations, the proposed algorithm can effectively infer the edge connections of NDs even with nonlinear dynamics. 

\begin{figure}[t]
\centering
\includegraphics[width=0.35\textwidth]{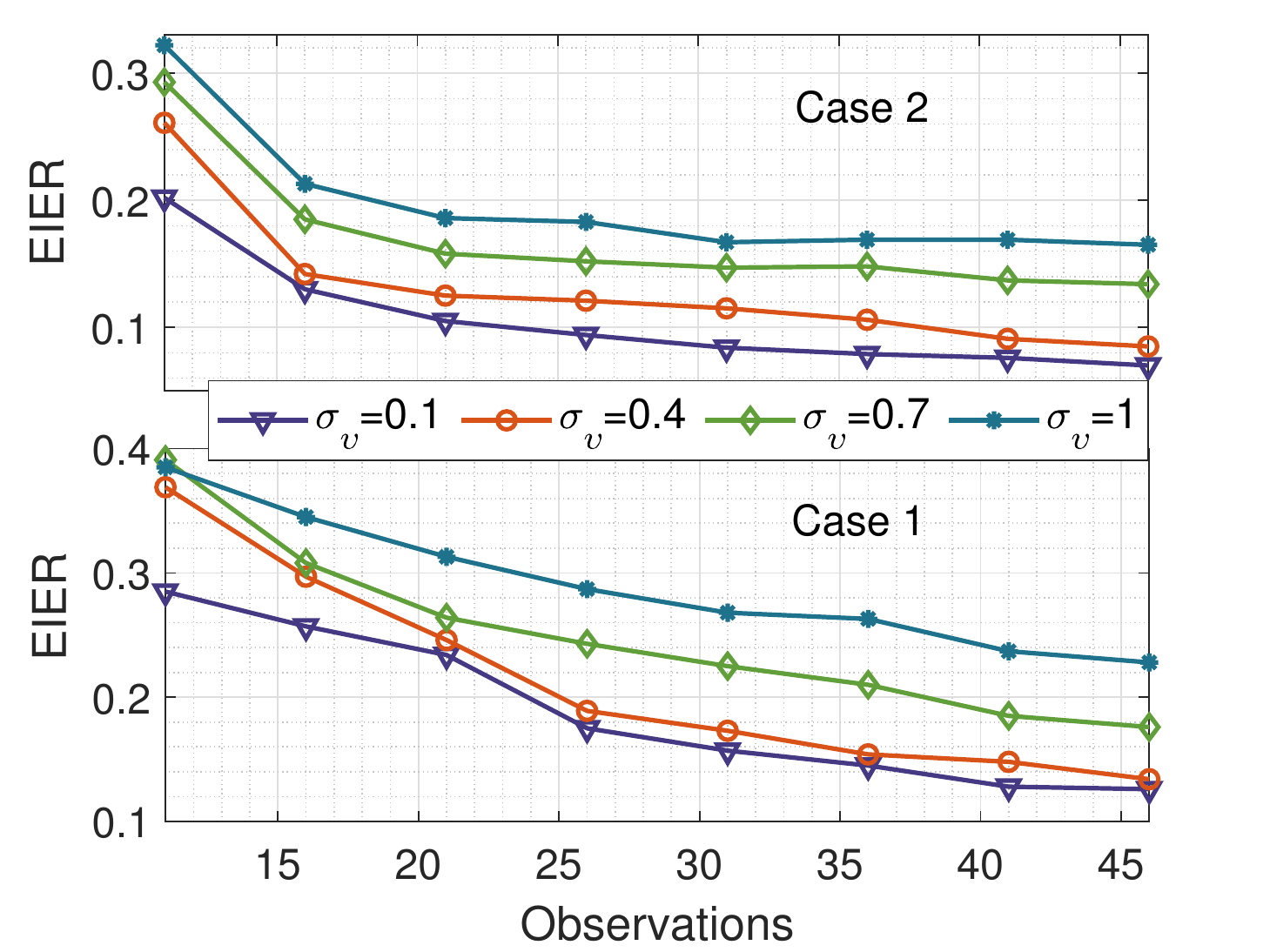}
\vspace{-5pt}
\caption{Examples of inferring the topology of NSs with nonlinear dynamics, considering the cases described in \eqref{eq:nonlinear_cases}. }
\label{final_nonlinear}
\vspace*{-13pt}
\end{figure}













\section{Conclusions}\label{conclusion}

In this paper, we investigate the directed topology inference problem of NSs in state-space representation, and characterize the non-asymptotic performance under different system stability. 
First, we proposed a causality-based estimator that allows for the presence of unknown observation noises, along with its correlation-based modification design to alleviate performance degradation when observations are few. 
By relating the proposed estimator from a single trajectory with the ideal Granger estimator from multiple trajectories, we proved their equivalence when the NS is asymptotically stable, and demonstrated the incremental characteristic of their sample matrices deviation in marginally stable cases. 
Then, we rigorously analyzed the convergence rate and accuracy of the proposed estimator by utilizing concentration measure, demonstrating that our method has superior inference performance compared with the OLS estimator. 
Besides, we provided an online version of the proposed estimator and discussed the extensions on nonlinear cases. 
Finally, extensive simulations verified our theoretical findings and showed the outperformance of the proposed estimator by comparisons.

The study of this paper provides meaningful insights into the topology inference problems, and paves the way for several interesting avenues of future research, including i) investigating a richer class of NS models, including non-stochastic input and generative switching topologies cases; 
ii) developing more novel nonlinear inference algorithms (e.g., distributed) with certain node causality and correlation as priors; 
iii) using the inference methods to topology-related applications, like anomaly detection and state prediction in NSs.

\appendix
\subsection{Proof of Lemma \ref{le:noise-state}}\label{pr:noise-state}
\begin{proof}
The proof is conducted in element-wise analysis. 
To ease notation, we denote $\Xi(T)=\frac{1}{T}\Theta_T X_T^\mathsf{T}=\sum\limits_{t = 1}^{T }\theta_{t-1}x_{t-1}^\mathsf{T}$, and the element $\Xi_{ij}(T)=\frac{1}{T}[\Theta_T X_T^\mathsf{T}]^{ij}$ is calculated by
\begin{align}
\Xi_{ij}(T)=\frac{1}{T}\sum\limits_{t = 1}^{T }\theta_{t-1}^{i} x_{t-1}^{j}.
\end{align}
Since $\theta_{t}\sim \mathcal{N}(0,\sigma^2)$, it follows that
\begin{align}
&\mathbb{E}[\Xi_{ij}(T)]= \frac{1}{T}\sum\limits_{t = 1}^{T } \mathbb{E}[\theta_{t-1}^{i}] x_{t-1}^{j}=0, \\
&\mathbb{D}[\Xi_{ij}(T)]=\sum\limits_{t = 1}^{T } ( \frac{ x_{t-1}^{j}}{T})^2 \sigma^2 \le \frac{ (|x|_{\max}^{j})^2 \sigma^2}{T},
\end{align}
where $|x|_{\max}^{j}=\max\{| x_t^j |, t=0,1,\cdots, T-1\}$.
By the famous Chebyshev inequality, given arbitrary $\epsilon >0$, we have
\begin{align}
\Pr\{ |\Xi_{ij}(T)| < \epsilon \} \! \ge \! 1- \frac{\mathbb{D}[\Xi_{ij}(T)]}{\epsilon^2} \!\ge \! 1- \frac{(|x|_{\max}^{j})^2 \sigma^2}{T\epsilon^2} .
\end{align}
Consequently, $\forall i,j\in\mathcal{V}$, $\Pr\{ |\Xi_{ij}(T)| < \epsilon \} \! \ge \! 1- \frac{(|x|_{\max}^{j})^2 \sigma^2}{T\epsilon^2} \! \ge \! 1- \frac{(|x|_{\max})^2 \sigma^2}{T\epsilon^2}$, which completes the first statement.

Next, if $|x|_{m}<\infty, \forall T\in\mathbb{N}^+$, when $T\to \infty$, it yields that
\begin{align} \label{eq:pr-con}
\mathop {\lim }\limits_{T \to \infty } \Pr\{ | \Xi_{ij}(T)| < \epsilon \} = 1.
\end{align}
Finally, in the matrix form, (\ref{eq:pr-con}) is equivalent to $ \Pr\{ \mathop {\lim }\limits_{T \to \infty } \Xi(T) \!=\!\bm{0} \}\!=\!1$.
The proof is completed.
\end{proof}

\subsection{Proof of Theorem \ref{th:causality_estimator}}\label{pr:causality_estimator}
\begin{proof}
Substitute (\ref{eq:two-observation}) into $\Sigma_1(T)$ and it follows that
\begin{align}
&\Sigma_1(T)=\frac{1}{T}(Y_T^+) (Y_T^-)^\mathsf{T}=\frac{1}{T}\sum\limits_{t = 1}^{T} y_{t} y_{t-1}^\mathsf{T} \nonumber \\
=&\frac{W}{T} \sum\limits_{t = 1}^{T }y_{t-1}y_{t-1}^\mathsf{T} \! + \!\frac{1}{T} \sum\limits_{t = 1}^{T }\left(\theta_{t-1} \!+\! \upsilon_{t}-W \upsilon_{t-1}\right)y_{t-1}^\mathsf{T}.
\end{align}

Note that when $W\in \mathcal{S}_a$, $\mathop {\lim }\limits_{t \to \infty } \|y_{t}\|< \infty$ holds almost surely. 
Since $\theta_{t-1}$ and $\upsilon_{t}$ are independent of $y_t$, applying Lemma \ref{le:noise-state} on $\Sigma_1(T)$, it yields that
\begin{equation}
\mathop {\lim }\limits_{T \to \infty } \frac{1}{T} \sum\limits_{t = 1}^{T }\theta_{t-1}y_{t-1}^\mathsf{T}=\bm{0},~\mathop {\lim }\limits_{T \to \infty } \frac{1}{T} \sum\limits_{t = 1}^{T }\upsilon_{t}y_{t-1}^\mathsf{T}=\bm{0}.
\end{equation}
Recalling $\upsilon_{t-1}$ is independent of $x_{t-1}$, it follows that
\begin{small}
\begin{equation}\nonumber
\mathop {\lim }\limits_{T \to \infty } \frac{1}{T} \sum\limits_{t = 1}^{T }\upsilon_{t-1}y_{t-1}^\mathsf{T}=\mathop {\lim }\limits_{T \to \infty } \frac{1}{T} \sum\limits_{t = 1}^{T }\upsilon_{t-1}(x_{t-1}+\upsilon_{t-1})^\mathsf{T}=\sigma_{\upsilon}^2I.
\end{equation}
\end{small}
\!\!Then, one infers that
\begin{equation}
\mathop {\lim }\limits_{T \to \infty }\Sigma_1(T)= W\left(\mathop {\lim }\limits_{T \to \infty }\Sigma_0(T) - \sigma_{\upsilon}^2I \right).
\end{equation}
Hence, the proof is completed.
\end{proof}

\subsection{Proof of Theorem \ref{th:equivalent1}}\label{pr:equivalent1}
\begin{proof}
Without losing generality, we first consider $\Sigma_0(T)$.
Substituting the expanded form (\ref{eq:expand_form}) of $y_t$ into $y_{t}y_{t}^\mathsf{T}$, one obtains
\begin{align}\label{eq:yy}
y_{t}y_{t}^{\mathsf{T}}=(W^{t}x_{0}+\eta_t) (W^{t}x_{0}+\eta_t )^\mathsf{T},
\end{align}
where $\eta_t=\sum\limits_{m = 1}^{t} W^{m-1} \theta_{t-m}+\upsilon_t$ and $\eta_0=\upsilon_0$.
Then, $y_{t}y_{t}^{\mathsf{T}}$ is expanded as
\begin{align}\label{eq:yy2}
y_{t}y_{t}^{\mathsf{T}}\!=\!\underbrace{ W^{t}x_{0}x_{0}^{\mathsf{T}}(W^{t})^\mathsf{T} }_{Q_1^t} \!+\! \underbrace{W^{t}x_{0} \eta_t^\mathsf{T} }_{Q_2^t} \!+\! \underbrace{\eta_t x_{0}^\mathsf{T} (W^{t})^\mathsf{T} }_{Q_3^t} \!+\! \underbrace{\eta_t \eta_t^\mathsf{T}}_{Q_4^t}.
\end{align}
Based on (\ref{eq:yy2}), the proof is equivalent to separately find the average of the summation of each part in (\ref{eq:yy2}) along the observation horizon $T$. 

First, consider taking the average of all $\{Q_1^t\}_{t=0}^{T-1}$.
Note that when $W\in \mathcal{S}_a \cup \mathcal{S}_m$, $W^{\infty}x_0$ converge to a constant vector.
Therefore, it yields that
\begin{align}\label{eq:q1-sum}
\mathop {\lim }\limits_{T \to \infty }\frac{1}{T} \sum\limits_{t=0 }^{T-1 }Q_1^t= &\mathop {\lim }\limits_{T \to \infty } \frac{1}{T} \sum\limits_{t=0 }^{T-1 }W^t x_{0}x_{0}^{\mathsf{T}} (W^t)^\mathsf{T} \nonumber \\
=&W^{\infty} x_{0}x_{0}^{\mathsf{T}} (W^{\infty})^\mathsf{T}.
\end{align}

Next, consider $\frac{1}{T} \sum\limits_{t=0}^{T-1 }Q_2^t \!=\! \frac{1}{T} \sum\limits_{t=0}^{T-1} W^{t}x_{0} \eta_t^\mathsf{T}$.
Since $\eta_t$ is a typical linear combination of Gaussian noises $\{\theta_m\}_{m=0}^{t-1}$ and $\upsilon_t$, and is independent of $W^{t}x_{0}$,
by Lemma \ref{le:noise-state}, one infers that
\begin{align}\label{eq:convere-dd}
\mathop {\lim }\limits_{T \to \infty }\frac{1}{T} \sum\limits_{t=0}^{T-1 }Q_2^t= \mathop {\lim }\limits_{T \to \infty } \frac{1}{T}\sum\limits_{t = 0}^{T-1} (W^{t}x_{0}) \eta_t^\mathsf{T}=\bm{0}.
\end{align}
The average of all $\{Q_3^t\}_{t=0}^{T-1}$ is likewise, i.e.,
\begin{align}\label{eq:convere-dd2}
\mathop {\lim }\limits_{T \to \infty }\frac{1}{T} \sum\limits_{t=0}^{T-1 }Q_3^t=\bm{0}.
\end{align}

Then, focus on the calculation of $\frac{1}{T} \sum\limits_{t=0}^{T-1 }Q_4^t$.
Since $\mathbb{E}( \theta_{t}\theta_{t}^\mathsf{T} )=\sigma_{\theta}^2 I$ and $\mathbb{E}( \upsilon_{t}\upsilon_{t}^\mathsf{T} )=\sigma_{\upsilon}^2 I$, one can divide $Q_4^t$ as
\begin{align} \label{eq:q4}
Q_4^t =& \sum\limits_{m = 0}^{t} \left( W^{m-1} \theta_{t-m} (\upsilon_t + \sum\limits_{ m_1\neq m}^{t} \theta_{t-m_1}^\mathsf{T} (W^{m_1-1})^\mathsf{T} ) \right) \nonumber \\
&+\upsilon_{t} \left(\sum\limits_{m_1 = 0}^{t} \theta_{t-m_1}^\mathsf{T} (W^{m_1-1})^\mathsf{T} \right) \nonumber \\
&+ \sum\limits_{m = 0}^{t} W^{m-1} \theta_{t-m} \theta_{t-m}^\mathsf{T} (W^{m-1})^\mathsf{T} + \upsilon_{t}\upsilon_{t}^\mathsf{T}.
\end{align}
Consider the first term in $Q_4^t$.
For simple expression, define
\begin{equation}
\theta_t^a(m)= W^{m-1} \theta_{t-m},~\theta_{t}^b(m)=\sum\limits_{m_1\neq m}^{t} W^{m_1-1}\theta_{t-m_1}.
\end{equation}
As $W\in \mathcal{S}_a$, one can infer that
$\mathop {\lim }\limits_{t \to \infty } \|\theta_t^b(m)\|<\infty$.
Therefore, by the famous Lebesgue's dominated convergence theorem and Lemma \ref{le:noise-state}, it follows that with probability one
\begin{align}
&\mathop {\lim }\limits_{T \to \infty } \frac{1}{T} \sum\limits_{t=1}^{T} \sum\limits_{m = 0}^{t} \theta_t^a(m) (\theta_{t}^b(m))^\mathsf{T} \nonumber \\
=& \sum\limits_{t=0}^{\infty} \mathop {\lim }\limits_{T \to \infty } \frac{\sum\limits_{m = 0}^{t} \theta_t^a(m) (\theta_{t}^b(m))^\mathsf{T} }{T}=\bm{0}.
\end{align}
Likewise, for the second term in (\ref{eq:q4}), it also holds that
\begin{equation}
\mathop {\lim }\limits_{T \to \infty } \frac{1}{T} \sum\limits_{t=0}^{T-1} \upsilon_{t} \left(\sum\limits_{m_1 = 0}^{t} \theta_{t-m_1}^\mathsf{T} (W^{m_1 -1})^\mathsf{T} \right) = \bm{0}.
\end{equation}

As for the last two parts in (\ref{eq:q4}), recalling $\mathbb{D}[\theta_t]\!=\!\sigma_{\theta}^2I$ and $\mathbb{D}[\upsilon_t]\!=\!\sigma_{\upsilon}^2I$, and one infers that
\begin{align}\label{eq:q4-sum}
&\mathop {\lim }\limits_{T \to \infty } \frac{ \sum\limits_{t = 0}^{T-1} \sum\limits_{m = 0}^{t} W^{m-1} \theta_{t-m} \theta_{t-m}^\mathsf{T} (W^{m-1})^\mathsf{T} + \sum\limits_{t = 0}^{T-1} \upsilon_{t}\upsilon_{t}^\mathsf{T} }{T} \nonumber \\
=& \sum\limits_{t = 0}^{\infty} W^t \mathop {\lim }\limits_{T \to \infty } \left ( { \sum\limits_{m = 0}^{T-t} \theta_{T-m} \theta_{T-m}^\mathsf{T} }/T \right ) (W^t)^\mathsf{T} \nonumber + \sigma_{\upsilon}^2 I \\
=& \sigma_{\theta}^2 \sum\limits_{t = 0}^{\infty} W^t (W^t)^\mathsf{T} + \sigma_{\upsilon}^2 I = \mathop {\lim }\limits_{T \to \infty }\frac{1}{T} \sum\limits_{t=0}^{T-1}Q_4^t.
\end{align}

Finally, taking (\ref{eq:q1-sum})-(\ref{eq:convere-dd2}) and (\ref{eq:q4-sum}) into $\mathop {\lim }\limits_{T \to \infty } \Sigma_0(T)$ implies
\begin{align}\label{}
\mathop {\lim }\limits_{T \to \infty } \Sigma_0(T) & = \mathop {\lim }\limits_{T \to \infty } \frac{1}{T} \sum\limits_{t = 0}^{T-1} (Q_1^t+ Q_2^t + Q_3^t +Q_4^t)\nonumber \\
&= \mathop {\lim }\limits_{T \to \infty } \frac{1}{T} \sum\limits_{t = 0}^{T-1} (Q_1^t + Q_4^t)\nonumber \nonumber \\
&=W^{\infty} x_{0}x_{0}^{\mathsf{T}} (W^{\infty})^\mathsf{T}+\sigma_{\theta}^2 \sum\limits_{t = 0}^{\infty} W^t (W^t)^\mathsf{T} + \sigma_{\upsilon}^2 I \nonumber \\
&=R_0^x(\infty) + \sigma_{\upsilon}^2 I.
\end{align}
The proof of $\Sigma_1(\infty)=R_1^x(\infty)+\sigma_{\upsilon}^2 W$ is likewise and omitted here.
The proof is completed.
\end{proof}

\subsection{Proof of Theorem \ref{th:relationship}}\label{pr:relationship}
\begin{proof}
To prove this theorem, we first present the characteristic of the solution of $\textbf{P}_\textbf{2}$. 
Recall that the OLS version for solving $W_i$ is formulated as 
\begin{equation}\label{eq:wi_OLS}
\min_{W_{i} \in \mathbb{R}^{1\times n}} \left\| Z_T W_{i}^{\mathsf{T}} -\bm{b}_{T}^{i}\right\|^2 .
\end{equation}
Note that both the coefficient matrix $Y_{T}^-$ and the observation vector $\bm{b}_{T}^{i}$ are corrupted by noises. 
Considering this point, in the optimization community, the formulation of $\textbf{P}_\textbf{2}$ can be interpreted as total least squares (TLS) problem \cite{golub1980analysis}, which is a weighted-squares version of \eqref{eq:wi_OLS}. 
Define the the augmented matrix $B_T=[Z_T,\bm{b}_{T}^{i}]\in\mathbb{R}^{T \times (n+1)}$, 
and the analytic solution of $\textbf{P}_\textbf{2}$ is given by (see Section 2.3 in \cite{markovsky2007overview})
\begin{equation}
\widehat{W}_{i,\operatorname{TLS}}^{\mathsf{T}}=\left( Z_T Z_T^{\mathsf{T}} -\rho_{\min}^{2}(B_T) I\right)^{-1} Z_T \bm{b}_{T}^{i},
\end{equation}
where $\rho_{\min}(B_T)$ is the smallest singular value of $B_T$. 
Following this, all we need is to prove $\Pr\{\mathop {\lim }\limits_{T \to \infty } \rho_{\min}^{2}(B_T) \!=\! \sigma_v^2\}=1$.

Considering the correlation of $B_T$, one has 
\begin{align}\label{eq:correlation_B}
B_T^{\mathsf{T}}B_T =&
\begin{bmatrix} Z_T^{\mathsf{T}} \\ (\bm{b}_{T}^{i})^{\mathsf{T}} \end{bmatrix} 
\begin{bmatrix} Z_T & \bm{b}_{T}^{i} \end{bmatrix}
=\begin{bmatrix} 
Z_T^{\mathsf{T}} {Z_T}  & Z_T^{\mathsf{T}} \bm{b}_{T}^{i} \\
(\bm{b}_{T}^{i})^{\mathsf{T}} {Z_T} & (\bm{b}_{T}^{i})^{\mathsf{T}} \bm{b}_{T}^{i}
\end{bmatrix} \nonumber \\
=&\begin{bmatrix} 
\frac{1}{T} Y_T^- (Y_T^-)^\mathsf{T}  &  \frac{1}{T}\sum\limits_{t =0}^{T-1} y_{t} y_{t+1}^{i} \\
\frac{1}{T}\sum\limits_{t =0}^{T-1} y_{t+1}^{i} y_{t}^\mathsf{T} & \frac{1}{T}\sum\limits_{t =1}^{T} (y_{t}^{i})^2
\end{bmatrix} \nonumber\\
=& \underbrace{ \begin{bmatrix} 
\frac{1}{T} X_T^-(X_T^-)^\mathsf{T}  & \frac{1}{T}\sum\limits_{t =0}^{T-1} x_{t} x_{t+1}^{i} \\
\frac{1}{T}\sum\limits_{t =0}^{T-1} x_{t+1}^{i} x_{t}^\mathsf{T} & \frac{1}{T}\sum\limits_{t =1}^{T} (x_{t}^{i})^2
\end{bmatrix}  }_{M_T^a \in\mathbb{R}^{(n+1)\times (n+1)}} \nonumber\\
&+\underbrace{ \begin{bmatrix} 
\frac{1}{T}\sum\nolimits_{t =0}^{T-1} \upsilon_{t} \upsilon_{t}^\mathsf{T}  & \frac{1}{T}\sum\limits_{t =0}^{T-1} \upsilon_{t} \upsilon_{t+1}^{i}\\
\frac{1}{T}\sum\limits_{t =0}^{T-1} \upsilon_{t+1}^{i} \upsilon_{t}^\mathsf{T} & \frac{1}{T} \sum\nolimits_{t =1}^{T} (\upsilon_{t}^{i})^2
\end{bmatrix}  }_{M_T^b \in\mathbb{R}^{(n+1)\times (n+1)}}
.
\end{align}
Note that for the term $M_T^a \in\mathbb{R}^{(n+1)\times (n+1)}$ in \eqref{eq:correlation_B}, its last row is identical with $i$-th row, thus yielding that 
\begin{equation}\label{eq:zero_eig}
\operatorname{Rank}(M_T^a)=n ~\Rightarrow ~\lambda_{\min}(M_T^a)=0, 
\end{equation}
where $\lambda_{\min}(\cdot)$ denotes the smallest eigenvalue of a square matrix. 
For the term $M_T^b$, recall that $\upsilon_t$ is i.i.d. Gaussian noises and subject to ${N}(0,\sigma^2_{\upsilon} I)$. 
When $T\to\infty$, it follows that with probability one  
\begin{align}
\mathop {\lim }\limits_{T \to \infty } M_T^b & = 
\begin{bmatrix} 
\mathop {\lim }\limits_{T \to \infty } \frac{1}{T}\sum\nolimits_{t =0}^{T-1} \upsilon_{t} \upsilon_{t}^\mathsf{T}  &  \mathop {\lim }\limits_{T \to \infty } \frac{1}{T}\sum\limits_{t =0}^{T-1} \upsilon_{t} \upsilon_{t+1}^{i}\\
\mathop {\lim }\limits_{T \to \infty } \frac{1}{T}\sum\limits_{t =0}^{T-1} \upsilon_{t+1}^{i} \upsilon_{t}^\mathsf{T} & \mathop {\lim }\limits_{T \to \infty } \frac{1}{T} \sum\nolimits_{t =1}^{T}(\upsilon_{t}^{i})^2 
\end{bmatrix} \nonumber  \\
&= \sigma_{\upsilon}^2 I, 
\end{align}
where the fact that $\upsilon_{t}$ and $\upsilon_{t+1}^{i}$ are independent of each other is adopted, and thus $\mathop {\lim }\limits_{T \to \infty } \frac{1}{T}\sum\limits_{t =0}^{T-1} \upsilon_{t+1}^{i} \upsilon_{t}^\mathsf{T}=\bm{0}$ holds. 

Finally, utilizing \eqref{eq:zero_eig} and the property that $\lambda_{\min}(M+\sigma^2 I)=\lambda_{\min}(M)+\sigma^2$ ($M$ is an arbitrary square matrix), the minimal eigenvalue of $B_T^{\mathsf{T}}B_T$ satisfies 
\begin{equation}
\mathop {\lim }\limits_{T \to \infty } \lambda_{\min}(B_T^{\mathsf{T}}B_T) \!=\! \mathop {\lim }\limits_{T \to \infty } (\lambda_{\min}(M_T^a) \!+\! \lambda_{\min}(M_T^b))\!=\!\sigma_{\upsilon}^2, 
\end{equation}
which is also the square of the minimal singular value of $B_T$, i.e., $\mathop {\lim }\limits_{T \to \infty } \rho_{\min}^{2}(B_T) \!=\! \sigma_{\upsilon}^2$. 
The proof is completed. 
\end{proof}

\subsection{Proof of Theorem \ref{th:equivalent-nonstable}}\label{pr:equivalent-nonstable}
\begin{proof}
We proceed this proof based on the analysis in Theorem \ref{th:equivalent1}. 
The key point is to reveal the growing characteristic of the deviation norm in terms of $T$.  
Recall $\eta_t=\sum\limits_{m = 1}^{t} W^{m-1} \theta_{t-m}+\upsilon_t$ ($\eta_0=\upsilon_0$), and the deviation matrix $(\Sigma_0(T) -R_0^x(T)-\sigma_{\upsilon}^2 I)$ is expanded as 
\begin{small}
\begin{align}\label{eq:original_deviation}
 \!\!&\Sigma_0(T) -R_0^x(T)-\sigma_{\upsilon}^2 I \nonumber  \\
\!\!\!= &\frac{1}{T} (\sum\limits_{t = 0}^{T-1} (W^{t}x_{0}) \eta_t^\mathsf{T} \!\!+\! \!  \sum\limits_{t = 0}^{T-1} \eta_t x_{0}^\mathsf{T} (W^{t})^\mathsf{T} \! \!+\!\!  \sum\limits_{t=0}^{T-1} \eta_t \eta_t^\mathsf{T} )  \!-\! \sigma_{\theta}^2 \sum\limits_{t = 0}^{T-1} W^t (W^t)^\mathsf{T} \nonumber  \\
& + \frac{1}{T} \sum\limits_{t=0 }^{T-1 }W^t x_{0}x_{0}^{\mathsf{T}} (W^t)^\mathsf{T} - W^{T-1} x_{0}x_{0}^{\mathsf{T}} (W^{T-1})^\mathsf{T}  -\sigma_{\upsilon}^2 I .
\end{align}
\end{small}
\!\!Due to $W\in \mathcal{S}_m$, $\mathop {\lim } \limits_{m\to \infty } W^m$ exists, i.e., $|[W^m]^{ij}|<\infty,~\forall m \in \mathbb{R}^{+}$. 
Therefore, the spectral norm of the last three terms in (\ref{eq:original_deviation}) is bounded. 
Then, the major focus is laid on the first three terms. 
Considering an element-wise analysis of $W^{m-1} \theta_{t-m}$, the noise variance in every dimension is given by 
\begin{equation}
\mathbb{D}( [W^{m-1} \theta_{t-m}]^i )= \sigma_{\theta}^2 \sum\limits_{j = 1}^{n} ([W^{m-1}]^{ij} )^2, \forall i \in \mathcal{V}. 
\end{equation}
Since $[W^m]^{ij}$ is strictly bounded, there exists a $n$-dimension state vector $\tilde x$ and a Gaussian noise $\tilde \theta \sim {N}(0,\tilde \sigma^2 I)$ such that 
\begin{align}\label{eq:state_bound11}
\!\! \|W^{m}x_0\| \! \le \!\| \tilde x\| \!<\!\infty,~\mathbb{D}( [W^{m-1} \theta_{t-m}]^i ) \!\le\! \mathbb{D}( \tilde\theta^{[i]} )\!=\!\tilde \sigma^2.
\end{align}
Based on (\ref{eq:state_bound11}), we define a revised version of $\eta_t$ by $\tilde \eta_t=\sum\limits_{m = 1}^{t} \tilde\theta_t +\upsilon_t$ ($\tilde \eta_0=\upsilon_0$), and the alternative of the first three terms in (\ref{eq:original_deviation}) is given by
\begin{align}
E_{\theta}= \frac{1}{T} ( \underbrace{ \tilde x \sum\limits_{t = 0}^{T-1} \tilde\eta_t^\mathsf{T} }_{J_1(T)} +   \underbrace{  \sum\limits_{t = 0}^{T-1} \tilde \eta_t \tilde x^\mathsf{T} }_{J_2(T)}  + \underbrace{  \sum\limits_{t=1}^{T} \tilde\eta_t \tilde\eta_{t}^\mathsf{T} }_{J_3(T)}   ). 
\end{align}
In the sequel, we turn to analyze the the asymptotic performance of $\|E_{\theta}\|$ to demonstrate that of $\| \Sigma_0(T) -R_0^x(T) - \sigma_{\upsilon}^2 I \|$.

First, look at the each entry in $J_1^{[ij]}(T)=\tilde x^{[i]} \sum\limits_{t = 0}^{T-1} \tilde\eta_t^{[j]}$, which satisfies  
\begin{align}
\mathbb{E}\{J_1^{[ij]}(T)\}=0,~\mathbb{D}\{J_1^{[ij]}(T)\}=(\tilde x^{[i]})^2 \tilde\sigma^2 T .
\end{align}
By the Chebyshev inequality, given $0<\delta<1$, one has 
\begin{align}
\Pr\{ |J_1^{[ij]} (T)| \le \sqrt{\frac{T}{\delta}} \tilde \sigma |\tilde x^{[i]}|  \} \ge 1-\delta.
\end{align}
Therefore, it yields that at least with probability $1-\delta$,
\begin{align}\label{eq:t_bound1}
|J_1^{[ij]}(T)/T| \le  \sqrt{\frac{1}{T \delta}} \tilde \sigma\tilde x^{m}, \forall i,j\in\mathcal{V},
\end{align}
where $\tilde x^{m}=\max\{|\tilde x^{[i]}|, i\in\mathcal{V}\}$. 
Note that $J_1(T)=J_2^\mathsf{T}(T)$, and thus the bound in (\ref{eq:t_bound1}) also applies to $J_2^{[ij]}(T)$. 

Next, consider the entries in $J_3(T)$. 
Since the autocorrelation of $\{ \tilde\theta_t \}_{t=1}^{T} $ and $\{ \tilde\upsilon_t \}_{t=1}^{T} $ are involved, $J_3(T)$ can be further expanded as 
\begin{align}
J_3(T) =&  \sum\limits_{t = 0}^{T-1} ( \sum\limits_{t_1 = 0}^{t}\sum\limits_{t_2 = 0,\atop t_2\neq t_1}^{t} \tilde\theta_{t_1} \tilde\theta_{t_2}^\mathsf{T} + \sum\limits_{t_1 = 0}^{t} \tilde\theta_{t_1} \tilde\theta_{t_1}^\mathsf{T}   \nonumber \\
&  + \sum\limits_{t_1 = 0}^{t} \tilde\theta_{t_1} \upsilon_{t_1}^\mathsf{T} +  \sum\limits_{t_1 = 0}^{t} \tilde\upsilon_{t_1} \theta_{t_1}^\mathsf{T} +\upsilon_{t}\upsilon_{t}^\mathsf{T} ).
\end{align}
Recall that the product of two independent Gaussian variable also subjects to Gaussian distribution, thus it follows that 
\begin{align}
\!\!\left \{
\begin{aligned}
&\mathbb{E}\{(\tilde\theta_{t_1} \tilde\theta_{t_2}^\mathsf{T})^{[ij]}\}=0, \\
&\mathbb{D}\{(\tilde\theta_{t_1} \tilde\theta_{t_2}^\mathsf{T})^{[ij]}\}=\frac{\tilde \sigma^2}{2},
\end{aligned}
\right.
\left \{
\begin{aligned}
&\mathbb{E}\{(\tilde\theta_{t_1} \tilde\upsilon_{t_1}^\mathsf{T})^{[ij]}\}=0, \\
&\mathbb{D}\{(\tilde\theta_{t_1} \tilde\upsilon_{t_1}^\mathsf{T})^{[ij]}\}=\frac{\tilde\sigma^2 \sigma_{\upsilon}^2 }{\tilde\sigma^2 + \sigma_{\upsilon}^2},
\end{aligned}
\right.
\end{align}
where $t_1\neq t_2$. 
Applying the Chebyshev inequality again, for each entry in $J_3(T)$, one has with probability at least $1-\delta$
\begin{align}\label{eq:upup}
|J_3^{[ij]}(T)| \le & \sqrt{\frac{\sum\limits_{t = 1}^{T-1}t(t+1)} {2\delta} } \tilde \sigma + 2 \sqrt{\frac{\sum\limits_{t = 1}^{T}t } {\delta} } \sqrt{\frac{\tilde\sigma^2 \sigma_{\upsilon}^2 }{\tilde\sigma^2 + \sigma_{\upsilon}^2}} \nonumber \\
& + (\sum\limits_{t =1}^{T} t)\frac{\tilde\sigma^2}{\delta}  + \frac{T \sigma_{\upsilon}^2 }{\delta}. 
\end{align}
Then, divide $T$ into ${J_3}(T)$ and do series summation, yielding
\begin{align}\label{eq:speed}
|\frac{J_3^{[ij]}(T)}{T}|\le&  \sqrt{\frac{2 T^{3}-3 T^{2}+T}{12 T^{2} \delta}} \tilde\sigma+\sqrt{\frac{2 T^{2}-T}{T^{2} \delta}}\sqrt{\frac{\tilde\sigma^2 \sigma_{\upsilon}^2 }{\tilde\sigma^2 + \sigma_{\upsilon}^2}} \nonumber \\
 &+\frac{T+1}{2 \delta} \tilde{\sigma}^{2} + \frac{\sigma_{\upsilon}^2 }{\delta} \triangleq \bar J(T). 
\end{align}
Finally, utilizing the inequality $\|B \| \leq\|B \|_{F} \leq \sum\limits_{i,j\in\mathcal{V}} \| B^{[ij]}\|$ ($B\in\mathbb{R}^{n \times n}$), it is induced that with probability at least $1-\delta$ 
\begin{align}
\|\frac{J_3(T)}{T}\|\le&  n^2 \bar J(T) \sim\bm{O}( \frac{T}{\delta} ). 
\end{align}

Note that $\| R_0^x(T) \|=\| \sigma_{\theta}^2 \sum\nolimits_{t = 0}^{T-1} W^t (W^t)^\mathsf{T} \|\le T \sigma_{\theta}^2$, then there exists a possibility that the part with factor $T$ in $\frac{J_3(T)}{T}$ can be offset with $R_0^x(T)+ \sigma_{\upsilon}^2 I$, i.e., 
\begin{small}
\begin{equation}\label{eq:possibility}
\|R_0^x(T)+ \sigma_{\upsilon}^2 I- \frac{ \sum\limits_{t = 0}^{T-1} \sum\limits_{m = 0}^{t} W^{m-1} \theta_{t-m} \theta_{t-m}^\mathsf{T} (W^{m-1})^\mathsf{T} }{T} \| 
\stackrel{T\to\infty}{\longrightarrow} \bm{O}(1). 
\end{equation}
\end{small}
\!\!However, even if the situation in (\ref{eq:possibility}) happens, by (\ref{eq:speed}) one can infer that $\| \Sigma_0(\infty) -R_0^x(\infty) - \sigma_{\upsilon}^2 I \| \sim\bm{O}(\sqrt{T})$ holds, indicating the deviation still goes to infinity with the increase of $T$. 
Hence, the spectral norm of the deviation (\ref{eq:original_deviation}) at least satisfies $\bm{O}(\sqrt{T})$, which completes the proof. 
\end{proof}

\subsection{Proof of Theorem \ref{th:asymptotic-performance}}\label{pr:asymptotic-performance}
\begin{proof}
The proof is rather similar to that of Theorem \ref{th:equivalent1}. 
The key idea is to prove the upper bounds of the estimation error, by leveraging the concentration measure in Gaussian space. 
Since $\widehat{W}_o \!= \!Y_T^+ (Y_T^-)^\mathsf{T} (Y_T^- (Y_T^-)^\mathsf{T})^{-1}$, one obtains the estimation error matrix $E_W$ by 
\begin{equation}
E_W=\widehat{W}_o-W=(\Theta_{T}+\Upsilon_{T}^{+}  - W\Upsilon_{T}^{-}) (Y_{T}^{-})^\mathsf{T} \Sigma_{T}^{-1}. 
\end{equation} 
Due to $\| W \| \!\le \!1$ and $Y_{T}^{-}\!=\! X_{T}^{-}\!+\!\Upsilon_{T}^{-} $,  $\| E_W \|$ is bounded by 
\begin{align}\label{eq:upperbounds}
\| E_W \| \le& \| \Theta_{T} (Y_{T}^{-})^\mathsf{T} \Sigma_{T}^{-1} \| + \|\Upsilon_{T}^{+} (Y_{T}^{-})^\mathsf{T} \Sigma_{T}^{-1}\| \nonumber \\
&+ \| \Upsilon_{T}^{-} (X_{T}^{-})^\mathsf{T} \Sigma_{T}^{-1} \| + \| \Upsilon_{T}^{-} (\Upsilon_{T}^{-})^\mathsf{T} \Sigma_{T}^{-1} \|.
\end{align}
Then, the proof is turned to bound each term of the right-hand side in (\ref{eq:upperbounds}) individually. 

\begin{itemize}
\item \textit{Part 1: Upper Bounding $\| \Theta_{T} (Y_{T}^{-})^\mathsf{T} \Sigma_{T}^{-1} \|$. }
\end{itemize}

Utilizing $0 \prec V_{dn} \preceq \Sigma_{T} \preceq \Sigma_{T} + V_{dn}  \preceq  V_{up}$, it yields that 
\begin{align}
& \Sigma_{T}+V_{dn} \preceq 2 \Sigma_{T} \Longrightarrow \left(\Sigma_{T}+V_{dn}\right)^{-1} \succeq \Sigma_{T}^{-1}/2 \nonumber  \\
&  ~\Longrightarrow  \| \Sigma_{T}^{-\frac{1}{2}} \| \le \sqrt{2} \|\left(\Sigma_{T}+V_{d n}\right)^{-\frac{1}{2}} \| . 
\end{align}
Notice that $\{\theta_{t},y_{t}\}$\! are mutually independent like $\{\upsilon_{t+1},y_{t}\}$. 
Therefore, considering the first term in RHS of \eqref{eq:upperbounds}, $\| \Theta_{T} (Y_{T}^{-})^\mathsf{T} \Sigma_{T}^{-1}\| \le \|\Sigma_{T}^{-\frac{1}{2}}\| \|\Theta_{T} (Y_{T}^{-})^\mathsf{T} \Sigma_{T}^{-\frac{1}{2}} \|$, 
one can directly apply Lemma \ref{le:lsm-bound1} and has probability at least $1-\delta$ that 
\begin{small}
\begin{align}\label{eq:yy_up}
& \| \Theta_{T} (Y_{T}^{-})^\mathsf{T} \Sigma_{T}^{-1}\| \le \|\Sigma_{T}^{-\frac{1}{2}}\| \|\Theta_{T} (Y_{T}^{-})^\mathsf{T} \Sigma_{T}^{-\frac{1}{2}} \| \nonumber \\ 
&\le \sqrt{2} \|\Sigma_{T}^{-\frac{1}{2}}\| \| \Theta_{T} (Y_{T}^{-})^\mathsf{T} \left(\Sigma_{T}+V_{d n}\right)^{-\frac{1}{2}} \|   \nonumber \\  
&\le \frac{4}{\sqrt{\lambda_{\min}(V_{dn})}} \sqrt{n \log (\frac{5 \operatorname{det}\left(  V_{up} V_{dn}^{-1}+ I  \right)^{ \frac{1}{2n}}  } {\delta^{1 / n}} )},
\end{align}
\end{small}
\!\!where the facts $\|\Sigma_{T}^{-\frac{1}{2}}\|\le \frac{1}{\sqrt{\lambda_{\min}(V_{dn})}}$ and $\Sigma_{T} + V_{dn}  \preceq  V_{up}$ are applied in the last inequality of \eqref{eq:yy_up}. 

\begin{itemize}
\item \textit{Part 2: Upper Bounding $\| \Upsilon_{T}^{+} (Y_{T}^{-})^\mathsf{T} \Sigma_{T}^{-1} \|$. }
\end{itemize}

Since $\{\upsilon_{t+1}, y_{t}\}$ are independent with other, thus the same upper of (\ref{eq:yy_up}) also applies to $\| \Upsilon_{T}^{+} (Y_{T}^{-})^\mathsf{T} \Sigma_{T}^{-1} \|$. 

\begin{itemize}
\item \textit{Part 3: Upper Bounding $\| \Upsilon_{T}^{-} (X_{T}^{-})^\mathsf{T} \Sigma_{T}^{-1} \|$. }
\end{itemize}

Notice that $X_{T}^{-}= Y_{T}^{-}-\Upsilon_{T}^{-}$ and $\{\upsilon_{t}, x_{t}\}$ are independent with other. 
Following this, the matrices $V_{up}$ and $V_{dn}$ can be determined such that $V_{dn} \preceq X_{T}^{-} (X_{T}^{-} )^\mathsf{T} \preceq  V_{up}$. 
Similar to the Part 1, an upper bound of $\| \Upsilon_{T}^{-} (X_{T}^{-})^\mathsf{T} \Sigma_{T}^{-1} \|$ is given by 
\begin{small}
\begin{align}
\| \Upsilon_{T}^{-} (X_{T}^{-})^\mathsf{T} \Sigma_{T}^{-1} \| \!\le\! \frac{4}{\sqrt{\lambda_{\min}(V_{dn})}} \sqrt{n \log (\frac{5 \operatorname{det} (  V_{up} V_{dn}^{-1} \!+\! I )^{ \frac{1}{2n}}  } {\delta^{1 / n}})}. 
\end{align}
\end{small}


\begin{itemize}
\item \textit{Part 4: Upper Bounding $\| \Upsilon_{T}^{-} (\Upsilon_{T}^{-})^\mathsf{T} \Sigma_{T}^{-1} \|$. }
\end{itemize}

Applying Lemma \ref{le:single-value} to $\Upsilon_{T}^{-} (\Upsilon_{T}^{-})^\mathsf{T}$, and one obtains that with probability at least $1- 2 \exp \left(-r^{2} / 2\right) $
\begin{equation}
(\sqrt{T}\!-\!\sqrt{n}\!-\!r)^{2} \sigma_{\upsilon}^2 I \!\preceq\! \Upsilon_{T}^{-} (\Upsilon_{T}^{-})^\mathsf{T} \! \preceq \! (\sqrt{T} \!+\! \sqrt{n} \!+\! r)^{2} \sigma_{\upsilon}^2 I. 
\end{equation}
Let $r=\sqrt{2\log{\frac{2}{\delta}}}$ and focus on the right side of $\Upsilon_{T}^{-} (\Upsilon_{T}^{-})^\mathsf{T}$. 
When $T\ge T_{\delta}=(\sqrt{n}+\sqrt{2\log{\frac{2}{\delta}}})^2 /(\sqrt{5}/2-1)^2$, one has with probability $1- \delta$ that 
\begin{align}\label{eq:part4bound}
\frac{3 T \sigma_{\upsilon}^2 }{4} I \preceq \Upsilon_{T}^{-} (\Upsilon_{T}^{-})^\mathsf{T}  \preceq \frac{5T\sigma_{\upsilon}^2}{4} I. 
\end{align}
It follows from (\ref{eq:part4bound}) and $\| \Sigma_{T}^{-1}\|\le1/\lambda_{\min}(V_{dn})$ that 
\begin{align}
\!\!\! \| \Upsilon_{T}^{-} (\Upsilon_{T}^{-})^\mathsf{T} \Sigma_{T}^{-1} \| \!\le\! \| \Upsilon_{T}^{-} (\Upsilon_{T}^{-})^\mathsf{T}\| \|\Sigma_{T}^{-1}\| \!\le\! \frac{5T\sigma_{\upsilon}^2}{4 \lambda_{\min}(V_{dn})}. 
\end{align}

Finally, combining the upper bounds of four parts leads to 
\begin{small}
\begin{equation}
\| E_W \| \le \frac{ 12 \sqrt{n \log \left(\frac{5 \operatorname{det}\left(  V_{up} V_{dn}^{-1}+ I  \right)^{ \frac{1}{2n}}  } {\delta^{1 / n}}\right)}  \! + \! \frac{5T\sigma_{\upsilon}}{4 \sqrt{ \lambda_{\min}(V_{dn}) }} }{\sqrt{\lambda_{\min}(V_{dn})}} ,
\end{equation}
\end{small}
which completes the proof. 
\end{proof}

\subsection{Proof of Theorem \ref{th:converge-speed}}\label{pr:converge-speed}
\begin{proof}
The key point of this proof is to provide a group of explicit $V_{dn}$ and $V_{up}$ about $T$. 
First, we focus on analyzing the case of $\widehat{W}_o$ and then easily extend the analysis to that of $\widehat{W}_c$. 
Note that we are not interested in finding the best $V_{dn}$ and $V_{up}$ rather illustrate their existences. 

First, it has been proved that (in Proposition 8.5. \cite{sarkar2019near}) there exists some scalar functions $\alpha_n$ that depends only on $n$, such that $\Sigma_{T} \succeq \alpha_n T I$. 
Thus, an appropriate group of $\alpha_n$ and $V_{dn}$ can be found such that $ \alpha_n T I \!\preceq\! V_{dn} \!\preceq\! \Sigma_{T}$, which yields that 
\begin{equation}\label{eq:down_bound}
\| V_{dn}^{-1} \| \le \frac{1}{ \lambda_{\min}(\alpha_n TI) } =\frac{1}{ \alpha_n T } \sim \bm{O}(\frac{1}{T}). 
\end{equation}

For $V_{up}$, note that the following inequity always holds and one can use it to determine $V_{up}$, given by 
\begin{equation}
\Sigma_{T} \preceq \operatorname{tr}\left(\sum_{t=0}^{T-1} y_{t} y_{t}^\mathsf{T} \right) I = V_{up}. 
\end{equation}
When $W \!\in\! \mathcal{S}_m$, part of the upper bound of $[\Sigma_{T}]^{ii}$ is given by (\ref{eq:upup}) in the proof of Theorem \ref{th:equivalent-nonstable}, demonstrating $ [\sum_{t=0}^{T-1} y_{t} y_{t}^\mathsf{T} ]^{[ii]} \!\sim\! \bm{O}(T^2)$. 
When $W \!\in\! \mathcal{S}_a$, it is also proved in Theorem \ref{th:equivalent1} that $\| \Sigma_T / T \| \!\stackrel{T\to\infty}{\longrightarrow}\! \| \sigma_{\theta}^2 \sum\limits_{t = 0}^{\infty} W^t (W^t)^\mathsf{T} \!+\! \sigma_{\upsilon}^2 I \|$, which is strictly bounded. 
In element-wise view, since $[V_{up}]^{ii} \le n \max\{[\Sigma_T]^{jj},j\in\mathcal{V}\}$ ($n$ is constant node number), one can directly use the increment property of $\Sigma_T$ to characterize the increment of $V_{up}$. 
Therefore, one can easily infer that
\begin{equation}\label{eq:bound_up}
\|V_{up}\| \sim\left \{
\begin{aligned}
& \bm{O}(T^2) ,~&& \text{if}~W \in \mathcal{S}_m, \\
& \bm{O}(T),~&&\text{if}~W \in \mathcal{S}_a. 
\end{aligned}\right.
\end{equation}
Combine the two factors (\ref{eq:down_bound}) and (\ref{eq:bound_up}), and it follows that 
\begin{align} \label{eq:det}
\operatorname{det} (  V_{up} V_{dn}^{-1}+ I)\sim\left \{
\begin{aligned}
&\bm{O}(T),~&& \text{if}~W \in \mathcal{S}_m, \\
&\bm{O}(1),~&&\text{if}~W \in \mathcal{S}_a. 
\end{aligned} \right.
\end{align}
Taking (\ref{eq:down_bound}) and (\ref{eq:det}) into $(12 \sqrt{n \log(\frac{5 \operatorname{det} (  V_{up} V_{dn}^{-1}+ I)^{ \frac{1}{2n}}  } {\delta^{1 / n}} )}   + \frac{5T\sigma_{\upsilon}}{4 \sqrt{ \lambda_{\min}(V_{dn}) }} )/{\sqrt{\lambda_{\min}(V_{dn})}} $ yields the relationship given by (\ref{eq:wo_bound}), which completes the proof of the first statement. 

Next, consider the the non-asymptotic bound of $\| \widehat{W}_c-W \|$. 
Let $\Sigma_{T,\sigma_{\upsilon} }= T \Sigma_0(T-1)- T \sigma_{\upsilon}^2 W$, and the causality-based estimator (\ref{causality_estimator}) is rewritten as 
\begin{equation}
\widehat{W}_c\!=\!Y_T^+ (Y_T^-)^\mathsf{T} \Sigma_{T,\sigma_{\upsilon} }^{-1}.
\end{equation}
Then, the inference error is given by 
\begin{align}
\widehat{W}_c -W =& (\Theta_{T}+\Upsilon_{T}^{+} - W\Upsilon_{T}^{-}) (Y_{T}^{-})^\mathsf{T} \Sigma_{T,\sigma_{\upsilon} }^{-1} + T\sigma_{\upsilon}^2 W \Sigma_{T,\sigma_{\upsilon} }^{-1} \nonumber \\
=& (\Theta_{T}+\Upsilon_{T}^{+} ) (Y_{T}^{-})^\mathsf{T} \Sigma_{T,\sigma_{\upsilon} }^{-1} - W\Upsilon_{T}^{-} (X_{T}^{-})^\mathsf{T} \Sigma_{T,\sigma_{\upsilon} }^{-1}  \nonumber \\
&+ W(T\sigma_{\upsilon}^2 I- \Upsilon_{T}^{-}(\Upsilon_{T}^{-})^\mathsf{T})\Sigma_{T,\sigma_{\upsilon} }^{-1}. 
\end{align}
Consequently, the upper bound is given by 
\begin{small}
\begin{align}\label{eq:new_bound}
\!\! \| \widehat{W}_c -W  \| & \!\le\! \| \Theta_{T} (Y_{T}^{-})^\mathsf{T} \Sigma_{T,\sigma_{\upsilon} }^{-1} \| + \|\Upsilon_{T}^{+} (Y_{T}^{-})^\mathsf{T} \Sigma_{T,\sigma_{\upsilon} }^{-1}\|  \nonumber \\
 \!\!+\! &\| \Upsilon_{T}^{-} (X_{T}^{-})^\mathsf{T} \Sigma_{T,\sigma_{\upsilon} }^{-1} \| \!+\! \| (T\sigma_{\upsilon}^2 I- \Upsilon_{T}^{-}(\Upsilon_{T}^{-})^\mathsf{T})\Sigma_{T,\sigma_{\upsilon} }^{-1} \|.
\end{align}
\end{small}
\!\!\!\!
Note that the first three terms in (\ref{eq:new_bound}) share the same upper bound forms as $\| \Theta_{T} (Y_{T}^{-})^\mathsf{T} \Sigma_{T}^{-1} \| $, $ \|\Upsilon_{T}^{+} (Y_{T}^{-})^\mathsf{T} \Sigma_{T}^{-1}\|$ and $\| \Upsilon_{T}^{-} (X_{T}^{-})^\mathsf{T} \Sigma_{T}^{-1} \|$ in $\| \widehat{W}_o -W  \|$. 
The proof is similar to that of Theorem \ref{th:asymptotic-performance} and is omitted here. 
As for the last term in RHS of (\ref{eq:new_bound}), one has with high probability that 
\begin{equation}
\mathop {\lim } \limits_{T\to \infty }  \| T \sigma_{\upsilon}^2 I- \Upsilon_{T}^{-}(\Upsilon_{T}^{-})^\mathsf{T}\|/T =0. 
\end{equation}
Therefore, the upper bound of $\| \widehat{W}_c -W  \|$ is determined by the first three terms in (\ref{eq:new_bound}), which converge to zero as $T\to\infty$. 
The second statement in Theorem \ref{th:converge-speed} is proved. 
\end{proof}

\subsection{Proof of Corollary \ref{coro:recursive}}\label{pr:recursive}
\begin{proof}
The whole procedure is similar to the derivation of recursive OLS estimators. 
To begin with, it follows from the analytical expression of $\widehat{W}_{i,c}(t)$ that
\begin{small}
\begin{align}
&\widehat{W}_{i,c}(t)=( \tilde{Z}_{t}^\mathsf{T} \tilde{Z}_{t}^\mathsf{T} -\sigma_{\upsilon}^2 t I )^{-1} \tilde{Z}_{t}^\mathsf{T} \bm{\tilde{b}}_{t}^{i} \nonumber \\
&=\left(\begin{bmatrix} \tilde{Z}_{t-1} \\ \tilde{z}_{t}\end{bmatrix}^\mathsf{T} 
\begin{bmatrix}  \tilde{Z}_{t-1} \\ \tilde{z}_{t}\end{bmatrix}  -\sigma_{\upsilon}^2 tI \right)^{-1} 
\begin{bmatrix} \tilde{Z}_{t-1} \\ \tilde{z}_{t}\end{bmatrix}^\mathsf{T}
\begin{bmatrix} \bm{\tilde{b}}_{t-1}^{i} \\ \tilde{b}_{t}^{i} \end{bmatrix} \nonumber \\
&= \left( \tilde{Z}_{t-1}^\mathsf{T} \tilde{Z}_{t-1} + \tilde{z}_{t}^\mathsf{T} \tilde{z}_{t} -\sigma_{\upsilon}^2 tI \right)^{-1} \left( \tilde{Z}_{t-1}^\mathsf{T}\bm{\tilde{b}}_{t-1}^{i} + \tilde{z}_{t}^\mathsf{T} \tilde{b}_{t}^{i} \right) . 
\end{align}
\end{small}

Next, let $P_t=(\tilde{Z}_{t}^\mathsf{T} \tilde{Z}_{t}^\mathsf{T} -\sigma_{\upsilon}^2 t I)^{-1}$ and it follows that
\begin{align}
&P_t=\left( P_{t-1}^{-1} \!+\! \tilde{z}_{t}^\mathsf{T} \tilde{z}_{t} \!-\!\sigma_{\upsilon}^2 I \right)^{-1},\\
&P_t^{-1} \widehat{W}_{i,c}(t)=\tilde{Z}_{t}^\mathsf{T} \bm{\tilde{b}}_{t}^{i}\label{eq:p_inverse}.
\end{align}
Then, substituting \eqref{eq:p_inverse} into $\widehat{W}_{i,c}(t)$, it yields that 
\begin{align}
&\widehat{W}_{i,c}^{\mathsf{T}}(t) = P_t \left( \tilde{Z}_{t-1}^\mathsf{T}\bm{\tilde{b}}_{t-1}^{i} + \tilde{z}_{t}^\mathsf{T} \tilde{b}_{t}^{i} \right) \nonumber \\
=& P_t \left(P_{t-1}^{-1} \widehat{W}_{i,c}^{\mathsf{T}}(t-1) + \tilde{z}_{t}^\mathsf{T} \tilde{b}_{t}^{i} \right) \nonumber \\
=& P_t \left( \left(P_{t}^{-1} - \tilde{z}_{t}^\mathsf{T} \tilde{z}_{t}  + \sigma_{\upsilon}^2 I  \right) \widehat{W}_{i,c}^{\mathsf{T}}(t-1) + \tilde{z}_{t}^\mathsf{T} \tilde{b}_{t}^{i} \right) \nonumber \\
=& \widehat{W}_{i,c}^{\mathsf{T}}(t-1)  +  P_t \left(  \sigma_{\upsilon}^2 I  - \tilde{z}_{t}^\mathsf{T} \tilde{z}_{t} \right) \widehat{W}_{i,c}^{\mathsf{T}}(t-1)  +  P_t  \tilde{z}_{t}^\mathsf{T} \tilde{b}_{t}^{i} \nonumber \\
=&(I \!+\! \sigma_{\upsilon}^2 P_t) \widehat{W}_{i,c}^{\mathsf{T}}(t-1) \!+\! P_{t} \tilde{z}_{t}^{\mathsf{T}}\left(\tilde{b}_{t}^{i} \!-\!\tilde{z}_{t} \widehat{W}_{i,c}^{\mathsf{T}}(t-1)\right). 
\end{align}
Notice that the invertibility of the symmetric matrix and $(\tilde{z}_{t}^\mathsf{T} \tilde{z}_{t} -\sigma_{\upsilon}^2 I)$ is guaranteed by $\sigma_{\theta}^2\ge\sigma_{\upsilon}^2$, i.e., 
$$\operatorname{det}(\tilde{z}_{t}^\mathsf{T} \tilde{z}_{t} -\sigma_{\upsilon}^2 I)= (\operatorname{tr}(\tilde{z}_{t}^\mathsf{T} \tilde{z}_{t} )-\sigma_{\upsilon}^2)\prod \nolimits_{i=1}^{n-1}(-\sigma_{\upsilon}^2)\neq 0$$ 
almost surely. 
Therefore, one can substitute the eigenvalue decomposition $(\tilde{z}_{t}^\mathsf{T} \tilde{z}_{t} -\sigma_{\upsilon}^2 I)=U_t \Lambda_t U_t^\mathsf{T}$ into $P_t$, and apply the Woodbury formula to obtain $P_{t}$ recursively, given by
\begin{align}\label{eq:recursive_p}
P_{t}=&\left( P_{t-1}^{-1}+U_t \Lambda_t U_t^\mathsf{T} \right)^{-1} \nonumber \\
=&P_{t-1}-P_{t-1} {U_t} \left(\Lambda_t^{-1}+ {U_t^\mathsf{T}} P_{t-1} {U_t} \right)^{-1} {U_t^\mathsf{T}} P_{t-1}. 
\end{align}
The proof is completed. 
\end{proof}


\begin{thebibliography}{10}
\providecommand{\url}[1]{#1}
\csname url@samestyle\endcsname
\providecommand{\newblock}{\relax}
\providecommand{\bibinfo}[2]{#2}
\providecommand{\BIBentrySTDinterwordspacing}{\spaceskip=0pt\relax}
\providecommand{\BIBentryALTinterwordstretchfactor}{4}
\providecommand{\BIBentryALTinterwordspacing}{\spaceskip=\fontdimen2\font plus
\BIBentryALTinterwordstretchfactor\fontdimen3\font minus
  \fontdimen4\font\relax}
\providecommand{\BIBforeignlanguage}[2]{{%
\expandafter\ifx\csname l@#1\endcsname\relax
\typeout{** WARNING: IEEEtran.bst: No hyphenation pattern has been}%
\typeout{** loaded for the language `#1'. Using the pattern for}%
\typeout{** the default language instead.}%
\else
\language=\csname l@#1\endcsname
\fi
#2}}
\providecommand{\BIBdecl}{\relax}
\BIBdecl

\bibitem{lys-cdc2021}
Y.~Li and J.~He, ``Topology inference for networked dynamical systems: A
  causality and correlation perspective,'' in \emph{2021 60th IEEE Conference
  on Decision and Control (CDC)}.\hskip 1em plus 0.5em minus 0.4em\relax IEEE,
  2021, pp. 1218--1223.

\bibitem{olfati2007consensus}
R.~Olfati-Saber, J.~A. Fax, and R.~M. Murray, ``Consensus and cooperation in
  networked multi-agent systems,'' \emph{Proceedings of the IEEE}, vol.~95,
  no.~1, pp. 215--233, 2007.

\bibitem{nokleby2018stochastic}
M.~Nokleby and W.~U. Bajwa, ``Stochastic optimization from distributed
  streaming data in rate-limited networks,'' \emph{IEEE Transactions on Signal
  and Information Processing over Networks}, vol.~5, no.~1, pp. 152--167, 2018.

\bibitem{ahmed2009recovering}
A.~Ahmed and E.~P. Xing, ``Recovering time-varying networks of dependencies in
  social and biological studies,'' \emph{Proceedings of the National Academy of
  Sciences}, vol. 106, no.~29, pp. 11\,878--11\,883, 2009.

\bibitem{monti2014estimating}
R.~P. Monti, P.~Hellyer, D.~Sharp, R.~Leech, C.~Anagnostopoulos, and
  G.~Montana, ``Estimating time-varying brain connectivity networks from
  functional {MRI} time series,'' \emph{NeuroImage}, vol. 103, pp. 427--443,
  2014.

\bibitem{liu2019dynamic}
C.~Liu, J.~He, S.~Zhu, and C.~Chen, ``Dynamic topology inference via external
  observation for multi-robot formation control,'' in \emph{2019 IEEE Pacific
  Rim Conference on Communications, Computers and Signal Processing
  (PACRIM)}.\hskip 1em plus 0.5em minus 0.4em\relax IEEE, 2019, pp. 1--6.

\bibitem{mahdizadehaghdam2016information}
S.~Mahdizadehaghdam, H.~Wang, H.~Krim, and L.~Dai, ``Information diffusion of
  topic propagation in social media,'' \emph{IEEE Transactions on Signal and
  Information Processing over Networks}, vol.~2, no.~4, pp. 569--581, 2016.

\bibitem{cheraghchi2012graph}
M.~Cheraghchi, A.~Karbasi, S.~Mohajer, and V.~Saligrama, ``Graph-constrained
  group testing,'' \emph{IEEE Transactions on Information Theory}, vol.~58,
  no.~1, pp. 248--262, 2012.

\bibitem{mardani2013dynamic}
M.~Mardani, G.~Mateos, and G.~B. Giannakis, ``Dynamic anomalography: Tracking
  network anomalies via sparsity and low rank,'' \emph{IEEE Journal of Selected
  Topics in Signal Processing}, vol.~7, no.~1, pp. 50--66, 2013.

\bibitem{giannakis2018topology}
G.~B. Giannakis, Y.~Shen, and G.~V. Karanikolas, ``Topology identification and
  learning over graphs: Accounting for nonlinearities and dynamics,''
  \emph{Proceedings of the IEEE}, vol. 106, no.~5, pp. 787--807, 2018.

\bibitem{segarra2017network}
S.~Segarra, A.~G. Marques, G.~Mateos, and A.~Ribeiro, ``Network topology
  inference from spectral templates,'' \emph{IEEE Transactions on Signal and
  Information Processing over Networks}, vol.~3, no.~3, pp. 467--483, 2017.

\bibitem{egilmez2017graph}
H.~E. Egilmez, E.~Pavez, and A.~Ortega, ``Graph learning from data under
  {Laplacian} and structural constraints,'' \emph{IEEE Journal of Selected
  Topics in Signal Processing}, vol.~11, no.~6, pp. 825--841, 2017.

\bibitem{zhu2020network}
Y.~Zhu, M.~T. Schaub, A.~Jadbabaie, and S.~Segarra, ``Network inference from
  consensus dynamics with unknown parameters,'' \emph{IEEE Transactions on
  Signal and Information Processing over Networks}, vol.~6, pp. 300--315, 2020.

\bibitem{granger1969investigating}
C.~W. Granger, ``Investigating causal relations by econometric models and
  cross-spectral methods,'' \emph{Econometrica: Journal of the Econometric
  Society}, pp. 424--438, 1969.

\bibitem{brovelli2004beta}
A.~Brovelli, M.~Ding, A.~Ledberg, Y.~Chen, R.~Nakamura, and S.~L. Bressler,
  ``Beta oscillations in a large-scale sensorimotor cortical network:
  Directional influences revealed by granger causality,'' \emph{Proceedings of
  the National Academy of Sciences}, vol. 101, no.~26, pp. 9849--9854, 2004.

\bibitem{santos2019local}
A.~Santos, V.~Matta, and A.~H. Sayed, ``Local tomography of large networks
  under the low-observability regime,'' \emph{IEEE Transactions on Information
  Theory}, vol.~66, no.~1, pp. 587--613, 2020.

\bibitem{baingana2017tracking}
B.~Baingana and G.~B. Giannakis, ``Tracking switched dynamic network topologies
  from information cascades,'' \emph{IEEE Transactions on Signal Processing},
  vol.~65, no.~4, pp. 985--997, 2017.

\bibitem{ioannidis2019semi}
V.~N. Ioannidis, Y.~Shen, and G.~B. Giannakis, ``Semi-blind inference of
  topologies and dynamical processes over dynamic graphs,'' \emph{IEEE
  Transactions on Signal Processing}, vol.~67, no.~9, pp. 2263--2274, 2019.

\bibitem{geiger2015causal}
P.~Geiger, K.~Zhang, B.~Schoelkopf, M.~Gong, and D.~Janzing, ``Causal inference
  by identification of vector autoregressive processes with hidden
  components,'' in \emph{International Conference on Machine Learning}, 2015,
  pp. 1917--1925.

\bibitem{zaman2021online}
B.~Zaman, L.~M.~L. Ramos, D.~Romero, and B.~{Beferull-Lozano}, ``Online
  topology identification from vector autoregressive time series,'' \emph{IEEE
  Transactions on Signal Processing}, vol.~69, pp. 210--225, 2021.

\bibitem{weerts2018identifiability}
H.~H. Weerts, P.~M. {Van den Hof}, and A.~G. Dankers, ``Identifiability of
  linear dynamic networks,'' \emph{Automatica}, vol.~89, pp. 247--258, 2018.

\bibitem{hendrickx2019identifiability}
J.~M. Hendrickx, M.~Gevers, and A.~S. Bazanella, ``Identifiability of dynamical
  networks with partial node measurements,'' \emph{IEEE Transactions on
  Automatic Control}, vol.~64, no.~6, pp. 2240--2253, 2019.

\bibitem{vanwaarde2021topology}
H.~J. {van Waarde}, P.~Tesi, and M.~K. Camlibel, ``Topology identification of
  heterogeneous networks: Identifiability and reconstruction,''
  \emph{Automatica}, vol. 123, p. 109331, 2021.

\bibitem{cheng2022allocation}
X.~Cheng, S.~Shi, and P.~M.~J. {Van den Hof}, ``Allocation of excitation
  signals for generic identifiability of linear dynamic networks,'' \emph{IEEE
  Transactions on Automatic Control}, vol.~67, no.~2, pp. 692--705, 2022.

\bibitem{friedman2008sparse}
J.~Friedman, T.~Hastie, and R.~Tibshirani, ``Sparse inverse covariance
  estimation with the graphical lasso,'' \emph{Biostatistics}, vol.~9, no.~3,
  pp. 432--441, 2008.

\bibitem{mateos2019connecting}
G.~Mateos, S.~Segarra, A.~G. Marques, and A.~Ribeiro, ``Connecting the dots:
  Identifying network structure via graph signal processing,'' \emph{IEEE
  Signal Processing Magazine}, vol.~36, no.~3, pp. 16--43, 2019.

\bibitem{dong2019learning}
X.~Dong, D.~Thanou, M.~Rabbat, and P.~Frossard, ``Learning graphs from data: A
  signal representation perspective,'' \emph{IEEE Signal Processing Magazine},
  vol.~36, no.~3, pp. 44--63, 2019.

\bibitem{etesami2017measuring}
J.~Etesami and N.~Kiyavash, ``Measuring causal relationships in dynamical
  systems through recovery of functional dependencies,'' \emph{IEEE
  Transactions on Signal and Information Processing over Networks}, vol.~3,
  no.~4, pp. 650--659, 2017.

\bibitem{dong2016learning}
X.~Dong, D.~Thanou, P.~Frossard, and P.~Vandergheynst, ``Learning {Laplacian}
  matrix in smooth graph signal representations,'' \emph{IEEE Transactions on
  Signal Processing}, vol.~64, no.~23, pp. 6160--6173, 2016.

\bibitem{pasdeloup2018characterization}
B.~Pasdeloup, V.~Gripon, G.~Mercier, D.~Pastor, and M.~G. Rabbat,
  ``Characterization and inference of graph diffusion processes from
  observations of stationary signals,'' \emph{IEEE Transactions on Signal and
  Information Processing over Networks}, vol.~4, no.~3, pp. 481--496, 2018.

\bibitem{hayden2016network}
D.~Hayden, Y.~Yuan, and J.~Gon{\c{c}}alves, ``Network identifiability from
  intrinsic noise,'' \emph{IEEE Transactions on Automatic Control}, vol.~62,
  no.~8, pp. 3717--3728, 2017.

\bibitem{shahrampour2014topology}
S.~Shahrampour and V.~M. Preciado, ``Topology identification of directed
  dynamical networks via power spectral analysis,'' \emph{IEEE Transactions on
  Automatic Control}, vol.~60, no.~8, pp. 2260--2265, 2014.

\bibitem{matta2018consistent}
V.~Matta and A.~H. Sayed, ``Consistent tomography under partial observations
  over adaptive networks,'' \emph{IEEE Transactions on Information Theory},
  vol.~65, no.~1, pp. 622--646, 2019.

\bibitem{timme2007revealing}
M.~Timme, ``Revealing network connectivity from response dynamics,''
  \emph{Physical Review Letters}, vol.~98, no.~22, p. 224101, 2007.

\bibitem{bolstad2011causal}
A.~Bolstad, B.~D. Van~Veen, and R.~Nowak, ``Causal network inference via group
  sparse regularization,'' \emph{IEEE Transactions on Signal Processing},
  vol.~59, no.~6, pp. 2628--2641, 2011.

\bibitem{hayden2016sparse}
D.~Hayden, Y.~H. Chang, J.~Goncalves, and C.~J. Tomlin, ``Sparse network
  identifiability via compressed sensing,'' \emph{Automatica}, vol.~68, pp.
  9--17, 2016.

\bibitem{wai2019joint}
H.-T. Wai, A.~Scaglione, B.~Barzel, and A.~Leshem, ``Joint network topology and
  dynamics recovery from perturbed stationary points,'' \emph{IEEE Transactions
  on Signal Processing}, vol.~67, no.~17, pp. 4582--4596, 2019.

\bibitem{peel2015detecting}
L.~Peel and A.~Clauset, ``Detecting change points in the large-scale structure
  of evolving networks,'' in \emph{Proceedings of the AAAI Conference on
  Artificial Intelligence}, vol.~29, no.~1, 2015.

\bibitem{baingana2016tracking}
B.~Baingana and G.~B. Giannakis, ``Tracking switched dynamic network topologies
  from information cascades,'' \emph{IEEE Transactions on Signal Processing},
  vol.~65, no.~4, pp. 985--997, 2016.

\bibitem{hallac2017network}
D.~Hallac, Y.~Park, S.~Boyd, and J.~Leskovec, ``Network inference via the
  time-varying graphical {Lasso},'' in \emph{Proceedings of the 23rd ACM SIGKDD
  International Conference on Knowledge Discovery and Data Mining}, 2017, pp.
  205--213.

\bibitem{shen2017kernel}
Y.~Shen, B.~Baingana, and G.~B. Giannakis, ``Kernel-based structural equation
  models for topology identification of directed networks,'' \emph{IEEE
  Transactions on Signal Processing}, vol.~65, no.~10, pp. 2503--2516, 2017.

\bibitem{romero2017kernelbased}
D.~Romero, M.~Ma, and G.~B. Giannakis, ``Kernel-based reconstruction of graph
  signals,'' \emph{IEEE Transactions on Signal Processing}, vol.~65, no.~3, pp.
  764--778, 2017.

\bibitem{wang2018inferring}
S.~Wang, E.~D. Herzog, I.~Z. Kiss, W.~J. Schwartz, G.~Bloch, M.~Sebek,
  D.~Granados-Fuentes, L.~Wang, and J.-S. Li, ``Inferring dynamic topology for
  decoding spatiotemporal structures in complex heterogeneous networks,''
  \emph{Proceedings of the National Academy of Sciences}, vol. 115, no.~37, pp.
  9300--9305, 2018.

\bibitem{ljung1998system}
L.~Ljung, ``System identification,'' in \emph{Signal Analysis and
  Prediction}.\hskip 1em plus 0.5em minus 0.4em\relax Springer, 1998, pp.
  163--173.

\bibitem{oymak2019non}
S.~Oymak and N.~Ozay, ``Non-asymptotic identification of {LTI} systems from a
  single trajectory,'' in \emph{2019 American Control Conference (ACC)}.\hskip
  1em plus 0.5em minus 0.4em\relax IEEE, 2019, pp. 5655--5661.

\bibitem{kowshik2021streaming}
S.~Kowshik, D.~Nagaraj, P.~Jain, and P.~Netrapalli, ``Streaming linear system
  identification with reverse experience replay,'' in \emph{Advances in Neural
  Information Processing Systems}, vol.~34, 2021, pp. 30\,140--30\,152.

\bibitem{sun2020finite}
Y.~Sun, S.~Oymak, and M.~Fazel, ``Finite sample system identification: Optimal
  rates and the role of regularization,'' in \emph{Learning for Dynamics and
  Control}.\hskip 1em plus 0.5em minus 0.4em\relax PMLR, 2020, pp. 16--25.

\bibitem{zheng2021nonasymptotica}
Y.~Zheng and N.~Li, ``Non-asymptotic identification of linear dynamical systems
  using multiple trajectories,'' \emph{IEEE Control Systems Letters}, vol.~5,
  no.~5, pp. 1693--1698, 2021.

\bibitem{sayed2014adaptation}
A.~H. Sayed \emph{et~al.}, ``Adaptation, learning, and optimization over
  networks,'' \emph{Foundations and Trends{\textregistered} in Machine
  Learning}, vol.~7, no. 4-5, pp. 311--801, 2014.

\bibitem{davidson2001local}
K.~R. Davidson and S.~J. Szarek, ``Local operator theory, random matrices and
  banach spaces,'' \emph{Handbook of the geometry of Banach spaces}, vol.~1,
  no. 317-366, p. 131, 2001.

\bibitem{sarkar2019near}
T.~Sarkar and A.~Rakhlin, ``Near optimal finite time identification of
  arbitrary linear dynamical systems,'' in \emph{International Conference on
  Machine Learning}.\hskip 1em plus 0.5em minus 0.4em\relax PMLR, 2019, pp.
  5610--5618.

\bibitem{rhode2014recursiveIFAC}
S.~Rhode, F.~Bleimund, and F.~Gauterin, ``Recursive generalized total least
  squares with noise covariance estimation,'' \emph{IFAC Proceedings Volumes},
  vol.~47, no.~3, pp. 4637--4643, 2014.

\bibitem{rhode2014recursivea}
S.~Rhode, K.~Usevich, I.~Markovsky, and F.~Gauterin, ``A recursive restricted
  total least-squares algorithm,'' \emph{IEEE Transactions on Signal
  Processing}, vol.~62, no.~21, pp. 5652--5662, 2014.

\bibitem{1239709}
R.~O. {Saber} and R.~M. {Murray}, ``Consensus protocols for networks of dynamic
  agents,'' in \emph{2003 American Control Conference}, vol.~2.\hskip 1em plus
  0.5em minus 0.4em\relax IEEE, 2003, pp. 951--956.

\bibitem{moreau2005stability}
L.~Moreau, ``Stability of multiagent systems with time-dependent communication
  links,'' \emph{IEEE Transactions on Automatic Control}, vol.~50, no.~2, pp.
  169--182, 2005.

\bibitem{venkitaraman2019learning}
A.~Venkitaraman and D.~Zachariah, ``Learning sparse graphs for prediction of
  multivariate data processes,'' \emph{IEEE Signal Processing Letters},
  vol.~26, no.~3, pp. 495--499, 2019.

\bibitem{golub1980analysis}
G.~H. Golub and C.~F. Van~Loan, ``An analysis of the total least squares
  problem,'' \emph{SIAM Journal on Numerical Analysis}, vol.~17, no.~6, pp.
  883--893, 1980.

\bibitem{markovsky2007overview}
I.~Markovsky and S.~Van~Huffel, ``Overview of total least-squares methods,''
  \emph{Signal Processing}, vol.~87, no.~10, pp. 2283--2302, 2007.

\end{thebibliography}

\begin{IEEEbiographynophoto}{Yushan Li}
(S'19) received the B.E. degree in School of Artificial Intelligence and Automation from Huazhong University of Science and Technology, Wuhan, China, in 2018. 
He is currently working toward the Ph.D. degree with the Department of Automation, Shanghai Jiaotong University, Shanghai, China. 
He is a member of Intelligent of Wireless Networking and Cooperative Control group. 
His research interests include robotics, security of cyber-physical system, and distributed computation in multi-agent networks. 
\end{IEEEbiographynophoto}

\begin{IEEEbiographynophoto}{Jianping He} 
(SM'19) is currently an associate professor in the Department of Automation at Shanghai Jiao Tong University. He received the Ph.D. degree in control science and engineering from Zhejiang University, Hangzhou, China, in 2013, and had been a research fellow in the Department of Electrical and Computer Engineering at University of Victoria, Canada, from Dec. 2013 to Mar. 2017. His research interests mainly include the distributed learning, control and optimization, security and privacy in network systems.

Dr. He serves as an Associate Editor for IEEE Trans. Control of Network Systems, IEEE Open Journal of Vehicular Technology, and KSII Trans. Internet and Information Systems. He was also a Guest Editor of IEEE TAC, International Journal of Robust and Nonlinear Control, etc. He was the winner of Outstanding Thesis Award, Chinese Association of Automation, 2015. He received the best paper award from IEEE WCSP'17, the best conference paper award from IEEE PESGM'17, and was a finalist for the best student paper award from IEEE ICCA'17, and the finalist best conference paper award from IEEE VTC'20-FALL.
\end{IEEEbiographynophoto}

\begin{IEEEbiographynophoto}{Cailian Chen}
(M'06) received the B.E. and M.E. degrees in Automatic Control from Yanshan University, P. R. China in 2000 and 2002, respectively, and the Ph.D. degree in Control and Systems from City University of Hong Kong, Hong Kong SAR in 2006. She joined Department of Automation, Shanghai Jiao Tong University in 2008 as an Associate Professor. She is now a Full Professor. 
Before that, she was a postdoctoral research associate in University of Manchester, U.K. (2006-2008). 
She was a Visiting Professor in University of Waterloo, Canada (2013-2014). 
Prof. Chen's research interests include industrial wireless networks, computational intelligence and situation awareness, Internet of Vehicles. 

Prof. Chen has authored 3 research monographs and over 100 referred international journal papers. She is the inventor of more than 20 patents. 
She received the prestigious ”IEEE Transactions on Fuzzy Systems Outstanding Paper Award” in 2008, and Best Paper Award of WCSP17 and YAC18. 
She won the Second Prize of National Natural Science Award from the State Council of China in 2018, First Prize of Natural Science Award from The Ministry of Education of China in 2006 and 2016, respectively, and First Prize of Technological Invention of Shanghai Municipal, China in 2017. She was honored Changjiang Young Scholar in 2015 and Excellent Young Researcher by NSF of China in 2016. 
Prof. Chen has been actively involved in various professional services. 
She serves as Associate Editor of IEEE Transactions on Vehicular Technology, Peerto-peer Networking and Applications (Springer). 
She also served as Guest Editor of IEEE Transactions on Vehicular Technology, TPC Chair of ISAS19, Symposium TPC Co-chair of IEEE Globecom 2016 and VTC2016-fall, Workshop Co-chair of WiOpt18.
\end{IEEEbiographynophoto}

\begin{IEEEbiographynophoto}{Xinping Guan}
(F'18) received the B.S. degree in Mathematics from Harbin Normal University, Harbin, China, in 1986, and the Ph.D. degree in Control Science and Engineering from Harbin Institute of Technology, Harbin, China, in 1999. He is currently a Chair Professor with Shanghai Jiao Tong University, Shanghai, China, where he is the Dean of School of Electronic, Information and Electrical Engineering, and the Director of the Key Laboratory of Systems Control and Information Processing, Ministry of Education of China. 
Before that, he was the Professor and Dean of Electrical Engineering, Yanshan University, Qinhuangdao, China. 

Dr. Guan's current research interests include industrial cyber-physical systems, wireless networking and applications in smart factory, and underwater networks. He has authored and/or coauthored 5 research monographs, more than 270 papers in IEEE Transactions and other peer-reviewed journals, and numerous conference papers. 
As a Principal Investigator, he has finished/been working on many national key projects. He is the leader of the prestigious Innovative Research Team of the National Natural Science Foundation of China (NSFC). 
Dr. Guan is an Executive Committee Member of Chinese Automation Association Council and the Chinese Artificial Intelligence Association Council. 
Dr. Guan received the First Prize of Natural Science Award from the Ministry of Education of China in both 2006 and 2016, and the Second Prize of the National Natural Science Award of China in both 2008 and 2018. 
He was a recipient of IEEE Transactions on Fuzzy Systems Outstanding Paper Award in 2008. He is a National Outstanding Youth honored by NSF of China, Changjiang Scholar by the Ministry of Education of China and State-level Scholar of New Century Bai Qianwan Talent Program of China.
\end{IEEEbiographynophoto}

\end{document}